\newcommand*\bigcdot{\mathpalette\bigcdot@{.5}}
\newcommand*\bigcdot@[2]{\mathbin{\vcenter{\hbox{\scalebox{#2}{$\m@th#1\odot$}}}}}
\newtheorem{theorem}{Theorem}
\newtheorem{claim}[theorem]{Claim}
\newtheorem*{theorem*}{Theorem}
\newtheorem*{remark}{Remark}
\newtheorem*{claim*}{Claim}
\newtheorem*{remark*}{Remark}
\newtheorem*{lemma*}{Lemma}
\newtheorem{lemma}[theorem]{Lemma}
\newtheorem{fact}[theorem]{Fact}
\newtheorem{prop}[theorem]{Proposition}
\renewcommand{\mod}{\mathop {\mathsf{mod}}}
\renewcommand{\P}{\mathop {\mathbb P}}
\newcommand{\eps}{\epsilon}
\newcommand{\R}{\mathbb{R}}	
\newcommand{\C}{\mathbb{C}}		 
\newcommand{\F}{\mathbb{F}}
\newcommand{\E}{\mathop \mathbb{E}}
\newcommand{\Expect}[1]{\mathop{\mathbb{E}}\left
	[ #1 \right ]}
\newcommand{\Ex}[2]{\mathop{\mathbb{E}}\displaylimits_{#1}\Big
	[ #2 \Big ]}
\def\rem#1{{\marginpar{\raggedright\scriptsize #1}}}
\title[The Fourier growth of bounded functions]{Tight bounds on 
the Fourier growth of bounded functions on the hypercube}
\begin{document}

\author[Iyer et al.]{Siddharth Iyer}
\address{School of Computer Science, University of Washington}
\email{sviyer97@gmail.com}
\author[]{Anup Rao}
\address{School of Computer Science, University of Washington}
\email{anuprao@cs.washington.edu}
\author[]{Victor Reis}
\address{School of Computer Science, University of Washington}
\email{voreis@cs.washington.edu}
\author[]{Thomas Rothvoss}
\address{School of Computer Science, University of Washington}
\email{rothvoss@uw.edu}
\author[]{Amir Yehudayoff}
\address{Department of Mathematics, Technion-IIT}
\email{amir.yehudayoff@gmail.com}

\begin{abstract}
We give tight bounds on the degree $\ell$  homogenous parts $f_\ell$ of a bounded function $f$ on the cube.  
We show that if $f: \{\pm 1\}^n \rightarrow [-1,1]$ has degree $d$, then $\| f_\ell \|_\infty$ is bounded by $d^\ell/\ell!$, and $\| \hat{f}_\ell \|_1$ is bounded by $d^\ell e^{\binom{\ell+1}{2}} n^{\frac{\ell-1}{2}}$.
We describe applications to pseudorandomness
and learning theory. We use similar methods to generalize the classical Pisier's inequality from convex analysis. Our analysis  involves properties
of real-rooted polynomials that may be useful elsewhere.
\end{abstract}

\maketitle

\section{Introduction}
The goal of complexity theory is to understand the space of functions that are efficiently computable. Every function  $f:\{\pm 1\}^n \rightarrow \R$ corresponds to a multilinear polynomial in $n$ variables, and under many models of computation, efficiently computable functions correspond to bounded  polynomials of low degree. This motivates an investigation of the characteristics of such functions.
Our main results are tight bounds on the magnitudes of coefficients.

%This has implications, e.g., to pseudorandomness and learning theory. 

A set $S \subseteq [n]$ corresponds to the \emph{monomial} or \emph{character}
%$\chi_S : \{\pm 1\}^n \to \R$ is
$\chi_S(x) :=  \prod_{j \in S} x_j$.
Every $f: \{\pm 1\}^n \to \R$ can be uniquely expressed as 
$$f(x) = \sum_{S \subseteq [n]} \hat{f}(S) \cdot \chi_S(x),$$
where $\hat{f}(S) \in \R$ are the \emph{Fourier coefficients} of $f$.

\makeatletter
\newcommand{\thickhline}{%
	\noalign {\ifnum 0=`}\fi \hrule height 2pt
	\futurelet \reserved@a \@xhline
}
\newcolumntype{"}{@{\hskip\tabcolsep\vrule width 2pt\hskip\tabcolsep}}
\makeatother
\setlength{\arrayrulewidth}{.3mm}
\setlength{\tabcolsep}{18pt}
\renewcommand{\arraystretch}{1.3}
\begin{figure}[t]
	\begin{center}
		\begin{tabular}{p{7cm} " p{2.8cm} " p{1.3cm} } 
			Class of functions $f:\{\pm 1\}^n \rightarrow [-1,1]$ & $\|\hat{f}_\ell \|_1 \leq $ & Ref. \\ [0.5ex] 
			\thickhline
			CNFs of width $w$ & $w^{O(\ell)}$ & \cite{Mansour92} \\ 
			\hline 
			Width $w$ regular oblivious read-once branching programs& $(2w^2)^\ell$ & \cite{ReingoldSV13} \\ 
			\hline 
			Width $w$ oblivious read-once branching programs & $(O(\log n))^{w\ell}$ & \cite{ChattopadhyayHR18} \\ 
			\hline 
			Boolean functions of maximum sensitivity $s$ & $s^{O(\ell)}$ & \cite{GopalanSW16} \\	
			\hline
			$\F_2$ polynomials of degree $d$ & $\ell^\ell \cdot 2^{3d\ell}$ & \cite{CHH19} \\	
			\hline
			Decision trees of depth $d$ & $(O(\sqrt{d \log n}))^\ell$ & \cite{Tal20,SherstovSW20} \\	
			\hline
			Parity decision trees of depth $d$ & $d^{\ell/2} \cdot O(\ell \log n)^\ell$ & \cite{GirishTW21}
		\end{tabular}
	\end{center}
	\caption{Known bounds on the Fourier growth of various classes of functions} \label{fig1}
\end{figure}

Bounds on the Fourier coefficients play a key role in computer science
(see the textbook~\cite{o2014analysis} and also~\cite{mansour1994learning,BackursB14,FHKL2016,CHH19}
and references within). Typical results bound the growth of the $\ell_1$ norm of the Fourier coefficients in terms of their degree. 
The $\ell$-th homogenous part of $f$ is
$$f_\ell(x) :=\sum_{S \subseteq [n]: |S|=\ell} \hat{f}(S) \cdot \chi_S(x).$$ 
The main objective is to control the two norms $$ \|\hat{f}_\ell \|_1 : = \sum_{S \subseteq [n]: |S| = \ell} |\hat{f}(S) |,$$
and
$$ \|f_\ell \|_\infty := \max_{x \in \{\pm 1\}^n} |f_\ell(x)|.$$
By the triangle inequality, we must have $\|f_\ell \|_\infty \leq \|\hat{f}_\ell \|_1$. For general functions, the first quantity can be substantially smaller than the second. 
For symmetric functions $f$, the two quantities are the same: $\| \hat{f}_\ell \|_1 = |f_\ell(\mathrm{1}^n)| \leq \| f_\ell \|_\infty$.

Several works have proved non-trivial bounds on $\|\hat{f}_\ell \|_1$ for functions that are efficiently computable. Figure~\ref{fig1} lists some known results in this direction. 
An important motivation for bounding these norms is that a class of functions with small Fourier growth can be efficiently \emph{learned}~\cite{mansour1994learning}, and admits efficient \emph{pseudorandom generators}~\cite{ChattopadhyayGLLS20}.  A pseudorandom generator for a class of functions is a function that generates a distribution that uses a small random seed to generate a distribution that is supported on a small set, yet is indistinguishable from the uniform distribution to functions from the class. Chattopadhyay, Hatami, Hosseini and Lovett~\cite{CHH19} showed how to construct pseudorandom generators for any class of functions satisfying $\| \hat{f}_\ell \|_1 \leq t^\ell$, using $t^2 \cdot  \mathsf{polylog}(n)$ bits of seed. Similarly, Chattopadhyay, Gaitonde, Lee, Lovett and Shetty \cite{ChattopadhyayGLLS20} showed that bounds on $\|f_\ell \|_\infty$ also lead to efficient pseudorandom generators. Let $\mathcal{F}$ be a class of functions that is closed under restrictions
(i.e., setting a variable to $\pm 1$ keeps the function in $\mathcal{F}$).
Suppose there are parameters $k>2$ and $t>0$ such that every function $f \in \mathcal{F}$ satisfies $\|f_k \|_\infty \leq t^k$, then there is a pseudorandom generator of seed length $k \cdot  t^{2+4/(k-2)} \cdot  \mathsf{polylog}(n)$ for the class of functions $\mathcal{F}$.

Given these applications, it is interesting to ask for the most general bounds. What can we say about $\| \hat{f}_\ell \|_1$ and $\|f_\ell \|_\infty$ if $f: \{\pm 1 \}^n \rightarrow [-1,1]$ is an arbitrary function of degree $d$? Backurs and Bavarian \cite{BackursB14} and later Filmus, Hatami, Keller and Lifshitz \cite{FHKL2016} studied bounds on the \emph{influences} of such functions. With regards to the questions we study here, the techniques of \cite{FHKL2016} imply that $\| f_1\|_\infty = \|\hat{f}_1\|_1 \leq d$. In this work, we give tight bounds on $\|f_\ell \|_\infty$ and $\| \hat{f}_\ell \|_1$ for every $\ell$. 

Our methods are intimately connected to proofs of a classical result in convex analysis called \emph{Pisier's inequality}~\cite{pisier1979espaces,pisier1980theoreme}. Let $f:\{\pm 1\}^n \rightarrow \R^m$ be a vector valued  function. The $m$ coordinates of $f$ can be expressed as polynomials, so as before we have
$$f(x) = \sum_{S \subseteq [n]} \hat{f}(S) \cdot \chi_S(x),$$
where now $\hat{f}(S) \in \R^m$ is a vector. We define $f_\ell$ by projecting $f$ to its degree $\ell$ part,  just as we did earlier. 
 Pisier's inequality   says that every norm $\| \cdot \|$ on $\R^m$ must satisfy
$$ \Expect{\|f_1(X)\|^2}^{1/2} \leq O(\log(m+1)) \cdot \Expect{\|f(X)\|^2}^{1/2},$$ 
where $X \sim \{\pm1\}^n$ is uniformly distributed.

%The \emph{Rademacher projection} is a method to linearize
%functions from the discrete cube $\{\pm 1\}^n$ to the Euclidean space $\R^m$.
%It is fundamental in the study of normed spaces~\cite{maurey1976series,artstein2015asymptotic}.
%Pisier's inequality controls the operator norm of the Rademacher
%projection~\cite{pisier1979espaces,pisier1980theoreme}.

% This inequality has several important geometric applications;
%for example, in the proofs of the $M M^*$ estimate~\cite{figiel1979projections,pisier1980theoreme} and of Milman's QS-theorem~\cite{milman1985almost}.
This inequality has important applications in geometry. 
Most strikingly, combined with a result of Figiel and Tomczak-Jaegermann~\cite{figiel1979projections} it implies the \emph{$MM^*$-estimate},
%This 
which says that in an average sense, symmetric convex bodies behave much more like ellipsoids than one could derive from John's theorem~\cite{john1948extremum}.
The $MM^*$-estimate is a central piece in the proofs of
Milman's QS-theorem~\cite{milman1985almost,milman1986inegalite,milman1988isomorphic} and the construction of \emph{$M$-ellipsoids}~\cite{milman1988isomorphic}, some of the most consequential results in convex geometry.

In our work, we generalize Pisier's inequality
to higher degrees, and make the proof more explicit
(see discussion in Section~\ref{sec:tech} below).

\subsection{Results}
Our results and proofs are intimately connected with the Chebsyshev polynomial $T_d(z)$. This is the unique polynomial of degree $d$ so that $T_d(\cos( \theta)) = \cos(d \theta)$. Denote by $C(d,\ell)$ the coefficient of $z^\ell$ in $T_d(z)$. Our first result is that these coefficients give upper bounds on $\|f_\ell\|_\infty$.

\begin{theorem} \label{thminftybound}
	If $f: \{\pm1\}^n \rightarrow [-1,1]$ has degree $d$, then
	\begin{align*}
	\| f_\ell\|_\infty \leq \begin{cases} |C(d,\ell)| & \text{if $d = \ell \mod 2$,}\\
	|C(d-1,\ell)| & \text{otherwise.}
	\end{cases}  
	\end{align*} 
\end{theorem}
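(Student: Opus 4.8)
The plan is to reduce the statement to a one-dimensional extremal problem about univariate polynomials, by averaging the multivariate bounded function along a line and exploiting the extremality of Chebyshev polynomials. Concretely, fix a point $x \in \{\pm 1\}^n$ at which $\|f_\ell\|_\infty$ is attained. For a scalar $t$, consider the univariate polynomial $p(t) := \E_{Y}\big[f(t \cdot x_1 Y_1, \dots, t \cdot x_n Y_n)\big]$ where $Y \sim \{\pm 1\}^n$ is uniform — equivalently, substitute $x_j \mapsto t x_j$ into $f$ and take the expectation over a random sign flip, which kills every monomial and leaves $p(t) = \sum_{\ell'} f_{\ell'}(x)\, t^{\ell'} \cdot (\text{something})$. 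Actually the cleanest version: set $g(t) := \E_{Y \sim \{\pm1\}^n}\big[ f(t Y) \big]$ evaluated after the rescaling by the fixed signs $x$, so that $g$ is a univariate polynomial of degree $\le d$ with $g(t) = \sum_{\ell'=0}^{d} c_{\ell'} t^{\ell'}$ where $c_\ell = f_\ell(x)$ (the sign pattern $x$ is absorbed because $\chi_S(x)^2 = 1$). Since $f$ is $[-1,1]$-valued on the cube, a symmetrization/averaging argument shows $|g(t)| \le 1$ for all $t \in [-1,1]$.

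**Next**, I would invoke the classical Markov-type extremal property of Chebyshev polynomials: among all real univariate polynomials of degree $\le d$ bounded by $1$ on $[-1,1]$, the coefficient of $t^\ell$ is maximized in absolute value by $T_d$ (when $d \equiv \ell \bmod 2$) and by $T_{d-1}$ (otherwise) — this is exactly why the two cases and $C(d,\ell)$ versus $C(d-1,\ell)$ appear in the statement. This is a known result (essentially the Markov brothers / V.A. Markov coefficient bounds); the parity split reflects that $T_d$ has only monomials of degree $\equiv d \bmod 2$, so if $d \not\equiv \ell \bmod 2$ one must drop to degree $d-1$ to get a nonzero $t^\ell$ coefficient with maximal size. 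Applying this to $g$ with $c_\ell = f_\ell(x)$ yields $|f_\ell(x)| \le |C(d,\ell)|$ or $|C(d-1,\ell)|$ accordingly, and since $x$ was the maximizer this gives the theorem.

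**The main obstacle** I anticipate is justifying that $|g(t)| \le 1$ on all of $[-1,1]$, not just at $t = \pm 1$. The subtlety: $f$ is only constrained on the discrete cube, and plugging in $tY$ for $t \in (-1,1)$ moves off the cube, so one needs the right probabilistic interpretation. The fix is to view $tY_j$ as a random variable taking value $Y_j$ with probability $\frac{1+t}{2}$ after a suitable coupling — more precisely, use the standard fact that for $t \in [-1,1]$ one can write $t Y_j$ in distribution as $Z_j$ where $Z_j \in \{\pm 1\}$ with $\E[Z_j] = t Y_j$ and the $Z_j$ independent, so that by multilinearity $\E_Z[f(Z)] = f(tY_1,\dots,tY_n)$; since $f$ is $[-1,1]$-valued, $\E_Z[f(Z)] \in [-1,1]$. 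Averaging over $Y$ (or just fixing $Y = x$) then gives $|g(t)| \le 1$. One must also handle the bookkeeping that shows the degree-$\ell$ coefficient of $g$ is precisely $f_\ell(x)$: expanding $f(tY) = \sum_S \hat f(S) t^{|S|} \chi_S(Y)$ and taking $Y = x$ fixed gives $g_x(t) := \sum_S \hat f(S) \chi_S(x) t^{|S|}$, whose $t^\ell$ coefficient is $\sum_{|S|=\ell}\hat f(S)\chi_S(x) = f_\ell(x)$, as desired.

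**A secondary point** worth stating carefully is the exact extremal lemma for Chebyshev coefficients, since the literature usually phrases Markov's inequality in terms of derivatives or leading coefficients rather than intermediate coefficients; I would either cite the appropriate classical reference or include a short self-contained argument (the bound on an arbitrary intermediate coefficient follows from applying a suitable linear functional, expressed via Chebyshev interpolation nodes, and optimizing — the maximizer is forced to be $\pm T_d$ or $\pm T_{d-1}$ by an equioscillation argument). This lemma is presumably the "properties of real-rooted polynomials" flagged in the abstract, so the authors likely develop it in its own section; in my proof plan it is the one genuinely nontrivial ingredient, everything else being the symmetrization reduction above.
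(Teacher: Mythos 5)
Your proposal is correct, but it is a genuinely different argument from the one in the paper. Your reduction works: for the maximizing $x$, the univariate polynomial $g_x(t) = f(tx_1,\dots,tx_n) = \sum_S \hat f(S)\chi_S(x)t^{|S|}$ has degree at most $d$, its $t^\ell$ coefficient is exactly $f_\ell(x)$, and $|g_x(t)|\le 1$ on $[-1,1]$ by the biased-coin/multilinearity argument you describe (independent $Z_j\in\{\pm1\}$ with $\E[Z_j]=tx_j$ gives $\E[f(Z)]=g_x(t)$). At that point the theorem follows from V.~A.~Markov's classical result that among polynomials of degree at most $d$ bounded by $1$ on $[-1,1]$, the $t^\ell$ coefficient is maximized in absolute value by $T_d$ when $d=\ell \mod 2$ and by $T_{d-1}$ otherwise; your caveat is the right one to flag, since this is the intermediate-coefficient theorem (V.~A.~Markov, 1892; see e.g.\ Rivlin or Natanson), not the more familiar derivative form, and it does need a correct citation or a self-contained equioscillation proof. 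The paper takes a different, constructive route: it builds an explicit filter $\phi$ supported on $2d$ equally spaced angles (Theorem~\ref{thm:filter}), whose proof uses the Chebyshev structure together with a new sign lemma for truncations of real-rooted polynomials (Theorem~\ref{thm:polyalternate}), and lifts it to the cube via the proxy $P(x)=\E_\theta[\phi(\theta)\prod_{j}(1+x_j\cos\theta)]$, so that $f_\ell=f*P$ and $\|f_\ell\|_\infty\le \E[|P|]\le \E[|\phi|]=|C(d,\ell)|$. In effect the filter is an explicit dual certificate for the very univariate extremal theorem you cite, so the two proofs are dual to each other; what your shorter route buys is brevity (modulo the classical citation), while what the paper's construction buys is reusability --- the same filter, with the modified product proxy, drives the higher-level Pisier inequality (Theorem~\ref{thm:P}), which your segment-restriction reduction does not give.
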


To understand the theorem better, 
recall the known formula~\cite{cheb}:
\begin{align} \label{chebeq}
	C(d,\ell) = \begin{cases} (-1)^{(d-\ell)/2}\cdot 2^\ell \cdot \frac{d}{d+\ell}\cdot \binom{\frac{d+\ell}{2}}{\ell} & \text{if $d=\ell \mod 2$,}\\
	0 & \text{otherwise.}
	\end{cases}
\end{align}
We can use the arithmetic-mean-geometric-mean inequality and (\ref{chebeq}) to show 
\begin{align} \label{Cbound}
|C(d,\ell)| = 
2^\ell \cdot \frac{d}{d+\ell}\cdot 
\frac{1}{2^\ell \ell!}
\prod_{k = 0}^{\ell-1} (d+\ell-2k)
\leq
\frac{d^\ell} {\ell!}.
\end{align}
In particular, the theorem states that $\| f_\ell \|_\infty \leq \frac{d^\ell}{\ell!}$. The following proposition shows that the bound cannot be significantly improved when $n \gg d$:
\begin{prop}
\label{prop:const1}
For every $n,d,\ell$ such that $d= \ell \mod 2$, the bounded function $f(x) = T_d((x_1 + \dotsb + x_n)/n)$ satisfies $\| f_\ell\|_\infty \geq |C(d,\ell)| - 2 e^d (d+1)! / n$.
\end{prop}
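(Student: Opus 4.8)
\textbf{Proof plan for Proposition~\ref{prop:const1}.}
The plan is to show that the homogeneous degree-$\ell$ part of $f(x)=T_d((x_1+\dots+x_n)/n)$ already has a coefficient close to $C(d,\ell)$ at a single point, and then evaluate $f_\ell$ at $x=1^n$. Write $s(x) = (x_1+\dots+x_n)/n$, so $f(x) = T_d(s(x)) = \sum_{k=0}^d C(d,k) s(x)^k$. Expanding $s(x)^k = n^{-k}\left(\sum_{j} x_j\right)^k$ using the multinomial theorem and then multilinearizing (i.e.\ replacing $x_j^2$ by $1$), each power $s(x)^k$ contributes to Fourier levels $k, k-2, k-4, \dots$. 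First I would isolate, for the target level $\ell$, the contribution to $\hat f(\{1,\dots,\ell\})$ (the coefficient of $\chi_{[\ell]}$). The ``main term'' comes from $k=\ell$: in $n^{-\ell}(\sum_j x_j)^\ell$ the coefficient of $x_1 x_2 \cdots x_\ell$ is $\ell!\, n^{-\ell}$, contributing $C(d,\ell)\cdot \ell!\, n^{-\ell}$ to $\hat f([\ell])$.

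Next I would bound the ``error terms'': contributions to $\hat f([\ell])$ coming from $k = \ell+2, \ell+4, \dots, d$ (note $k\equiv \ell \bmod 2$ since $C(d,k)=0$ otherwise when $d\equiv\ell\bmod 2$). For such $k$, the number of monomials in $(\sum_j x_j)^k$ that multilinearize to $\pm\chi_{[\ell]}$ is a sum of multinomial coefficients; crudely, it is at most $n^{(k-\ell)/2} \cdot k!$ (choose which $k-\ell$ of the $k$ positions pair off, assign their values, then the remaining $\ell$ go to $1,\dots,\ell$; the $n^{(k-\ell)/2}$ counts the free choice of paired indices). Multiplying by $n^{-k}$ gives a bound of $k!\, n^{-\ell}\cdot n^{-(k-\ell)/2}\le k!\, n^{-\ell}/n$ for $k\ge \ell+2$. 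Summing over $k$ and using the crude bound $|C(d,k)|\le d^k/k!\le d^d$ from \eqref{Cbound}, the total error is at most $\sum_{k=\ell+2}^d d^d \cdot k!\, n^{-\ell}/n \le d\cdot d^d\cdot d!\, n^{-\ell}/n$. Putting this together,
\begin{align*}
|\hat f([\ell])| \;\ge\; |C(d,\ell)|\cdot \ell!\, n^{-\ell} \;-\; d^{d+1} d!\, n^{-\ell}/n.
\end{align*}
Since $f_\ell(1^n) = \sum_{|S|=\ell}\hat f(S)$ and there are $\binom{n}{\ell}$ such sets each equal to $\hat f([\ell])$ by symmetry, we get $f_\ell(1^n) = \binom{n}{\ell}\hat f([\ell])$, and using $\binom{n}{\ell}\ell!\, n^{-\ell} = \prod_{i=0}^{\ell-1}(1-i/n) \ge 1 - \binom{\ell}{2}/n \ge 1 - d^2/n$ together with $|C(d,\ell)|\le d^d$ to absorb the cross term, one obtains $\|f_\ell\|_\infty \ge |f_\ell(1^n)| \ge |C(d,\ell)| - 2e^d(d+1)!/n$ after bounding $d^{d+1}d! + d^d\cdot d^2 \le 2 e^d (d+1)!\,$ (here $d^d\le e^d d!$).

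The main obstacle I anticipate is getting the combinatorial error bound clean enough that the final constant is exactly $2e^d(d+1)!$ rather than something messier: the count of multinomial terms collapsing to $\chi_{[\ell]}$ after multilinearization needs a careful but elementary argument (track, for each variable index, the parity of its exponent), and one must be slightly careful that cancellations among the error terms are not needed — an absolute-value bound suffices. A secondary, purely bookkeeping point is confirming $f$ is genuinely $[-1,1]$-valued: this is immediate since $s(x)\in[-1,1]$ on the cube and $T_d$ maps $[-1,1]$ to $[-1,1]$. Everything else is substitution into \eqref{Cbound} and elementary inequalities like $d^d \le e^d d!$ and $\prod_{i<\ell}(1-i/n)\ge 1-\binom{\ell}{2}/n$.
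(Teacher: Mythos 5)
Your route is essentially the paper's: the paper proves the proposition via Lemma~\ref{lemma:polyproject} and Claim~\ref{claim:hbound}, which is exactly your decomposition $T_d(s(x))=\sum_k C(d,k)s(x)^k$, the main term $C(d,\ell)\,\ell!\,n^{-\ell}$ at level $\ell$ from $k=\ell$, the counting bound $k!\,n^{(k-\ell)/2}$ for the multilinearized contributions from $k>\ell$, and evaluation at $1^n$ using symmetry together with $\binom{n}{\ell}\ell!\,n^{-\ell}\ge 1-\binom{\ell}{2}/n$. So the structure is sound and matches the paper.

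However, your final constant accounting contains a step that is simply false: the asserted inequality $d^{d+1}d!+d^d\cdot d^2\le 2e^d(d+1)!$ fails for all $d\ge 4$ (for $d=4$ the left side is $4^5\cdot 24+4^4\cdot 16=28672$ while $2e^4\cdot 5!<14000$), and the gap grows like $d!$. The loss comes from bounding $|C(d,k)|$ by $d^d$ while simultaneously keeping the $k!$ from the monomial count; these must be played against each other. The fix is to keep the factorial from \eqref{Cbound}: using $|C(d,k)|\le d^k/k!$, each $k\ge\ell+2$ contributes at most $\frac{d^k}{k!}\cdot k!\,n^{(k-\ell)/2}\cdot n^{-k}\le d^d\,n^{-\ell-1}$ to $\hat f([\ell])$, so after multiplying by $\binom{n}{\ell}\le n^\ell/\ell!$ the total error is at most $d\cdot d^{d}/(\ell!\,n)\le d\,e^d d!/n\le e^d(d+1)!/n$ (using $d^d\le e^d d!$), while the cross term is $|C(d,\ell)|\binom{\ell}{2}/n\le \frac{d^\ell\ell^2}{2\ell!\,n}\le d^d/n\le e^d(d+1)!/n$; summing gives the claimed $2e^d(d+1)!/n$. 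This is precisely how the paper's Lemma~\ref{lemma:polyproject} is organized: the error there is $\frac{1}{n}(d+1)!\max_{k\ge\ell}|c_k|$ with $\max_k|C(d,k)|\le d^d/d!\le e^d$, so the factorial from the count is absorbed into $(d+1)!$ and never multiplied against $d^d$.
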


We also provide general bounds on the larger $\|\hat{f_{\ell}}\|_1$:
\begin{theorem}\label{thml1bound}
If $f: \{\pm1\}^n \rightarrow [-1,1]$ has degree $d$, then for $\ell \geq 1$,
	$\|\hat{f_{\ell}}\|_1 \leq n^{\frac{\ell-1}{2}} \cdot d^\ell \cdot e^{\binom{\ell+1}{2}}.$
\end{theorem}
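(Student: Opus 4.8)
The plan is to prove Theorem~\ref{thml1bound} by induction on $\ell$, descending one Fourier level at a time via discrete derivatives and feeding in the $L_\infty$ bound of Theorem~\ref{thminftybound} (equivalently the Chebyshev estimate (\ref{Cbound})) together with the elementary total‑influence bound. For $i\in[n]$ the discrete derivative $D_if=\tfrac12\big(f^{x_i=1}-f^{x_i=-1}\big)$ has degree at most $d-1$, takes values in $[-1,1]$, and satisfies $\widehat{D_if}(T)=\hat f(T\cup\{i\})$ whenever $i\notin T$; summing these identities over $i$ gives the exact recursion
\begin{equation*}
\ell\cdot\|\hat f_\ell\|_1 \;=\; \sum_{i=1}^n \big\|\widehat{(D_if)_{\ell-1}}\big\|_1 .
\end{equation*}
The base case $\ell=1$ is the bound $\|\hat f_1\|_1\le d$ (so certainly $\le ed$), which I would obtain in the style of~\cite{FHKL2016}: after flipping signs, $\|\hat f_1\|_1=\ex\!\big[f(X)\sum_i X_i\big]$, and symmetrizing $f$ over coordinate permutations turns this into the expectation, against a shifted binomial, of $s\cdot p(s)$ for a univariate polynomial $p$ of degree $\le d$ bounded by $1$ on the Hamming‑weight grid; a Markov/Bernstein‑type estimate for such a polynomial (this is where real‑rooted, Chebyshev‑extremal polynomials enter) then yields the bound. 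This base case is what produces a genuinely $n$‑independent estimate, i.e. the saving of one factor of $\sqrt n$ over the trivial $\sqrt{\binom n\ell}\,\|f_\ell\|_2$.

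For the inductive step I would first arrange the constants: assuming the claimed bound at level $\ell-1$ for every $[-1,1]$‑valued degree‑$(d-1)$ function of $n$ variables, each summand above is at most $n^{(\ell-2)/2}(d-1)^{\ell-1}e^{\binom{\ell}{2}}$, and one checks that the factor $d^\ell\,e^{\binom{\ell+1}{2}}$ leaves room for the step $e^{\binom{\ell}{2}}\mapsto e^{\binom{\ell+1}{2}}$, i.e.\ a multiplicative slack of $e^\ell$ together with the passage from $(d-1)^{\ell-1}$ to $d^\ell$. The quantitative input one wants to exploit is that the $D_if$ are small \emph{on average}, not merely bounded: since $f$ has degree $d$ and $|f|\le 1$,
\begin{equation*}
\sum_{i=1}^n \|D_if\|_2^2 \;=\; \sum_{S}|S|\,\hat f(S)^2 \;\le\; d\,\ex[f^2]\;\le\; d ,
\end{equation*}
so Cauchy–Schwarz over $i$ should let one trade one of the $\sqrt n$ factors in $\sum_i\|\widehat{(D_if)_{\ell-1}}\|_1$ for a $\sqrt d$ — precisely the gap between the naive exponent $n^{\ell/2}$ and the target $n^{(\ell-1)/2}$.

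The main obstacle is that this trade is \emph{not} available from the level‑$(\ell-1)$ hypothesis as stated: to replace a $\sqrt n$ by a $\sqrt d$ one would need $\|\widehat{(D_if)_{\ell-1}}\|_1$ bounded by a constant times $\|D_if\|_2$, and no bound of the form $\|\hat g_k\|_1\le C\|g\|_2$ with $C$ independent of $n$ can hold even for $k=1$ and $g$ bounded of degree $1$ (take $g=\tfrac1m(x_1+\cdots+x_m)$, where $\|\hat g_1\|_1=1$ but $\|g\|_2=m^{-1/2}$). In other words, the slices $\widehat{(D_if)_{\ell-1}}$ are all faces of the single tensor $\hat f_\ell$, so ``bound each derivative and sum'' is circular and cannot improve on $n^{\ell/2}$; the level‑$(\ell-1)$ bound must instead be invoked in a form that couples the derivatives. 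The route I would take is to prove the level‑$\ell$ estimate more directly, generalizing the $\ell=1$ symmetrization argument (and, in spirit, our generalization of Pisier's inequality): after an averaging/restriction reduction, express a suitable portion of $\|\hat f_\ell\|_1$ as an $\ell$‑th order finite difference of a bounded univariate (or low‑dimensional, real‑rooted) polynomial of degree $\le d$, and bound it by iterating a Bernstein‑type inequality $\ell$ times, each iteration costing a factor $\approx d$ while the ``number of lattice steps'' contributes $n^{1/2}$ only once — from a single global difference direction — with the accumulated slack absorbed into $e^{\binom{\ell+1}{2}}$. Identifying the correct multivariate analogue of ``difference along the all‑ones direction'' so that exactly one power of $\sqrt n$ is spent in total, rather than one per level, is the crux; the polynomial estimates and the bookkeeping of constants are then routine.
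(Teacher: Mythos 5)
You have correctly diagnosed that your own plan fails: recursing on the $n$ discrete derivatives $D_i f$ and applying the level-$(\ell-1)$ bound to each of them costs a full factor of $n$ per level and lands at $n^{\ell/2}$, and the hoped-for trade via $\sum_i\|D_if\|_2^2\le d$ is unavailable for exactly the reason you give. But your fallback (expressing part of $\|\hat f_\ell\|_1$ as an $\ell$-th order finite difference of a bounded univariate polynomial and iterating a Bernstein-type estimate) is explicitly left with its crux unidentified, so the proposal does not contain a proof. Moreover, the heuristic guiding that fallback --- that only one power of $\sqrt n$ should be spent in total rather than one per level --- is not what the target bound requires: $n^{(\ell-1)/2}$ is precisely one factor of $\sqrt n$ per induction step ($\ell-1$ steps down to level $1$), so the real task is not to avoid a per-level cost but to reduce it from $n$ to $\sqrt n$.

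The paper's proof achieves this with two moves missing from your sketch. First, the induction is run not on $f$ (degree $d$) but on a random restriction $g$ of the single homogeneous part $f_\ell$: each variable is placed in a set $Q$ independently with probability $1/\ell$ and the variables in $Q$ are set to uniform random signs. Since $g/\|f_\ell\|_\infty$ is bounded and of degree at most $\ell$, the inductive hypothesis is invoked with degree parameter $\ell$ rather than $d-1$, and the factor $d^\ell$ enters exactly once, at the top, via $\|f_\ell\|_\infty\le d^\ell/\ell!$ from Theorem~\ref{thminftybound}. Second, the per-level cost of only $\sqrt n$ comes from Khintchine's inequality: for each $U$ with $|U|=\ell-1$, on the event $Q\cap U=\emptyset$ the coefficient $\hat g(U)$ is a random signed sum of the coefficients $\hat f_\ell(U\cup\{j\})$ over $j\in Q$, so $\mathbb{E}_Z|\hat g(U)|$ is at least a constant times the $\ell_2$ norm of these coefficients, and Cauchy--Schwarz together with $\mathbb{E}[|Q|]\approx n/\ell$ converts this into $\tfrac{1}{e\sqrt{2n\ell}}$ times their $\ell_1$ norm; summing over $U$ gives $\|\hat f_\ell\|_1\le e\sqrt{2n/\ell}\cdot\mathbb{E}\|\hat g_{\ell-1}\|_1$ (Lemma~\ref{lem:fvsg}). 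It is this $\ell_2$ aggregation over the $n$ directions --- rather than a triangle inequality over $n$ separate derivatives --- that replaces your factor of $n$ by $\sqrt n$. (Also, the base case needs no separate symmetrization argument in the style of~\cite{FHKL2016}: $\|\hat f_1\|_1=\|f_1\|_\infty\le d$ follows directly from Theorem~\ref{thminftybound}.)
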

Once again, we give an example matching this bound when $d = \ell \ll n$:

\begin{prop}
\label{prop:const2}
There is a homogenous degree $d$ polynomial $f: \{\pm 1\}^n\rightarrow [-1,1]$ so that $$\|\hat{f_{d}}\|_1 = \frac{1}{2} \cdot \sqrt{\frac{1}{n} \cdot \binom{n}{d}}.$$
\end{prop}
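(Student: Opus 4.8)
The plan is to use the probabilistic method: take $f$ to be a random $\pm 1$-combination of all the degree-$d$ characters, rescaled by a \emph{fixed} constant chosen so that the resulting polynomial is automatically bounded by $1$. Write $N=\binom nd$ and fix the scalar $M=2\sqrt{nN}$. For a sign pattern $\epsilon=(\epsilon_S)_{|S|=d}\in\{\pm1\}^N$ put $g_\epsilon(x)=\sum_{|S|=d}\epsilon_S\chi_S(x)$. The candidate will be $f=g_{\epsilon^\ast}/M$ for a suitable $\epsilon^\ast$; note that whatever $\epsilon^\ast$ is, $f$ is homogeneous of degree $d$ with every level-$d$ coefficient equal to $\pm1/M$, so $\|\hat f_d\|_1=N/M=\tfrac12\sqrt{N/n}=\tfrac12\sqrt{\tfrac1n\binom nd}$ — exactly the claimed value. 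The only thing left to arrange is that $\|f\|_\infty\le1$, i.e.\ that some $\epsilon^\ast$ has $\|g_{\epsilon^\ast}\|_\infty\le M$.

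To find such an $\epsilon^\ast$ I would run the standard union bound. Fix $x\in\{\pm1\}^n$; since each $\chi_S(x)$ is a fixed sign in $\{\pm1\}$, for uniformly random $\epsilon$ the quantity $g_\epsilon(x)$ is a sum of $N$ independent Rademacher variables, so Hoeffding's inequality gives $\Pr[\,|g_\epsilon(x)|>M\,]\le 2e^{-M^2/(2N)}=2e^{-2n}$. Taking a union bound over the $2^n$ points of the cube, $\Pr[\,\|g_\epsilon\|_\infty>M\,]\le 2^{\,n+1}e^{-2n}<1$ for every $n\ge1$, because $(n+1)\ln2<2n$. Hence there is a fixed $\epsilon^\ast$ with $\|g_{\epsilon^\ast}\|_\infty\le M$, and then $f=g_{\epsilon^\ast}/M$ satisfies $\|f\|_\infty\le1$, so $f$ maps into $[-1,1]$ and the computation of $\|\hat f_d\|_1$ above completes the proof.

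I do not expect a genuine obstacle; the only delicate point is the bookkeeping that produces the exact constant $\tfrac12$. Two choices matter: (i) one must divide by the \emph{fixed} scalar $M=2\sqrt{nN}$ rather than by the a posteriori quantity $\|g_{\epsilon^\ast}\|_\infty$ (which could be strictly smaller), so that the conclusion is an equality and not merely a lower bound; and (ii) the factor $2$ in $M$ is exactly the slack the union bound can absorb, since $M^2/(2N)=2n$ exceeds $(n+1)\ln 2$. If one were content with an inequality $\|\hat f_d\|_1\ge c_d\sqrt{\tfrac1n\binom nd}$, any crude bound on $\ex\|g_\epsilon\|_\infty$ in place of the union bound would suffice.
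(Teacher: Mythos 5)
Your proposal is correct and follows essentially the same route as the paper: random signs on the level-$d$ characters, a concentration bound (the paper invokes Bernstein's inequality where you use Hoeffding, an immaterial difference here) giving failure probability small enough to union-bound over the $2^n$ points, and division by the fixed scalar $2\sqrt{n\binom{n}{d}}$ so that the Fourier $\ell_1$ norm is exactly $\tfrac{1}{2}\sqrt{\tfrac{1}{n}\binom{n}{d}}$.
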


Our methods allow to prove the following generalization of Pisier's inequality.

\begin{theorem}
\label{thm:P}
Let $\ell,m,n$ be positive integers and $\| \cdot \|$ be a norm on $\R^m$.
Let $X$ be uniformly distributed in $ \{\pm 1\}^n$.
Then for any function $f: \{\pm 1\}^n \rightarrow \R^m$, 
$$ \Expect{\|f_\ell(X)\|^2}^{1/2} \leq 
\Big(4+\frac{ 6 \log(m+1)}{\ell} \Big)^\ell \cdot \Expect{\|f(X)\|^2}^{1/2}.
$$ 
\end{theorem}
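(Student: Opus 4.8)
\textbf{Strategy.} The plan is to mimic the classical proof of Pisier's inequality via the $\ell_1$ trick: write the degree-$\ell$ projection operator as an average (or integral) of evaluation-type operators so that one can move from $\Expect{\|f_\ell(X)\|^2}^{1/2}$ to $\Expect{\|f(X)\|^2}^{1/2}$ coordinatewise, using the convexity of the norm to push expectations inside. Concretely, for a parameter $\rho \in [0,1]$ consider the noise operator $T_\rho$, which acts as multiplication by $\rho^{|S|}$ on $\chi_S$. We have $T_\rho f(X) = \Expect{f(Y)\mid X}$ where $Y$ is $\rho$-correlated with $X$, so $\|T_\rho f(X)\| \le \Expect{\|f(Y)\| \mid X}$ and hence $\Expect{\|T_\rho f(X)\|^2}^{1/2} \le \Expect{\|f(X)\|^2}^{1/2}$ for every $\rho$. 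The idea is to realize the degree-$\ell$ part $f_\ell$ as a suitable combination $\int c(\rho)\, T_\rho f \, d\rho$ (or a finite linear combination $\sum_j c_j T_{\rho_j} f$), so that applying the triangle inequality gives $\Expect{\|f_\ell(X)\|^2}^{1/2} \le \big(\sum_j |c_j|\big)\cdot \Expect{\|f(X)\|^2}^{1/2}$, reducing the problem to bounding the $\ell_1$ mass of the coefficients in an interpolation identity.

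\textbf{Key steps.} First I would set up the interpolation: the function $\rho \mapsto T_\rho f$ has, in each Fourier level $k$, the scalar coefficient $\rho^k$, so extracting the level-$\ell$ piece amounts to finding a (signed) measure $\mu$ on an appropriate domain with $\int \rho^k \, d\mu(\rho) = \mathbf{1}[k=\ell]$ for all $k \in \{0,1,\dots,n\}$, while keeping $\|\mu\|_1 = \int |d\mu|$ small. Since we only need the moment conditions up to degree $n$, a polynomial-based construction suffices: pick a polynomial $p(\rho) = \rho^\ell \cdot q(\rho)$ of degree at most $n$ with $p(1)=1$ and $p$ vanishing to the right order at $0$ so that its coefficient vector in the monomial basis has small $\ell_1$ norm — equivalently, apply $p$ to the algebra of noise operators. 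The $\ell_1$ norm of the coefficients of such a $p$ is governed by how well one can ``flatten'' $\rho^\ell$ on $[0,1]$ using a degree-$n$ multiplier, and here Chebyshev-type extremal polynomials (consistent with the paper's theme of real-rooted polynomials and $T_d$) give the right bound. The target factor $\big(4 + 6\log(m+1)/\ell\big)^\ell$ strongly suggests choosing the degree parameter of the interpolating polynomial proportional to $\ell \log(m+1)$ — this is exactly the regime where truncating a power series for $\rho \mapsto$ (a function concentrating on level $\ell$) incurs an error of order $1/(m+1)$, which can then be absorbed by a crude bound $\|f_\ell\| \le (m+1)^{O(1)}\Expect{\|f\|^2}^{1/2}$ of the John's-theorem flavor used in the classical argument. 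So the second main step is: split $f_\ell$ into a ``main term'' obtained from a low-degree operator polynomial (contributing the $\big(4+6\log(m+1)/\ell\big)^\ell$ factor through its coefficient $\ell_1$ norm) plus a ``tail'' which is tiny because the neglected Fourier levels are damped, and bound the tail using the dimension-dependent but lossy estimate.

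\textbf{Main obstacle.} The crux is the quantitative extremal estimate: one must exhibit a polynomial $p$ of degree $D = \Theta(\ell\log(m+1))$ with $p$ having a zero of order $\ell$ at $0$, with $p$ ``looking like'' the level-$\ell$ selector (i.e. $\widehat{p}$ concentrated appropriately), and with the sum of absolute values of its monomial coefficients at most $\big(4+6\log(m+1)/\ell\big)^\ell$. Getting the constant $4$ and the exact shape $\binom{\ell+1}{2}$-free form right — rather than an unspecified $C^\ell$ — is where the real-rooted-polynomial machinery developed elsewhere in the paper should come in, presumably via a clean closed form (a rescaled Chebyshev or Jacobi polynomial) whose coefficient $\ell_1$ norm can be computed by evaluating $|p|$ at a specific point or via a generating-function identity analogous to~\eqref{chebeq}. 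Everything else — the noise-operator contraction $\Expect{\|T_\rho f\|^2}^{1/2}\le\Expect{\|f\|^2}^{1/2}$, the triangle inequality, and the crude dimensional bound for the tail — is standard and mirrors the known proof of Pisier's inequality for $\ell=1$.
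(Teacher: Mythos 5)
Your skeleton is in fact the same as the paper's: the proxy $P$ used in Section~\ref{sec:Pisier} is exactly a signed mixture of product operators $\prod_j(1+\rho x_j)$ (i.e.\ of noise operators $T_\rho$) with $\rho=\cos(\theta)/2$ and weights $2^\ell\phi(\theta)$, the ``main term'' $f*P$ is bounded by the total variation of the mixture via Fact~\ref{fact:l1}, and the ``tail'' $f*(L_\ell-P)$, supported on Fourier levels above $d$, is bounded by a lossy $\sqrt{m}$ factor using John's theorem (Fact~\ref{fact:john}) together with Parseval. So the decomposition, the moment conditions $\int\rho^k\,d\mu=\mathbf{1}[k=\ell]$ up to some cutoff degree, and the John-type tail estimate are all the right ingredients.

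However, there are concrete gaps. First, the entire technical content of the theorem is the existence of a mixing measure satisfying the moment conditions up to degree $d$ with total variation as small as $2^\ell|C(d,\ell)|\le (2d)^\ell/\ell!$; this is precisely Theorem~\ref{thm:filter}, whose proof occupies Sections~\ref{filtersec} and~\ref{sec:trunc} (the polynomial $Q(z)=\prod_{j}(z-\cos(j\pi/d))$, its suffix $Q_{>\ell}$, and the sign theorem for real-rooted polynomials, Theorem~\ref{thm:polyalternate}). Your proposal only gestures at ``Chebyshev-type extremal polynomials''; nothing in it produces the simultaneous moment conditions and the sharp mass bound, and indeed the matching lower bound $\E[|\phi|]\ge|C(d,\ell)|$ shows the constant cannot come from a generic truncation argument. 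Second, your parameter choice $D=\Theta(\ell\log(m+1))$ is quantitatively wrong: the minimal mass at degree $D$ grows like $D^\ell/\ell!$, so this choice yields roughly $(2e\log(m+1))^\ell$, which exceeds the claimed $\big(4+6\log(m+1)/\ell\big)^\ell$ by a factor of order $\ell^\ell$ for small $\ell$ and is unbounded in $m$ when $\ell\gtrsim\log(m+1)$ (where the theorem gives an absolute constant to the power $\ell$). The correct choice, as in the paper, is $d\approx\max\{\ell,\tfrac12\log(m+1)\}$ with the parity of $\ell$. Third, the geometric damping of the neglected levels that makes the $\sqrt m$ tail affordable is not automatic: it comes from forcing the noise rates to have magnitude at most $1/2$ (the substitution $\rho=\cos(\theta)/2$ compensated by the factor $2^\ell$), which gives $|\hat P(S)|\le 2^{\ell-d}|C(d,\ell)|$ for $|S|>d$; with rates allowed up to $1$, as in your sketch, the high levels are not damped at all and increasing the cutoff degree does not shrink the tail.
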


Bourgain showed that Pisier's inequality is sharp~\cite{bourgain1984martingales}.
An adaptation of his construction shows that Theorem ~\ref{thm:P} is also sharp, though we omit the details here.

We conclude this section with an application to  learning theory.
Suppose we want to approximate an unknown function $f$.
Access to $f$ is given by random queries of the form
$(X,f(X))$ where $X \sim \{ \pm 1\}^n$ is uniformly distributed.
The goal is to efficiently compute $g$ so that $\Expect{|f(X)-g(X)|^2} \leq \varepsilon$.
This problem was studied in several works
(see e.g.~\cite{mansour1994learning} and references within).
The theorems above lead to improving 
the sample complexity from polynomial in $n^d$ 
to $o(n^d)$, for $d$ fixed and $n \to \infty$.

\begin{theorem}
\label{thm:learn}
There is a constant $c>1$ so that the following holds.
  Let $f : \{ \pm 1\}^n \to [-1,1]$ be of degree $d \geq 1$ and let $\varepsilon>0$.
From $N \leq 2^{c d^2}\frac{n^{d-1}\log(n)}{\varepsilon^3}$ random queries to $f$, we can efficiently construct a function $g : \{ \pm 1\}^n \to \R$ with $\Expect{|f(x)-g(x)|^2} \leq \varepsilon$.
\end{theorem}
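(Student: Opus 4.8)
The plan is to run the classical low-degree learning algorithm (see e.g.~\cite{mansour1994learning}) together with Fourier sparsification; the new ingredient that pushes the sample complexity below $\mathrm{poly}(n^d)$ is Theorem~\ref{thml1bound}, which says that the degree-$\le d$ Fourier $\ell_1$-mass of $f$ is only $n^{(d-1)/2}$ up to a $2^{O(d^2)}$ factor. I would first dispose of the trivial case $\varepsilon\ge 1$ by outputting $g\equiv 0$, for which $\Expect{(f(X)-g(X))^2}=\Expect{f(X)^2}\le 1\le\varepsilon$; so assume $\varepsilon<1$. The algorithm draws $N$ independent uniform samples $x^{(1)},\dots,x^{(N)}\in\{\pm1\}^n$, queries $f$ at each, and for every $S\subseteq[n]$ with $|S|\le d$ computes the unbiased empirical estimate $\tilde f(S):=\tfrac1N\sum_{i=1}^N f(x^{(i)})\,\chi_S(x^{(i)})$ of $\hat f(S)$. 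It then fixes a threshold $\theta>0$ and outputs the sparsified estimator $g:=\sum_{S:\ |\tilde f(S)|\ge\theta}\tilde f(S)\,\chi_S$. Since there are $\sum_{k\le d}\binom nk\le n^d$ relevant sets, this runs in time $\mathrm{poly}(n^d,N)$, which is the intended meaning of ``efficiently''.

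The analysis would proceed in three steps. (1) \emph{Uniform estimation accuracy.} Each term $f(x^{(i)})\chi_S(x^{(i)})$ lies in $[-1,1]$, so Hoeffding's inequality plus a union bound over the $\le n^d$ sets shows that, for $N\ge C\,d\log(n)/\eta^2$ with $C$ an absolute constant, we have $|\tilde f(S)-\hat f(S)|\le\eta$ simultaneously for all $|S|\le d$ with probability at least $2/3$; condition on this. (2) \emph{Error decomposition via Parseval.} Writing $W:=\sum_{\ell=0}^d\|\hat f_\ell\|_1$ and using $\deg f=d$,
\[
\Expect{(f(X)-g(X))^2}=\sum_{S:\ |\tilde f(S)|\ge\theta}\bigl(\hat f(S)-\tilde f(S)\bigr)^2+\sum_{S:\ |\tilde f(S)|<\theta}\hat f(S)^2 .
\]
Taking $\eta=\theta/2$: in the first sum, $|\tilde f(S)|\ge\theta$ forces $|\hat f(S)|\ge\theta/2$, and since $\tfrac\theta2\cdot\#\{S:|\hat f(S)|\ge\theta/2\}\le\sum_S|\hat f(S)|=W$ there are at most $2W/\theta$ such terms, each $\le\eta^2$, so this sum is $\le W\theta/2$; in the second sum, $|\tilde f(S)|<\theta$ forces $|\hat f(S)|<2\theta$, so it is $\le 2\theta\sum_S|\hat f(S)|=2\theta W$. (3) \emph{Invoke the $\ell_1$ bound.} By Theorem~\ref{thml1bound} (and $\|\hat f_0\|_1\le1$), $W\le 1+\sum_{\ell=1}^d n^{(\ell-1)/2}d^\ell e^{\binom{\ell+1}{2}}\le 2^{O(d^2)}\,n^{(d-1)/2}$. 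Choosing $\theta=\varepsilon/(4W)$ then gives total error $\le W\theta/2+2\theta W\le\varepsilon$, while the sample requirement becomes $N=O\!\bigl(d\log(n)\,W^2/\varepsilon^2\bigr)=2^{O(d^2)}\,n^{d-1}\log(n)/\varepsilon^2\le 2^{cd^2}\,n^{d-1}\log(n)/\varepsilon^3$ for a suitable absolute constant $c$ (using $\varepsilon<1$). Finally I would boost the $2/3$ success probability to $1-\delta$ by a standard $\log(1/\delta)$-factor amplification, or by re-estimating $\Expect{(f-g)^2}$ on fresh samples and repeating.

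Given Theorem~\ref{thml1bound}, I do not expect a real obstacle here: it is the textbook low-degree-plus-sparsification argument. The one point needing care is the joint choice of the truncation threshold $\theta$ and the estimation accuracy $\eta$, so that the bias from discarded small coefficients ($\approx\theta W$) and the aggregate variance of the retained large coefficients ($\approx (W/\theta)\,\eta^2$) are simultaneously $O(\varepsilon)$; sparsification is essential, since without it, estimating all $\approx n^d$ coefficients to the needed accuracy would already cost $\mathrm{poly}(n^d)$ samples. (Driving the \emph{running time}, as opposed to the query count, down to $o(n^d)$ would additionally require a Goldreich--Levin / Kushilevitz--Mansour search for the heavy Fourier coefficients, but this is not needed for the stated claim.)
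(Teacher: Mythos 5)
Your proposal is correct and follows essentially the same route as the paper: estimate all Fourier coefficients on sets of size at most $d$ by sampling, keep those above a threshold of order $\varepsilon/W$ where $W\le 2^{O(d^2)}n^{(d-1)/2}$ comes from Theorem~\ref{thml1bound}, and bound the error via Parseval together with the $\ell_1$ tail estimate. The only (harmless) bookkeeping difference is how the retained coefficients are counted: the paper bounds their number through Parseval ($\sum_S\hat f(S)^2\le 1$) and therefore demands estimation accuracy $\approx\varepsilon^{3/2}/L(n,d)$, giving the stated $\varepsilon^{-3}$, whereas your Markov-on-$\ell_1$ count needs only accuracy $\approx\varepsilon/W$ and in fact yields the slightly better $\varepsilon^{-2}$ dependence, which is still within the claimed bound.
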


The function $g$ is obtained by estimating the large Fourier coefficients of $f$. The analysis
closely follows classical arguments that can be found for example in~\cite{mansour1994learning,o2014analysis}.

%The previously best known number of samples had a dependence on $n$ that is proportional to $n^d$ as we will outline below. 

\subsection{Outline}

Theorem~\ref{thminftybound} is proved in Section~\ref{sec:boundedfunctions1}, Theorem~\ref{thml1bound} is proved in Section~\ref{sec:boundedfunctions2},
and Theorem~\ref{thm:P} is proved in Section~\ref{sec:Pisier}. 
Propositions~\ref{prop:const1} and~\ref{prop:const2} are proved in Section~\ref{sec:B}.
Theorem~\ref{thm:learn} is proved in Section~\ref{sec:learn}.

\subsection{Techniques}
\label{sec:tech}
Here we give a high-level sketch of some of our proofs, omitting many details that are explained later. 
Our techniques are inspired by proofs of Pisier's inequality. Pisier's original proof used complex analysis and interpolation. Bourgain and Milman found a different and more direct proof~\cite{bourgain1987new}.
Their proof relies on the Hahn-Banach theorem, the Riesz representation theorem, and Bernstein's theorem from approximation theory. These deep results are used to show that there is a function that is close to the linear function $L(x) = x_1+\dotsc+x_n$, yet has much smaller $\ell_1$ norm than $L(x)$. The existence of this linear proxy is proved in a clever but non-constructive way.

In our work, we give an explicit formula for a (more general) proxy with the properties alluded to above. Our key technical contribution is an explicit \emph{filter} that can be used to project a polynomial in $\cos(\theta)$ to its degree $\ell$ part.   
The filter is a central component of the proxy,
and can potentially be useful elsewhere.

\begin{theorem}
	\label{thm:filter}
	For every  $d \ge \ell$  with $d=\ell \mod 2$, there is a function $\phi : [0, 2\pi) \rightarrow  \R$ and a distribution on $\theta$ such that $\E[|\phi(\theta)|] =  |C(d,\ell)|$ and 
	\begin{align} \label{filtereq}
		\E[\phi(\theta) \cos^k (\theta)] = \begin{cases} 1 & \text{if $k = \ell$,}\\ 0 & \text{if $k\neq \ell$, $k \le d+1$.} \end{cases}
	\end{align}	 
\end{theorem}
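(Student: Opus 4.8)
The plan is to first observe that the normalization $\E[|\phi(\theta)|]=|C(d,\ell)|$ must be the equality case of a Hölder-type inequality, which dictates the shape of the construction. Since $T_d(\cos\theta)=\cos(d\theta)=\sum_{k=0}^{d}C(d,k)\cos^k\theta$, any $\phi$ and any distribution on $\theta$ obeying \eqref{filtereq} satisfy $\E[\phi(\theta)T_d(\cos\theta)]=\sum_{k=0}^{d}C(d,k)\,\E[\phi(\theta)\cos^k\theta]=C(d,\ell)$, while $\bigl|\E[\phi(\theta)T_d(\cos\theta)]\bigr|\le\E[|\phi(\theta)|]\cdot\max_\theta|T_d(\cos\theta)|=\E[|\phi(\theta)|]$. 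Hence $\E[|\phi(\theta)|]\ge|C(d,\ell)|$ for \emph{every} valid filter, and equality forces $\theta$ to be supported (wherever $\phi\neq 0$) on $\{\theta:|T_d(\cos\theta)|=1\}$, equivalently $\cos\theta\in\{\cos(j\pi/d):0\le j\le d\}$, with $\operatorname{sign}\phi(\theta_j)=\operatorname{sign}(C(d,\ell))\,(-1)^j$ at $\theta_j:=j\pi/d$. So I build the filter on exactly these $d+1$ points.

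Write $x_j:=\cos\theta_j$, let $w_0=w_d=\tfrac12$ and $w_j=1$ for $1\le j\le d-1$, and set
$$H(x)\ :=\ 2\sum_{m=1}^{d-1}C(m,\ell)\,T_m(x)\ +\ C(d,\ell)\,T_d(x)$$
(only the Chebyshev polynomials with $m\equiv\ell\pmod 2$ really appear, as $C(m,\ell)=0$ otherwise; I assume $\ell\ge1$, the case $\ell=0$ being analogous). Using the discrete cosine orthogonality on the grid $\{\theta_j\}$ — for $0\le m,m'\le d$, $\sum_{j}w_j\cos(m\theta_j)\cos(m'\theta_j)$ equals $d$ if $m=m'\in\{0,d\}$, equals $d/2$ if $m=m'\notin\{0,d\}$, and equals $0$ if $m\ne m'$ — together with the aliasing identity $\cos((d+1)\theta_j)=\cos((d-1)\theta_j)$ and the fact that $C(d-1,\ell)=C(d+1,\ell)=0$ (since $d\pm1\not\equiv\ell\pmod 2$), one checks the interpolation identity
$$[x^\ell]\,g\ =\ \frac1d\sum_{j=0}^{d}w_j\,g(x_j)\,H(x_j)\qquad\text{for every polynomial }g\text{ with }\deg g\le d+1 .$$
Taking $g(x)=x^k$, the signed measure with weights $\nu_j:=\tfrac1d w_j H(x_j)$ satisfies $\sum_j\nu_j\cos^k\theta_j=\delta_{k\ell}$ for all $k\le d+1$. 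Now set $\Pr[\theta=\theta_j]:=|\nu_j|/\|\nu\|_1$ and $\phi(\theta_j):=\operatorname{sign}(\nu_j)\,\|\nu\|_1$, where $\|\nu\|_1:=\sum_j|\nu_j|>0$ (positivity from $\sum_j\nu_j x_j^\ell=1$). Then \eqref{filtereq} holds and $\E[|\phi(\theta)|]=\|\nu\|_1$.

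It thus remains to prove $\|\nu\|_1=\tfrac1d\sum_j w_j|H(x_j)|=|C(d,\ell)|$. Applying the interpolation identity once more with $g=T_d$ gives $\tfrac1d\sum_j w_j\,T_d(x_j)H(x_j)=C(d,\ell)$ (the leading Chebyshev coefficient of $H$). Hence it suffices to establish the sign pattern $\operatorname{sign}H(x_j)=\operatorname{sign}(C(d,\ell))\,(-1)^j$ for all $j$, i.e.\ that $(-1)^jH(x_j)$ does not change sign: granting this, $\tfrac1d\sum_j w_j|H(x_j)|=\operatorname{sign}(C(d,\ell))\cdot\tfrac1d\sum_j w_j(-1)^jH(x_j)=\operatorname{sign}(C(d,\ell))\cdot\tfrac1d\sum_j w_jT_d(x_j)H(x_j)=|C(d,\ell)|$, using $\cos(d\theta_j)=(-1)^j$ and the closed form \eqref{chebeq}. (This also shows a posteriori that the construction above is the optimal one.)

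The step I expect to be the main obstacle is precisely this sign statement. The Lagrange basis $L_j(x):=\prod_{i\ne j}\frac{x-x_i}{x_j-x_i}$ at the Chebyshev extrema is known to equal $(-1)^{j+1}(1-x^2)T_d'(x)/\bigl(d^2\gamma_j(x-x_j)\bigr)$ with $\gamma_j=1$ for interior $j$ and $\gamma_j=2$ at the endpoints (so $\gamma_jw_j=1$); combined with $\nu_j=[x^\ell]L_j$ this yields $(-1)^jH(x_j)=-\tfrac1d\,[x^\ell]\bigl((1-x^2)T_d'(x)/(x-x_j)\bigr)$. Since $(1-x^2)T_d'(x)$ has parity $\ell+1$, the quantity $[x^\ell]\bigl((1-x^2)T_d'(x)-(1-\xi^2)T_d'(\xi)\bigr)/(x-\xi)$ depends on $\xi$ only through $\xi^2$; call it $G(\xi^2)$. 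The sign statement is then equivalent to: $G$ has no zero in $[0,1]$ — equivalently, $H$ has all $d$ of its roots in $(-1,1)$ and they interlace the $x_j$. I would attack this by combining $(1-x^2)T_d'(x)=\tfrac d2\bigl(T_{d-1}(x)-T_{d+1}(x)\bigr)$ with the product identity $\bigl(T_m(x)-T_m(\xi)\bigr)/(x-\xi)=U_{m-1}(p)\,U_{m-1}(q)$, where $p,q=\tfrac12\bigl(\sqrt{(1+x)(1+\xi)}\pm\sqrt{(1-x)(1-\xi)}\bigr)$ and $U_{m-1}$ is the Chebyshev polynomial of the second kind ($U_{m-1}(\cos u)=\sin(mu)/\sin u$); this expresses $G$ through products of Chebyshev polynomials at \emph{real} arguments, from which real-rootedness and the localization of the zeros can be read off — this is where the properties of real-rooted polynomials promised in the abstract are meant to enter. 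A softer alternative bypasses the sign analysis via linear-programming duality for $\min\{\|\mu\|_{\mathrm{TV}}:\int x^k\,d\mu=\delta_{k\ell},\ k\le d+1\}$: weak duality against the extremal polynomial $T_d$ gives $\|\nu\|_1\ge|C(d,\ell)|$, and since $\nu$ is the \emph{unique} measure supported on $\{x_j\}_{j=0}^d$ with the prescribed moments (a Vandermonde count — the degree-$\le d$ moment equations already pin down $\nu_j=[x^\ell]L_j$, and the $k=d+1$ equation then holds automatically because $[x^\ell]\prod_i(x-x_i)=0$), complementary slackness with $T_d$ forces $\nu$ to be the primal optimum, hence $\|\nu\|_1=|C(d,\ell)|$.
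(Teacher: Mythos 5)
Your construction is, in substance, the paper's own filter: the signed measure $\nu_j=\tfrac1d w_jH(x_j)$ on the Chebyshev extrema is (by your own Vandermonde uniqueness count) the unique measure on those $d+1$ points with moments $\delta_{k\ell}$ for $k\le d+1$, hence identical to the effective measure behind the paper's $\phi$ on the $2d$-point grid; and your verification of \eqref{filtereq} via discrete Chebyshev orthogonality and the aliasing $T_{d+1}(x_j)=T_{d-1}(x_j)$ is the same computation as Claim~\ref{claim:cos}. You also correctly isolate the crux, namely the sign statement $\operatorname{sign}H(x_j)=\operatorname{sign}(C(d,\ell))\,(-1)^j$, which is exactly the paper's Claim~\ref{claim:qsign}. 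But you do not prove it. Your first route is announced, not executed: the reduction through the product identity for $(T_m(x)-T_m(\xi))/(x-\xi)$ ends with ``real-rootedness and the localization of the zeros can be read off,'' and that is precisely the step that needs an argument. In the paper this gap is filled by Theorem~\ref{thm:polyalternate}: for a real-rooted polynomial with positive roots, the suffix $p_{>k}$ evaluated at any root has the predictable alternating sign, proved via Newton's inequalities (log-concavity of coefficients) and unimodality; applied to $q$ with roots $\cos^2(j\pi/d)$ it yields the constant sign of $Q_{>\ell}(\cos\theta)/\cos^{\ell+1}(\theta)$, i.e.\ of your $(-1)^jH(x_j)$.

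Your ``softer alternative'' has a genuine logical gap rather than a missing detail. Complementary slackness can only be invoked after you know that $T_d$ is an \emph{optimal} dual solution (and that strong duality and primal attainment hold for this semi-infinite LP). Weak duality against $T_d$ gives only $\|\nu\|_1\ge|C(d,\ell)|$ --- the easy direction, already noted right after the theorem statement. Dual optimality of $T_d$, i.e.\ that no polynomial of degree at most $d+1$ bounded by $1$ on $[-1,1]$ has $x^\ell$-coefficient exceeding $|C(d,\ell)|$, is V.~A.~Markov's coefficient theorem; it is essentially the univariate instance of the hard content being proved here (compare Theorem~\ref{thminftybound}), and you neither cite nor prove it. As written, ``complementary slackness with $T_d$ forces $\nu$ to be the primal optimum'' therefore assumes the conclusion. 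The argument can be repaired by explicitly invoking Markov's theorem together with a strong-duality/attainment argument for the moment LP, but without that the only self-contained path is to prove the sign statement directly, which is what Theorem~\ref{thm:polyalternate} accomplishes in the paper.
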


The theorem cannot be improved, in the sense that 
any function $\phi$ satisfying  \eqref{filtereq} must also satisfy   $$\E[|\phi(\theta)|] \geq |\E[\phi(\theta) \cos(d \theta)]| = |C(d,\ell)|.$$ 
The proof of Theorem~\ref{thm:filter} is based on properties of Chebyshev polynomials and some non-trivial facts about real-rooted polynomials that may be of independent interest. If $p(z) = \sum_{j=0}^d c_j z^j$ is a polynomial, we write $p_{>k}(z) = \sum_{j=k+1}^d c_j z^j$. The proof of Theorem \ref{thm:filter} relies on the following theorem:
\begin{theorem}
	\label{thm:polyalternate}
	Let $p(z)$ be a real-rooted degree-$d$ polynomial whose roots are all positive. Then for every root $r$ of $p(z)$ and every $k \in \{0,\dotsc,d\}$, we have  
	$ (-1)^{d - k-1} \cdot p_{> k}(r) \geq 0.$
\end{theorem}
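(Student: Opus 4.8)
The plan is to make the hypothesis ``all roots positive'' bite by rescaling the polynomial so that the distinguished root becomes $1$; after that, the relevant partial sum of coefficients telescopes to a single elementary symmetric function, which is visibly nonnegative.

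First I would reduce to monic $p$: the inequality is unchanged under scaling $p$ by a positive constant, and a positive leading coefficient is the natural reading of the hypothesis (and is needed — e.g. $p(z)=r-z$ violates the claim for $k=0$). So write $p(z) = \prod_{i=1}^d (z-r_i)$ with all $r_i>0$, and expand $p(z) = \sum_{m=0}^d (-1)^m e_m(r_1,\dots,r_d)\, z^{d-m}$, where $e_m$ is the $m$-th elementary symmetric polynomial. The case $k=d$ is trivial ($p_{>d}\equiv 0$), so fix $k\in\{0,\dots,d-1\}$ and put $M := d-k-1\ge 0$. The monomials of $p$ of degree exceeding $k$ are exactly those with $m\le M$, so
$$ p_{>k}(z) = \sum_{m=0}^{M} (-1)^m e_m(r_1,\dots,r_d)\, z^{d-m}. $$

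Next I would evaluate at a root $r = r_{i_0}>0$ and homogenize: since $e_m(\lambda y_1,\dots,\lambda y_d)=\lambda^m e_m(y)$, writing $\tilde e_m := e_m(r_1/r,\dots,r_d/r)$ gives $e_m(r_1,\dots,r_d)=r^m\tilde e_m$, and because $r^m\cdot r^{d-m}=r^d$ for every $m$ the sum collapses to
$$ p_{>k}(r) = r^{d}\sum_{m=0}^{M} (-1)^m \tilde e_m. $$
Now the crucial point: one of the numbers $r_1/r,\dots,r_d/r$ equals $1$, namely $r_{i_0}/r$. Letting $\sigma_m$ be the $m$-th elementary symmetric polynomial of the \emph{other} $d-1$ positive numbers, the identity $e_m(1,x_1,\dots,x_{d-1}) = \sigma_m+\sigma_{m-1}$ gives $\tilde e_m = \sigma_m+\sigma_{m-1}$ (with $\sigma_{-1}:=0$), and the alternating sum telescopes: the piece $\sum_{m=0}^M(-1)^m\sigma_{m-1} = -\sum_{m=0}^{M-1}(-1)^m\sigma_m$ cancels all of $\sum_{m=0}^M(-1)^m\sigma_m$ except its top term, so $\sum_{m=0}^M(-1)^m\tilde e_m = (-1)^M\sigma_M$. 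Since $(-1)^{d-k-1}=(-1)^M$, this yields
$$ (-1)^{d-k-1} p_{>k}(r) = (-1)^{M} p_{>k}(r) = r^{d}\,\sigma_M \ge 0, $$
as $r>0$ and $\sigma_M$ is an elementary symmetric polynomial of nonnegative reals.

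I do not expect a genuine obstacle: the whole content is finding this maneuver. The tempting alternatives stall. Bounding $\sum_{j>k}c_j r^j$ term by term fails, since partial sums of a sign-alternating sequence carry no sign information without extra structure. Writing $p_{>k}(r)$ as the Taylor remainder $\frac{1}{k!}\int_0^r (r-t)^k p^{(k+1)}(t)\,dt$ pins down the sign only when $r$ is the \emph{smallest} root of $p$ (only there does $p^{(k+1)}$ keep a fixed sign on $(0,r)$), and a naive induction on $d$ obtained by dividing out some other root $r_j>r$ covers every root except the largest one. The rescale-and-telescope step is exactly what packages the positivity of all the roots into the single quantity $r^d\sigma_M$, and it handles every root uniformly.
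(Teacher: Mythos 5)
Your proof is correct, and it takes a genuinely different route from the paper. The paper deduces the sign statement from Newton's inequality: writing $p(z)=\prod_j(z-r_j)=\sum_j c_j z^j$, the coefficients alternate in sign and $|c_0|,\dots,|c_d|$ is log-concave, hence $a_j=|c_j|r^j$ is a positive unimodal sequence with $\sum_j(-1)^j a_j=0$, and a separate combinatorial lemma about alternating partial sums of unimodal sequences (Claim~\ref{clm:unimoal+zero}) then pins down the sign of each prefix. You instead prove an exact identity: homogenizing at the root $r$ collapses $p_{>k}(r)$ to $r^d\sum_{m=0}^{M}(-1)^m e_m(r_1/r,\dots,r_d/r)$ with $M=d-k-1$, and since one of the arguments equals $1$, the recursion $e_m(1,x)=\sigma_m+\sigma_{m-1}$ telescopes the sum to $(-1)^M\sigma_M$, i.e.\ $(-1)^{d-k-1}p_{>k}(r)=r^{k+1}e_{d-k-1}(\text{other roots})\geq 0$. (I verified the telescoping and the identity on small examples; your handling of $k=d$ and of the normalization to positive leading coefficient — which the paper also assumes implicitly by writing $p$ as a monic product — is fine.) Your argument is more elementary, avoiding Newton's inequality and the unimodality lemma, and it buys strictly more: an exact formula for $p_{>k}(r)$ rather than just its sign, which also shows the inequality is strict whenever the remaining roots are strictly positive. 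What the paper's route buys in exchange is robustness: it only needs log-concavity (hence unimodality) of the weighted coefficient sequence together with $p(r)=0$, so it would apply in settings where one knows the coefficient structure but not a full factorization into positive roots.
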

We prove Theorem~\ref{thm:filter} in Section~\ref{filtersec} and  Theorem~\ref{thm:polyalternate} in Section~\ref{sec:trunc}. 
Theorem~\ref{thminftybound} is proved using the filter as follows.
We construct a proxy $P: \{\pm 1\}^n \rightarrow \R$ using the formula\footnote{This formula is inspired by earlier proofs of Pisier's inequality.}:
\begin{align*}
P(x) = \Ex{\theta}{ \phi(\theta) \cdot \prod_{j=1}^n (1+ \cos(\theta) \cdot x_j)}.
\end{align*}
When $X \sim \{\pm 1\}^n$ is uniformly distributed, we can bound 
$$\E[|P(X)|] \leq \E[|\phi(\theta)|] \leq |C(d,\ell)|.$$ 
By construction, we have
\begin{align} \label{filtereq2}
\hat{P}(S)  = \begin{cases}
1 & \text{if $|S| =\ell$,} \\
0 &\text{if $|S| \neq \ell$, $ |S| \leq d$,}
\end{cases}
\end{align}
%Theorem~\ref{thminftybound} is proved by observing that on the one hand, 
so we can compute $f_\ell$ via convolution as $f_\ell  = f*P$. The properties of $P$ imply that the convolution with it cannot be large at any point.

Theorem~\ref{thml1bound} is proved by induction. In the proof, we apply a random restriction to $f_\ell$. We set each variable of $f_\ell$ randomly with probability $\tfrac{1}{\ell}$ and leave it unset with probability $1-\tfrac{1}{\ell}$. This gives a degree $\ell$ function $g$. We use  Khintchine's inequality  to bound $\| \hat{f}_\ell \|_1$ in terms of $\| \hat{g}_{\ell-1} \|_1$. Since $g$ is bounded by $\| f_\ell \|_\infty$,  induction combined with Theorem~\ref{thminftybound} can be used to prove  Theorem~\ref{thml1bound}.

Theorem~\ref{thm:P} is proved by setting $d \approx \log(m+1)$ and using the proxy:
\begin{align*}
P(x) := 2^{\ell} \cdot \Ex{\theta}{ \phi(\theta) \cdot \prod_{j=1}^n (1+ \tfrac{\cos(\theta) \cdot x_j}{2})}
\end{align*}
Once again, the construction ensures that (\ref{filtereq2}) holds. 
Because $|\hat{P}(S)| \leq 2^{\ell-d} \cdot |C(d,\ell)|$ for $|S|>d$, 
the proxy $P$ is close to the symmetric homogenous polynomial of degree $\ell$ whose coefficients are all $1$. We can use the bound on $\E[|\phi(\theta)|]$ to bound $\E[|P(X)|] \leq 2^\ell |C(d,\ell)|$. 
Theorem~\ref{thm:P} is again proved by analyzing the convolution $f * P$.

\section{Preliminaries}\label{sec:prelim}

\subsection{Fourier analysis}

\begin{fact}[Parseval's identity]
\label{fact:parseval}
If $f : \{\pm 1\}^n \to \R^m$ and
$X \sim \{\pm 1\}^n$ is uniformly distributed
then 
$$ \Expect{ \|f(X)\|_2^2 }  = \sum_{S \subseteq [n]} \|\hat{f}(S)\|_2^2 .$$
\end{fact}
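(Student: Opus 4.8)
The plan is to reduce to the classical scalar Parseval identity and then sum over coordinates. First I would write the squared Euclidean norm coordinate-wise: letting $f = (f_1, \dots, f_m)$ with each $f_i : \nbits^n \to \R$, we have $\|f(X)\|_2^2 = \sum_{i=1}^m f_i(X)^2$, so by linearity of expectation it suffices to prove the scalar statement $\Expect{f_i(X)^2} = \sum_{S \subseteq [n]} \hat{f_i}(S)^2$ for each $i$ and then recombine.

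For the scalar identity I would expand $f_i$ in the Fourier basis, $f_i = \sum_{S} \hat{f_i}(S)\, \chi_S$, multiply the expansion by itself, and use that $\chi_S \chi_T = \chi_{S \triangle T}$ on $\nbits^n$ (since $x_j^2 = 1$). This gives $f_i(X)^2 = \sum_{S,T} \hat{f_i}(S)\hat{f_i}(T)\, \chi_{S \triangle T}(X)$. Taking the expectation over $X \sim \nbits^n$, only the diagonal terms $S = T$ survive: the coordinates $X_1,\dots,X_n$ are independent and mean zero, so $\Expect{\chi_R(X)} = \prod_{j \in R} \Expect{X_j} = \mathbf{1}[R = \varnothing]$, and hence $\Expect{\chi_S(X)\chi_T(X)} = \mathbf{1}[S = T]$. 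Therefore $\Expect{f_i(X)^2} = \sum_{S} \hat{f_i}(S)^2$; equivalently, the characters $\{\chi_S\}_{S \subseteq [n]}$ are orthonormal in $L^2(\nbits^n)$ under the uniform measure.

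Finally I would reassemble the vector statement: $\Expect{\|f(X)\|_2^2} = \sum_{i=1}^m \Expect{f_i(X)^2} = \sum_{i=1}^m \sum_{S} \hat{f_i}(S)^2 = \sum_{S} \sum_{i=1}^m \hat{f_i}(S)^2 = \sum_{S} \|\hat{f}(S)\|_2^2$, where interchanging the two finite sums is routine and the last equality is the definition of the Euclidean norm applied to the coefficient vector $\hat{f}(S) = (\hat{f_1}(S), \dots, \hat{f_m}(S)) \in \R^m$. There is no genuine obstacle; the only point requiring a modicum of care is bookkeeping the vector-valued coefficients, and the one nontrivial input is the orthonormality computation $\Expect{\chi_S(X)\chi_T(X)} = \mathbf{1}[S = T]$, which is immediate from independence and mean-zero-ness of the $X_j$.
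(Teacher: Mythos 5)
Your proposal is correct and follows essentially the same route as the paper: both rest on expanding $f$ in the Fourier basis and using the orthonormality $\Expect{\chi_S(X)\chi_T(X)} = \mathbf{1}[S=T]$, with only a bookkeeping difference (you reduce coordinate-wise to the scalar identity, while the paper expands the vector inner product $\langle f(X), f(X)\rangle$ directly).
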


\begin{proof}
The proof is based on the orthonormality of the characters:
\begin{align*}
\Expect{ \|f(X)\|_2^2 }
& = \E \Big[ \Big\langle \sum_{S \subseteq [n]} \hat{f}(S) \cdot \chi_S(X), 
\sum_{T \subseteq [n]} \hat{f}(T) \cdot \chi_T(X) \Big\rangle \Big] \\
& = \sum_{S , T\subseteq [n]} \big\langle  \hat{f}(S) , 
 \hat{f}(T)  \big\rangle \E \Big[ \chi_S(X) \chi_T(X)  \Big] \\
& = \sum_{S \subseteq [n]} \| \hat{f}(S) \|_2^2. \qedhere
\end{align*}

\end{proof}

%Parseval's identity states that for $X \sim \{\pm 1\}^n$ uniform,
%$$ \Expect{ f(X)^2 }  = \sum_{S \subseteq [n]} \hat{f}(S)^2 =:  \| \hat{f} \|^2_2.$$
%In particular, $\| \hat{f} \|_2 \leq \|f\|_\infty$.

Convolution is a powerful tool when there is an underlying group structure.
Here the group is the cube $\{\pm 1\}^n$
with the operation $x \odot z = (x_1 z_1, \dotsc , x_n z_n)$. 
The convolution of a (vector-valued) 
function $f: \{\pm 1\}^n \rightarrow \R^m$ and 
a (scalar-valued) function $g:\{\pm 1\}^n \rightarrow \R$ is 
the function $f*g: \{\pm 1\}^n \to \R^m$ defined by
\begin{align*}
 f*g(x) = \Ex{Z}{g(Z) \cdot f(x \odot Z)}
\end{align*}
where $Z$ is uniformly random in $\{\pm 1\}^n$. 
We list some basic properties of convolution.

\begin{fact}
\label{fact:lin}
If $L:\R^m \to \R^m$ is a linear map then $L(f*g) = L(f) * g$. 
\end{fact}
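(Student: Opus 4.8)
\textbf{Proof proposal for Fact~\ref{fact:lin}.} The plan is to simply unwind the definition of convolution and use linearity of $L$ to pull it inside the expectation. Fix $x \in \{\pm 1\}^n$. By definition, $(f*g)(x) = \Ex{Z}{g(Z) \cdot f(x \odot Z)}$, where the expectation over $Z \sim \{\pm 1\}^n$ is a finite average, i.e.\ a finite $\R^m$-linear combination of the vectors $g(z) f(x\odot z)$. First I would apply $L$ to both sides and use that a linear map commutes with finite linear combinations of vectors (and in particular with scalar multiplication by $g(Z)$) to obtain
\begin{align*}
L\big((f*g)(x)\big) = L\Big( \Ex{Z}{g(Z)\cdot f(x\odot Z)} \Big) = \Ex{Z}{L\big(g(Z)\cdot f(x\odot Z)\big)} = \Ex{Z}{g(Z)\cdot L\big(f(x\odot Z)\big)}.
\end{align*}
Then I would recognize the right-hand side as $\Ex{Z}{g(Z) \cdot (L\circ f)(x\odot Z)} = \big((L(f))*g\big)(x)$, again directly from the definition of convolution applied to the function $L(f) = L \circ f$. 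Since $x$ was arbitrary, this gives $L(f*g) = L(f)*g$ as functions.

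There is essentially no obstacle here: the only point requiring (minimal) care is justifying that $L$ commutes with $\Ex{Z}{\cdot}$, which is immediate since $Z$ ranges over the finite set $\{\pm 1\}^n$ and the expectation is therefore a finite sum with fixed scalar coefficients $2^{-n}$, so commuting $L$ past it is just additivity and homogeneity of $L$ applied finitely many times. No continuity or measure-theoretic input is needed.
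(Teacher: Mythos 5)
Your proof is correct, and it is exactly the routine argument the paper leaves implicit by stating this as a Fact without proof: since the expectation over $Z$ is a finite average, linearity of $L$ lets it pass through, giving $L(f*g)=L(f)*g$ pointwise. Nothing further is needed.
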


\begin{fact}
\label{fact:f*gS}
One has $\widehat{f*g}(S) = \hat{g}(S) \cdot \hat{f}(S) $ for every $S \subseteq [n]$.
\end{fact}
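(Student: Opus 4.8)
The plan is to expand both functions in the Fourier basis and exploit orthonormality of the characters, in the same spirit as the proof of Fact~\ref{fact:parseval} above. The only structural input I will need is that characters are multiplicative with respect to the group operation: for all $x,z \in \{\pm 1\}^n$ and $T \subseteq [n]$,
$$\chi_T(x \odot z) = \prod_{j \in T} x_j z_j = \chi_T(x)\,\chi_T(z).$$

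First I would fix $x$ and write $f(x \odot Z) = \sum_{T \subseteq [n]} \hat{f}(T)\,\chi_T(x \odot Z) = \sum_{T \subseteq [n]} \hat{f}(T)\,\chi_T(x)\,\chi_T(Z)$, and similarly $g(Z) = \sum_{U \subseteq [n]} \hat{g}(U)\,\chi_U(Z)$. Substituting these into the definition $f*g(x) = \Ex{Z}{g(Z)\cdot f(x \odot Z)}$ and using linearity of expectation gives
$$f*g(x) = \sum_{T,U \subseteq [n]} \hat{g}(U)\,\hat{f}(T)\,\chi_T(x)\; \Ex{Z}{\chi_U(Z)\,\chi_T(Z)}.$$
By orthonormality of the characters, $\Ex{Z}{\chi_U(Z)\chi_T(Z)}$ equals $1$ when $U = T$ and $0$ otherwise, so the double sum collapses to $f*g(x) = \sum_{S \subseteq [n]} \hat{g}(S)\,\hat{f}(S)\,\chi_S(x)$. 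By uniqueness of the Fourier expansion, the coefficient of $\chi_S$ is precisely $\widehat{f*g}(S) = \hat{g}(S) \cdot \hat{f}(S)$, which is the claim.

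An alternative route, if one prefers not to expand, is to compute $\widehat{f*g}(S) = \Ex{X}{f*g(X)\,\chi_S(X)}$ directly, perform the change of variables $Y = X \odot Z$ (a measure-preserving bijection of $\{\pm 1\}^n$ for each fixed $Z$), and use $\chi_S(X) = \chi_S(Y)\,\chi_S(Z)$ to factor the iterated expectation as $\Ex{Z}{g(Z)\chi_S(Z)} \cdot \Ex{Y}{f(Y)\chi_S(Y)}$. There is no genuine obstacle in either approach; the only point to keep in mind is that $f$ is vector-valued and $g$ scalar-valued, so $\hat{f}(T) \in \R^m$ and all products above are scalar-times-vector — exactly the form asserted by the statement — and the computation goes through coordinatewise without change.
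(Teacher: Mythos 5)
Your proposal is correct: both the expansion-plus-orthonormality argument and the change-of-variables computation are valid, and you correctly note that the vector-valued nature of $f$ causes no issue since $g$ is scalar-valued. The paper itself states this fact without proof, as a standard property of convolution, so there is nothing to compare against beyond observing that your argument is the canonical one (in the same spirit as the paper's proof of Parseval's identity, Fact~\ref{fact:parseval}).
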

%\begin{proof}
% \begin{align*}
% f*g(x) & = \Expect{g(Z) \cdot f(x \odot Z)} \\ 
% & =\Expect{ \sum_{S} \hat{g}(S) \chi_S(Z) \cdot \sum_{T} \hat{f}(T) \chi_T(x \odot Z)} \\ 
%  & =\Expect{ \sum_{S} \hat{g}(S) \chi_S(Z) \cdot \sum_{T} \hat{f}(T) \chi_T(x)\chi_T(Z)} \\ 
% &  =\sum_{S} \hat{g}(S) \hat{f}(S) \chi_S( x),
% \end{align*}
% where the last equality uses linearity of expectation and
% the orthonormality of the characters:
% \begin{align*}
% 	\Expect{\chi_S(Z) \chi_T(Z)} = \begin{cases}
% 	1 & \text{if $S=T$,}\\
% 	0 & \text{otherwise.}
% 	\end{cases} 
% \end{align*}
%\end{proof}

\subsection{Norms and convexity}

\begin{fact}[Jensen's inequality] \label{fact:jensen}
	Given a convex function $f$ and a random variable $X$, we have $f(\mathbb{E}[X]) \le \mathbb{E}[f(X)]$.
\end{fact}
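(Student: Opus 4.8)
The plan is to prove the inequality via the \emph{supporting line} characterization of convexity. Write $\mu = \mathbb{E}[X]$; in every application in this paper $X$ ranges over a finite set (e.g.\ $\{\pm 1\}^n$ with the uniform distribution), so $\mu$ is a well-defined real number, $\mathbb{E}[f(X)]$ is a finite average, and no integrability issues arise. The first step is to recall the standard fact that a convex function $f : \mathbb{R} \to \mathbb{R}$ has a supporting line at each point: there exists a slope $a \in \mathbb{R}$ with $f(t) \ge f(\mu) + a \cdot (t - \mu)$ for all $t \in \mathbb{R}$. One may take $a$ to be anything between the left and right one-sided derivatives of $f$ at $\mu$, which exist (and satisfy $f'_-(\mu) \le f'_+(\mu)$) because the difference quotients of a convex function are monotone.

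With the supporting line in hand, the conclusion is immediate: substituting the random variable $X$ for $t$ gives $f(X) \ge f(\mu) + a(X - \mu)$ pointwise, and taking expectations and using linearity of expectation yields $\mathbb{E}[f(X)] \ge f(\mu) + a(\mathbb{E}[X] - \mu) = f(\mu) = f(\mathbb{E}[X])$, since $\mathbb{E}[X] - \mu = 0$ by definition of $\mu$. Alternatively --- and this suffices for every use of the fact in this paper --- when $X$ takes finitely many values $x_1, \dots, x_N$ with probabilities $p_1, \dots, p_N$, one can bypass supporting lines and induct on $N$: the case $N = 1$ is trivial, and for the inductive step write $\mathbb{E}[X] = p_N x_N + (1 - p_N)\mathbb{E}[X']$ with $X'$ the conditioning of $X$ on avoiding the value $x_N$, apply the two-point convexity inequality $f(\lambda u + (1-\lambda) v) \le \lambda f(u) + (1-\lambda) f(v)$ with $\lambda = p_N$, and then invoke the inductive hypothesis on $X'$.

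I do not expect any genuine obstacle here. The only slightly non-elementary ingredient is the existence of the supporting line --- and the discrete induction avoids even that --- so the single point requiring a modicum of care is stating the hypotheses precisely: convexity of $f$ on an interval containing the support of $X$, and finiteness of the relevant expectations, both of which are automatic in the finite-probability-space setting in which the fact is applied.
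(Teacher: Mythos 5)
Your proof is correct: the supporting-line argument (and the finite-support induction you offer as an alternative) is the standard proof of Jensen's inequality, and your attention to the hypotheses is appropriate. The paper itself states this as a background fact in its preliminaries without proof, so there is no authorial argument to compare against; your write-up supplies exactly the canonical justification one would expect.
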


\begin{fact} \label{fact:l1} For any norm $\| \cdot \| : \R^m \to \R_{\geq 0}$ and functions $f : \{\pm 1\}^n \to \R^m$ and $g : \{ \pm 1\}^n \to \R$ one has 
	$$\Expect{\|f*g(X) \|^2}^{1/2} \leq \Expect{|g(X)|} \cdot \Expect{\|f(X) \|^2}^{1/2}.$$
where $X \sim \{ \pm 1\}^n$ uniformly.
\end{fact}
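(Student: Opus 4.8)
The plan is to prove Fact~\ref{fact:l1} by reducing it to the scalar case via a duality argument, exactly as one does in the proof of the classical contraction/averaging bound for convolutions. First I would recall that for any norm $\|\cdot\|$ on $\R^m$ there is a dual characterization $\|v\| = \sup_{\lambda \in B^*} \inner{\lambda}{v}$, where $B^*$ is the unit ball of the dual norm; since both sides of the claimed inequality are continuous in the obvious way, it suffices to bound $\Expect{\inner{\lambda}{f*g(X)}^2}^{1/2}$ uniformly over $\lambda \in B^*$. By Fact~\ref{fact:lin} applied to the linear functional $v \mapsto \inner{\lambda}{v}$, we have $\inner{\lambda}{f*g} = (\inner{\lambda}{f}) * g$, so this reduces the problem to the scalar statement: for $h : \{\pm 1\}^n \to \R$ and $g : \{\pm 1\}^n \to \R$, $\Expect{(h*g(X))^2}^{1/2} \leq \Expect{|g(X)|}\cdot \Expect{h(X)^2}^{1/2}$, applied with $h = \inner{\lambda}{f}$ and using $\Expect{\inner{\lambda}{f(X)}^2}^{1/2} \leq \Expect{\|f(X)\|^2}^{1/2}$ (which holds since $|\inner{\lambda}{f(X)}| \le \|f(X)\|$ for $\lambda \in B^*$).

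Next I would prove the scalar inequality directly. Write out $h*g(x) = \Ex{Z}{g(Z)\, h(x \odot Z)}$ and apply the Cauchy--Schwarz inequality in the form $\left(\Ex{Z}{g(Z)\,h(x\odot Z)}\right)^2 \le \Ex{Z}{|g(Z)|} \cdot \Ex{Z}{|g(Z)|\, h(x\odot Z)^2}$, splitting the weight $|g(Z)|$ as $|g(Z)|^{1/2}\cdot |g(Z)|^{1/2}$. Then take $\Expect{\cdot}$ over $X$, pull the $X$-expectation inside, and use that for each fixed $Z$ the map $x \mapsto x \odot Z$ is a measure-preserving bijection of $\{\pm 1\}^n$, so $\Ex{X}{h(X\odot Z)^2} = \Ex{X}{h(X)^2}$. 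This yields $\Ex{X}{(h*g(X))^2} \le \Ex{Z}{|g(Z)|} \cdot \Ex{Z}{|g(Z)|} \cdot \Ex{X}{h(X)^2} = \Expect{|g(X)|}^2 \cdot \Expect{h(X)^2}$; taking square roots gives the scalar claim. Combining with the first paragraph, $\Expect{\|f*g(X)\|^2}^{1/2} = \sup_{\lambda \in B^*}\Expect{\inner{\lambda}{f*g(X)}^2}^{1/2} \le \Expect{|g(X)|}\cdot \sup_{\lambda \in B^*}\Expect{\inner{\lambda}{f(X)}^2}^{1/2} \le \Expect{|g(X)|}\cdot \Expect{\|f(X)\|^2}^{1/2}$.

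Alternatively — and this is perhaps cleaner to write — one can bypass duality entirely by working with the norm directly: by the triangle inequality for $\|\cdot\|$ and Jensen (Fact~\ref{fact:jensen}), $\|f*g(x)\| = \|\Ex{Z}{g(Z) f(x\odot Z)}\| \le \Ex{Z}{|g(Z)|\,\|f(x\odot Z)\|}$, and then the same Cauchy--Schwarz split on $|g(Z)|$ followed by averaging over $X$ and the measure-preserving change of variables finishes the proof in one pass. I expect the main (and only real) obstacle to be bookkeeping: making sure the split of the weight $|g(Z)|$ in Cauchy--Schwarz is done correctly so the two factors multiply to $\Expect{|g(X)|}^2$ rather than, say, $\Expect{|g(X)|}\cdot\Expect{g(X)^2}$, and confirming that the interchange of the $X$- and $Z$-expectations together with the invariance $\Ex{X}{F(X\odot Z)} = \Ex{X}{F(X)}$ is applied to the right quantity. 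None of this is deep; the substantive content is simply that convolution by $g$ contracts the $L^2(\text{norm})$ quantity by the factor $\|g\|_1 = \Expect{|g(X)|}$, which is the averaged/Jensen version of Young's inequality on the group $\{\pm 1\}^n$.
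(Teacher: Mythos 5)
Your alternative route in the last paragraph --- pull the norm inside the $Z$-average by the triangle inequality/Jensen, split $|g(Z)| = |g(Z)|^{1/2}\cdot|g(Z)|^{1/2}$ in Cauchy--Schwarz, then average over $X$ and use that $x \mapsto x \odot Z$ is measure preserving --- is correct and is exactly the paper's proof, including the same bookkeeping that makes the two factors multiply to $\Expect{|g(X)|}^2$.

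The duality reduction in your first two paragraphs, however, has a genuine gap. You claim $\Expect{\|f*g(X)\|^2}^{1/2} = \sup_{\lambda \in B^*} \Expect{\langle \lambda, f*g(X)\rangle^2}^{1/2}$, but the pointwise identity $\|v\| = \sup_{\lambda \in B^*}\langle \lambda, v\rangle$ only gives
$$\Expect{\|f*g(X)\|^2} = \Ex{X}{\sup_{\lambda \in B^*}\langle \lambda, f*g(X)\rangle^2} \;\geq\; \sup_{\lambda \in B^*} \Expect{\langle \lambda, f*g(X)\rangle^2},$$
and the inequality can be strict: with $m=2$, the Euclidean norm, and $f*g(X)$ equal to $e_1$ or $e_2$ each with probability $\tfrac12$, the left side is $1$ while the right side is $\tfrac12$. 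So bounding the scalar quantity $\Expect{\langle\lambda, f*g(X)\rangle^2}^{1/2}$ uniformly over fixed $\lambda \in B^*$ does not bound $\Expect{\|f*g(X)\|^2}^{1/2}$; the sup and the expectation do not commute, and the direction you need is the one that fails. (Choosing an optimal $\lambda = \lambda(x)$ for each $x$ would fix this, but then $\lambda$ is no longer a single linear functional and the argument collapses back into the direct proof.) The scalar inequality itself and your proof of it are fine; only the reduction from the vector case to a fixed functional is unsound, so you should present the direct argument as the proof rather than as an alternative.
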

\begin{proof}
  We bound
	\begin{align*}
	\Expect{\|f*g(X) \|^2} & = \Ex{X}{\Big\|\Ex{Z} {g(Z) \cdot f(X \odot Z)} \Big\|^2}	\\ & \leq \Ex{X}{\Big(\Ex{Z}{|g(Z)| \cdot  \|f(X \odot Z)\|}\Big)^2},
	\end{align*}
	where the inequality follows from the convexity of the norm $\| \cdot \|$.  By the Cauchy-Schwarz inequality, we can continue
	\begin{align*}
	& \leq \Ex{X}{\Ex{Z}{|g(Z)|} \cdot \Ex{Z'}{|g(Z')|\cdot  \|f(X \odot Z')\|^2}} \\
	&= \Ex{Z}{|g(Z)|} \cdot \Ex{Z'}{|g(Z')| \cdot \Ex{X}  {\|f(X)\|^2}} \\
	&= \Big(\Ex{Z}{|g(Z)|}\Big)^2 \cdot \Ex{X}  {\|f(X)\|^2}. \qedhere
	\end{align*}
\end{proof}
It is convenient to replace a norm with the Euclidean norm. For this, we use the following standard result in convex geometry.
\begin{fact}[John's Theorem ~\cite{john1948extremum}]
\label{fact:john}
For any norm $\| \cdot \|$ on $\R^m$, there is an invertible linear map $J : \R^m \to \R^m$
such that for every $x \in \R^m$,
$$\|J(x)\|_2 \leq \|x\| \leq \sqrt{m} \cdot \|J(x)\|_2.$$
\end{fact}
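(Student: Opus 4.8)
The plan is to deduce this from the classical John‑ellipsoid construction. First I would translate the analytic claim into geometry: let $K = \{x \in \R^m : \|x\| \le 1\}$ be the closed unit ball of the norm. Since every norm on $\R^m$ is equivalent to $\|\cdot\|_2$, the set $K$ is a compact convex body with the origin in its interior, and $\|x\| = \|{-}x\|$ makes $K$ centrally symmetric. Among all ellipsoids centered at the origin and contained in $K$ there is one of maximal volume, say $\mathcal{E}$ — the existence following from a routine compactness argument that uses that $K$ is bounded and has nonempty interior (so the relevant positive‑definite matrices range over a compact set) together with continuity of volume. Let $T:\R^m\to\R^m$ be the invertible linear map with $T(\mathcal{E}) = B_2^m$, the standard Euclidean unit ball.

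The heart of the argument is the inclusion $K \subseteq \sqrt{m}\,\mathcal{E}$, equivalently $T(K) \subseteq \sqrt{m}\, B_2^m$ after applying $T$; the reverse containment $B_2^m = T(\mathcal{E}) \subseteq T(K)$ is automatic from $\mathcal E \subseteq K$. I would prove it by contradiction: if some $p \in T(K)$ had $\|p\|_2 > \sqrt m$, then $-p \in T(K)$ by central symmetry, so $T(K)$ contains $\mathrm{conv}\big(B_2^m \cup \{\pm p\}\big)$. Rotating coordinates so that $p$ is a positive multiple of $e_1$, one checks by a direct computation that this convex hull contains an origin‑centered ellipsoid of the form $\{x : x_1^2/a^2 + (x_2^2 + \dots + x_m^2)/b^2 \le 1\}$ with $a > 1 > b$ and $a\,b^{\,m-1} > 1$, hence of strictly larger volume than $B_2^m = T(\mathcal E)$; pulling it back by $T^{-1}$ gives an origin‑centered ellipsoid in $K$ larger than $\mathcal E$, a contradiction. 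The optimization over $(a,b)$ in terms of $\|p\|_2$ is exactly where the constant $\sqrt m$ enters (and where central symmetry is essential — for general convex bodies the analogous factor is $m$); it is the one routine‑but‑delicate computation, and I regard it as the main obstacle.

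Finally I would unwind the geometry. Since $z \in \mathcal E \iff \|T(z)\|_2 \le 1$ and $z \in K \iff \|z\| \le 1$, the inclusions $\mathcal E \subseteq K \subseteq \sqrt m\,\mathcal E$ homogenize to
\[
\|z\| \;\le\; \|T(z)\|_2 \;\le\; \sqrt m\,\|z\| \qquad \text{for all } z \in \R^m .
\]
Setting $J := \tfrac{1}{\sqrt m}\,T$, which is invertible, the right‑hand inequality gives $\|J(z)\|_2 \le \|z\|$ and the left‑hand one gives $\|z\| \le \sqrt m\,\|J(z)\|_2$, which together are precisely the claimed two‑sided bound. An alternative route for the middle step is to invoke John's characterization of the extremal ellipsoid through its contact points with $\partial K$ and the associated decomposition of the identity, from which the factor $\sqrt m$ follows by Cauchy–Schwarz; either way, the extremal‑ellipsoid inequality $K \subseteq \sqrt m\,\mathcal E$ is the only nontrivial ingredient.
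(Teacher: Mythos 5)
The paper does not prove this statement at all: it is quoted as a classical black-box fact with a citation to John's 1948 paper, so there is no internal proof to compare against. Your proposal is the standard maximal-volume-ellipsoid argument for the centrally symmetric case, and it is correct: the homogenization of $\mathcal{E} \subseteq K \subseteq \sqrt{m}\,\mathcal{E}$ to $\|z\| \leq \|T(z)\|_2 \leq \sqrt{m}\,\|z\|$ is right, and the rescaling $J = \tfrac{1}{\sqrt{m}}T$ does land exactly on the two-sided bound as stated. The one step you leave as a sketch --- that $\mathrm{conv}\bigl(B_2^m \cup \{\pm p\}\bigr)$ with $\|p\|_2 > \sqrt{m}$ contains a centered ellipsoid with semi-axes $a > 1 > b$ and $ab^{m-1} > 1$ --- is indeed the standard computation (the containment constraint is $a^2/s^2 + b^2(1 - 1/s^2) \leq 1$ with $s = \|p\|_2$, and a first-order perturbation of $a = b = 1$ along this constraint increases $ab^{m-1}$ precisely when $s^2 > m$), so nothing essential is missing; this is where central symmetry is used and why the symmetric bound is $\sqrt{m}$ rather than $m$.
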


\subsection{Some useful inequalities}

\begin{fact}[Stirling's approximation] \label{fact:stirling}
	For every $n \in \mathbb{N}$, we have $$ \sqrt{2 \pi} \cdot n^{n+1/2} e^{-n} \leq n! \leq e \cdot n^{n+1/2} e^{-n} .$$
\end{fact}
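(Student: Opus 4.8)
The plan is to take logarithms and study the sequence $a_n := n!\,e^{n}\,n^{-n-1/2}$, for which the claim reads exactly $\sqrt{2\pi}\le a_n\le e$ for every $n\ge 1$. (For $n=1$ it says $\sqrt{2\pi}\le e\le e$, and one simply restricts to the intended range $n\ge1$.) First I would compute the one-step ratio
\[
\frac{a_n}{a_{n+1}} = \frac1e\Big(1+\tfrac1n\Big)^{n+1/2},
\qquad\text{hence}\qquad
\ln\frac{a_n}{a_{n+1}} = \Big(n+\tfrac12\Big)\ln\Big(1+\tfrac1n\Big)-1 .
\]
Substituting $t=\frac1{2n+1}$, so that $\frac{1+t}{1-t}=\frac{n+1}{n}$ and $n+\tfrac12=\frac1{2t}$, the classical series $\ln\frac{1+t}{1-t}=2\sum_{j\ge0}\frac{t^{2j+1}}{2j+1}$ turns this into
\[
\ln\frac{a_n}{a_{n+1}} = \sum_{j\ge1}\frac{t^{2j}}{2j+1},
\]
which is positive (so $(a_n)$ is strictly decreasing) and at most $\frac13\sum_{j\ge1}t^{2j}=\frac{t^2}{3(1-t^2)}=\frac1{12n(n+1)}=\frac1{12}\big(\tfrac1n-\tfrac1{n+1}\big)$ (so $\ln a_n-\frac1{12n}$ is increasing). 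Therefore $(\ln a_n)$ decreases and $(\ln a_n-\frac1{12n})$ increases to a common limit $L$, giving $L\le\ln a_n\le L+\frac1{12n}$ for all $n\ge1$.

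Next I would identify $L$ via the Wallis product. Expanding the central binomial coefficient,
\[
\binom{2n}{n}=\frac{(2n)!}{(n!)^2}=\frac{a_{2n}}{a_n^2}\cdot\frac{\sqrt2\,4^{n}}{\sqrt n},
\]
and using $a_{2n},a_n\to e^{L}$, we get $\binom{2n}{n}\sim e^{-L}\sqrt2\,4^{n}/\sqrt n$. Comparing with the classical estimate $\binom{2n}{n}\sim 4^{n}/\sqrt{\pi n}$ — which follows by a standard telescoping argument from $\frac\pi2=\prod_{k\ge1}\frac{2k}{2k-1}\cdot\frac{2k}{2k+1}$ — forces $e^{-L}\sqrt2=1/\sqrt\pi$, i.e.\ $e^{L}=\sqrt{2\pi}$.

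With $L$ known both inequalities fall out: $\ln a_n\ge L$ gives $a_n\ge\sqrt{2\pi}$, and since $(a_n)$ is decreasing with $a_1=1!\cdot e\cdot1^{-3/2}=e$ we get $a_n\le a_1=e$. Unwinding $a_n=n!\,e^{n}\,n^{-n-1/2}$ gives $\sqrt{2\pi}\,n^{n+1/2}e^{-n}\le n!\le e\,n^{n+1/2}e^{-n}$, as claimed. The main thing to get right is the exact two-sided estimate $0<\big(n+\tfrac12\big)\ln(1+\tfrac1n)-1\le\frac1{12n(n+1)}$, which genuinely needs the $t=\frac1{2n+1}$ substitution together with the geometric bound on the logarithm series (a crude Taylor truncation gives the wrong constant), plus the Wallis-product evaluation of $L$; everything else is routine bookkeeping. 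One could instead invoke the Euler–Maclaurin formula for $\sum_{k=1}^n\ln k$, but the telescoping argument above keeps the proof self-contained.
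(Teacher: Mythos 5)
The paper does not prove this statement at all: it is recorded as a standard Fact (Stirling's approximation) and used as a black box, so there is no internal proof to compare against. Your argument is a correct, essentially self-contained derivation of exactly the stated two-sided bound. The ratio computation $\ln(a_n/a_{n+1})=(n+\tfrac12)\ln(1+\tfrac1n)-1=\sum_{j\ge1}\tfrac{t^{2j}}{2j+1}$ with $t=\tfrac1{2n+1}$ is right, the geometric estimate $\le\tfrac1{12n(n+1)}$ telescopes as you say, so $a_n$ decreases and $a_ne^{-1/(12n)}$ increases to a common limit $e^{L}$, giving $e^{L}\le a_n\le a_1=e$; and the identification $e^{L}=\sqrt{2\pi}$ via $\binom{2n}{n}=\tfrac{a_{2n}}{a_n^2}\cdot\tfrac{\sqrt2\,4^n}{\sqrt n}$ against $\binom{2n}{n}\sim 4^n/\sqrt{\pi n}$ checks out. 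The only external ingredient is the Wallis product (equivalently the central binomial asymptotic), which you correctly flag as classical; since the paper itself treats the whole fact as classical, relying on Wallis at that one point is entirely in keeping with the paper's level of rigor.
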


\begin{fact}[Khintchine's inequality]\label{fact:Khintchine}
	Let $Y \in \{\pm 1\}^n$ be uniformly random. For every integer $k>0$, there exist constants $A_k,B_k>0$ such that for every $x\in \R^n$,
	\begin{align*}
	A_k \|x\|_2 \leq \mathbb{E}\Big[\Big\lvert\sum_{i=1}^n  Y_ix_i\Big\rvert^{k} \Big]^{1/k} \leq B_k \|x\|_2.
	\end{align*}
	We can take $A_1 = \tfrac{1}{\sqrt{2}}$ and $B_k = k!$.
\end{fact}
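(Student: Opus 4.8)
The plan is to treat the two bounds separately. I write $S = \sum_{i=1}^n Y_i x_i$ and note that both claimed inequalities are homogeneous of degree one in $x$. The elementary fact I will use repeatedly is that $k \mapsto \mathbb{E}\big[|S|^k\big]^{1/k}$ is nondecreasing (power-mean monotonicity, i.e.\ Jensen's inequality applied to a suitable power), so for the upper bound it suffices to estimate even moments of $S$, and for the lower bound it suffices to treat $k=1$.

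For the upper bound I would first establish the sub-Gaussian moment estimate $\mathbb{E}[S^{2m}] \le (2m-1)!!\cdot\|x\|_2^{2m}$. Expanding $S^{2m}$ by the multinomial theorem and using that $\mathbb{E}[Y_i^{k_i}]$ vanishes unless $k_i$ is even, only terms with $k_i = 2j_i$ and $j_1+\dots+j_n=m$ survive, giving $\mathbb{E}[S^{2m}] = \sum_{j_1+\dots+j_n=m}\frac{(2m)!}{(2j_1)!\cdots(2j_n)!}\prod_i x_i^{2j_i}$. Since $(2j)! = 2^{j}\,j!\,(2j-1)!! \ge 2^{j}\,j!$, each coefficient is at most $\frac{(2m)!}{2^m m!}\binom{m}{j_1,\dots,j_n}$, and re-summing by the multinomial theorem yields $\mathbb{E}[S^{2m}] \le \frac{(2m)!}{2^m m!}\big(\sum_i x_i^2\big)^m = (2m-1)!!\,\|x\|_2^{2m}$. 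Then for an arbitrary integer $k\ge 1$ I take $m = \lceil k/2\rceil$, so that $k \le 2m$ and $2m-1 \le k$, and conclude $\mathbb{E}\big[|S|^k\big]^{1/k} \le \mathbb{E}[S^{2m}]^{1/(2m)} \le \big((2m-1)!!\big)^{1/(2m)}\|x\|_2 \le (2m-1)!\,\|x\|_2 \le k!\,\|x\|_2$, which is the bound with $B_k=k!$.

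For the lower bound, by the monotonicity above $\mathbb{E}\big[|S|^k\big]^{1/k} \ge \mathbb{E}|S|$ for every $k \ge 1$, so $A_k = \tfrac{1}{\sqrt2}$ works for all $k$ as soon as $\mathbb{E}|S| \ge \tfrac{1}{\sqrt2}\|x\|_2$. A quick but lossy bound already comes from H\"older's inequality: $\|x\|_2^2 = \mathbb{E}[S^2] = \mathbb{E}\big[|S|^{2/3}(S^4)^{1/3}\big] \le \mathbb{E}|S|^{2/3}\,\mathbb{E}[S^4]^{1/3}$, and the exact identity $\mathbb{E}[S^4] = 3\|x\|_2^4 - 2\sum_i x_i^4 \le 3\|x\|_2^4$ then gives $\mathbb{E}|S| \ge 3^{-1/2}\|x\|_2$. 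Upgrading $3^{-1/2}$ to the optimal $2^{-1/2}$ is exactly the best-constant Khintchine inequality in $L^1$, a theorem of Szarek, and it is tight: for $x = (1,1)$ the variable $S = Y_1 + Y_2$ satisfies $\mathbb{E}|S| = 1 = \tfrac{1}{\sqrt2}\|x\|_2$. This sharp constant is the one genuinely non-routine ingredient---the $3^{-1/2}$ bound and the whole upper-bound argument are elementary---so I would invoke Szarek's result (whose proof reduces the problem to a one-parameter extremal question via convexity; Haagerup later gave a Fourier-analytic proof) as a black box.
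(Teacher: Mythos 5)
Your argument is correct, but there is nothing in the paper to compare it against: Fact~\ref{fact:Khintchine} is stated in the preliminaries as a known result and never proved there. Your upper bound is a clean, self-contained derivation: the multinomial expansion with the estimate $(2j)! \ge 2^j\, j!$ gives $\E[S^{2m}] \le (2m-1)!!\,\|x\|_2^{2m}$, and with $m=\lceil k/2\rceil$ the power-mean monotonicity yields $B_k = k!$ (in fact your computation gives the much smaller $\big((2m-1)!!\big)^{1/(2m)}$, but $k!$ is all the paper uses, e.g.\ in Claim~\ref{claim:hbound}). For the lower bound, your reduction to $k=1$ is right, the H\"older/fourth-moment computation giving $3^{-1/2}$ is correct, and you are also right that the stated constant $A_1 = \tfrac{1}{\sqrt 2}$ is exactly Szarek's sharp $L^1$ constant, so citing that theorem (or one of its shorter later proofs, e.g.\ Haagerup or Lata{\l}a--Oleszkiewicz) is the honest way to get it; this is consistent with the paper, which itself treats the whole inequality as a black box. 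Worth noting: for the paper's application in Lemma~\ref{lem:fvsg} any absolute constant in place of $\tfrac{1}{\sqrt2}$ would do, so your elementary $3^{-1/2}$ bound already suffices for everything the paper derives from this fact; the sharp constant is only needed to justify the literal statement as written.
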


\begin{lemma}[Chernoff bound] \label{lem:ChernovBound}
	Let $X_1,\ldots,X_n \in [-1,1]$ be independent random variables with $\E[X_i]=\mu$
	for all $i \in [n]$. For every $\lambda \geq 0$, 
	\[
	\Pr\Big[ \Big| \mu - \tfrac{1}{n}\sum_{i=1}^n X_i \Big| \geq \lambda \Big] \leq 2\exp(-\lambda^2n/2) .
	\]
	
\end{lemma}

\begin{fact}[Bernstein's inequality~\cite{bernstein}] 
	\label{fact:bern}
	Let $X_1, \dots, X_n$ be independent zero-mean random variables with $|X_i| \le M$ for all $i \in [n]$. For every $t > 0$,
	\[
	\Pr\Big[\Big|\sum_{i=1}^n X_i\Big| \ge t\Big] \le 2 \exp\Big(-\frac{t^2}{2\sum_{i=1}^n \E[X_i^2] + \frac{2}{3} Mt}\Big).
	\]
\end{fact}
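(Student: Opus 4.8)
The plan is to prove this by the exponential moment (Chernoff--Cram\'er) method and then symmetrize. First I would fix a parameter $\lambda \in (0, 3/M)$, to be chosen later, and apply Markov's inequality to the nonnegative random variable $e^{\lambda \sum_i X_i}$:
\[
\Pr\Big[\sum_{i=1}^n X_i \ge t\Big] \le e^{-\lambda t}\,\E\Big[e^{\lambda \sum_{i=1}^n X_i}\Big] = e^{-\lambda t}\prod_{i=1}^n \E\big[e^{\lambda X_i}\big],
\]
where independence lets us factor the expectation. The core of the argument is a good upper bound on each moment generating function $\E[e^{\lambda X_i}]$ in terms of the variance proxy $\E[X_i^2]$ and the bound $M$.

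For that, I would Taylor-expand $e^{\lambda X_i} = 1 + \lambda X_i + \sum_{k \ge 2} \lambda^k X_i^k/k!$, take expectations so that the linear term drops (using $\E[X_i] = 0$), and bound $|\E[X_i^k]| \le M^{k-2}\E[X_i^2]$ for $k \ge 2$ (since $|X_i^k| \le M^{k-2}X_i^2$). Combined with the elementary estimate $k! \ge 2\cdot 3^{k-2}$ this yields
\[
\E\big[e^{\lambda X_i}\big] \le 1 + \frac{\E[X_i^2]}{M^2}\big(e^{\lambda M} - 1 - \lambda M\big) \le 1 + \frac{\lambda^2\,\E[X_i^2]/2}{1 - \lambda M/3} \le \exp\Big(\frac{\lambda^2\,\E[X_i^2]/2}{1-\lambda M/3}\Big),
\]
where the last step uses $1 + u \le e^u$. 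Writing $\sigma^2 := \sum_{i=1}^n \E[X_i^2]$ and multiplying over $i$, I obtain $\Pr[\sum_i X_i \ge t] \le \exp\big(-\lambda t + \tfrac{\lambda^2 \sigma^2/2}{1 - \lambda M/3}\big)$.

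The final step is to choose $\lambda = t/(\sigma^2 + Mt/3)$, which lies in $(0, 3/M)$ and satisfies $1 - \lambda M/3 = \sigma^2/(\sigma^2 + Mt/3)$, so that $\tfrac{\lambda^2\sigma^2/2}{1-\lambda M/3} = \lambda t/2$ and the exponent collapses to $-\lambda t/2 = -t^2/(2\sigma^2 + \tfrac23 Mt)$. This gives the one-sided tail; applying the same reasoning to $-X_1,\dots,-X_n$ (again zero-mean and bounded by $M$) and taking a union bound over the two events yields the stated two-sided inequality with the factor $2$. The argument is entirely routine and I expect no real obstacle: the only points needing a moment's care are the inequality $e^x - 1 - x \le \tfrac{x^2/2}{1 - x/3}$ for $x \in (0,3)$, which follows by comparing $\sum_{k\ge2} x^k/k!$ term-by-term with $\tfrac{x^2}{2}\sum_{j\ge0}(x/3)^j$ via $k! \ge 2\cdot 3^{k-2}$, and verifying that the chosen $\lambda$ is a valid point of $(0,3/M)$ achieving the claimed exponent. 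Since the result is classical, one could alternatively just cite a standard reference; but the derivation above is fully self-contained.
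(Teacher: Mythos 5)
Your proof is correct, and the constants come out exactly as stated: the moment bound $|\E[X_i^k]|\le M^{k-2}\E[X_i^2]$, the elementary estimate $k!\ge 2\cdot 3^{k-2}$ giving $e^x-1-x\le \frac{x^2/2}{1-x/3}$ on $(0,3)$, and the choice $\lambda=t/(\sigma^2+Mt/3)$ (which indeed lies in $(0,3/M)$ whenever $\sigma^2>0$; the case $\sigma^2=0$ is trivial since then all $X_i$ vanish almost surely) do collapse the exponent to $-t^2/(2\sigma^2+\tfrac23 Mt)$, and symmetrization plus a union bound supplies the factor $2$. There is nothing to compare against in the paper: Bernstein's inequality appears there only as Fact~\ref{fact:bern}, stated with a citation and used as a black box, so the paper gives no proof at all. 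Your self-contained Chernoff--Cram\'er derivation is the standard one and is a perfectly valid substitute for the citation; in a paper one would normally just cite it, as the authors do, but as a verification exercise your argument is complete.
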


\begin{fact}[Minkowski's inequality] \label{fact:MinkowskiIneq}
  Let $1 \leq p < \infty$, let $\| \cdot \| : \R^m \to \R_{\geq 0}$ be a norm and let $X,Y$ be jointly
  distributed random variables on $\R^m$ so that $\E[\|X\|^p],\E[\|Y\|^p] < \infty$.
  Then $$\E[\|X + Y\|^p]^{1/p} \leq
  \E[(\|X\| + \|Y\|)^p]^{1/p} \leq
   \E[\|X\|^p]^{1/p} + \E[\|Y\|^p]^{1/p}.$$
\end{fact}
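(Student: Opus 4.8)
The plan is to reduce everything to the classical scalar Minkowski inequality applied to the nonnegative real random variables $U := \|X\|$ and $V := \|Y\|$. For the first inequality, the triangle inequality for the norm gives $\|X+Y\| \le \|X\| + \|Y\| = U + V$ pointwise; since $t \mapsto t^p$ is nondecreasing on $[0,\infty)$ this upgrades to $\|X+Y\|^p \le (U+V)^p$ pointwise, and taking expectations (monotone) and then $p$-th roots (monotone) yields $\E[\|X+Y\|^p]^{1/p} \le \E[(U+V)^p]^{1/p}$. All the quantities involved are finite: convexity of $t \mapsto t^p$ gives $(U+V)^p \le 2^{p-1}(U^p + V^p)$, and $\E[U^p] = \E[\|X\|^p]$ and $\E[V^p] = \E[\|Y\|^p]$ are finite by hypothesis.

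For the second inequality I would prove $\E[(U+V)^p]^{1/p} \le \E[U^p]^{1/p} + \E[V^p]^{1/p}$ for arbitrary nonnegative real $U,V$ with finite $p$-th moments. The case $p = 1$ is immediate from linearity of expectation, so assume $p > 1$ and let $q = p/(p-1)$ be the conjugate exponent. Write $(U+V)^p = U \cdot (U+V)^{p-1} + V \cdot (U+V)^{p-1}$ and take expectations. To each of the two terms apply Hölder's inequality with exponents $p$ and $q$; since $(p-1)q = p$ this gives
$$\E\big[U (U+V)^{p-1}\big] \le \E[U^p]^{1/p} \cdot \E[(U+V)^p]^{1/q},$$
and the analogous bound with $U$ replaced by $V$. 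Adding,
$$\E[(U+V)^p] \le \big(\E[U^p]^{1/p} + \E[V^p]^{1/p}\big) \cdot \E[(U+V)^p]^{1/q}.$$
If $\E[(U+V)^p] = 0$ the claim is trivial; otherwise it is positive and finite (again by $(U+V)^p \le 2^{p-1}(U^p+V^p)$), so dividing both sides by $\E[(U+V)^p]^{1/q}$ and using $1 - \tfrac{1}{q} = \tfrac{1}{p}$ gives $\E[(U+V)^p]^{1/p} \le \E[U^p]^{1/p} + \E[V^p]^{1/p}$, as desired.

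It remains to supply Hölder's inequality $\E[|AB|] \le \E[|A|^p]^{1/p}\,\E[|B|^q]^{1/q}$, which can be deduced from Jensen's inequality (Fact~\ref{fact:jensen}): first derive Young's inequality $ab \le \tfrac{a^p}{p} + \tfrac{b^q}{q}$ for $a,b\ge 0$ from convexity of $\exp$ (equivalently concavity of $\log$), then apply it with $a = |A|/\E[|A|^p]^{1/p}$ and $b = |B|/\E[|B|^q]^{1/q}$ and take expectations, the cases where a denominator vanishes being trivial. The main obstacle in the whole argument is not any single estimate but the integrability bookkeeping: one must check $\E[(U+V)^p] < \infty$ before dividing by it and treat the degenerate case $\E[(U+V)^p] = 0$ separately, both of which are handled by the elementary convexity bound $(U+V)^p \le 2^{p-1}(U^p+V^p)$.
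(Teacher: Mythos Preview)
Your argument is correct and is the standard textbook proof of Minkowski's inequality. The paper itself states this as a Fact without proof (it is a classical result cited as background), so there is no paper proof to compare against; your write-up would serve perfectly well as a supplied proof.
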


\subsection{Real-rooted polynomials}
\label{sec:realrooted}

%\begin{definition}
	A univariate polynomial $p(z)$ over $\R$ is \emph{real-rooted} if for all $w \in \C$, the equality $p(w)=0$ implies that $w \in \R$.
%\end{definition}
Newton's inequality implies that the coefficients of real-rooted polynomials are log-concave.
%\begin{definition}
	A sequence $c_0, \dots, c_d$ is \emph{log-concave} if $c_j^2 \ge c_{j-1} \cdot c_{j+1}$ for  $j \in [d-1]$.
%\end{definition}

\begin{fact}[e.g.~\cite{unimodal}]
	\label{fact:logconcave}
	Let $p(z) = \sum_{j=0}^d c_j z^j$ be a real-rooted polynomial with real coefficients. Then the sequence $c_0,\dotsc,c_d$  is log-concave.
\end{fact}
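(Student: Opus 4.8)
The plan is to recover this as Newton's inequalities, proved from scratch using two elementary closure properties of real-rooted polynomials. We may assume $p$ has degree exactly $d$ (i.e. $c_d\neq 0$): otherwise apply the result to the polynomial of true degree, and note that any inequality $c_k^2\geq c_{k-1}c_{k+1}$ involving a vanishing leading coefficient holds trivially. Likewise, for a fixed $j\in[d-1]$ the inequality $c_j^2\geq c_{j-1}c_{j+1}$ is immediate whenever $c_{j-1}c_{j+1}\leq 0$, since $c_j^2\geq 0$; so throughout we may assume $c_{j-1}$ and $c_{j+1}$ are nonzero and of the same sign. The two closure properties I will use are: \textbf{(D)} if $q$ is real-rooted then so is $q'$, which follows from a root count — listing the distinct roots of $q$ with multiplicities, Rolle's theorem gives a root of $q'$ strictly between consecutive ones, while a root of $q$ of multiplicity $m\geq 2$ is a root of $q'$ of multiplicity $m-1$, and together these already account for $\deg q'$ real roots; and \textbf{(R)} if $q(z)=\sum_{i=0}^{e}a_iz^i$ is real-rooted of degree $e$ with $a_0\neq 0$, then its reversal $\tilde q(z):=z^{e}q(1/z)=\sum_{i=0}^{e}a_{e-i}z^{i}$ is real-rooted of degree $e$, since its roots are exactly the reciprocals of the (necessarily nonzero) roots of $q$.

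The core of the argument is a reduction to the quadratic case. Fix $j\in[d-1]$. Differentiating $p$ exactly $j-1$ times yields, by (D) and because $c_d\neq 0$, a real-rooted polynomial of degree $d-j+1$ whose coefficients of $z^0,z^1,z^2$ are $(j-1)!\,c_{j-1}$, $j!\,c_j$, $\tfrac{(j+1)!}{2}c_{j+1}$. Its constant term $(j-1)!\,c_{j-1}$ is nonzero, so (R) applies and produces a real-rooted polynomial of degree $d-j+1$ whose top three coefficients (those of $z^{d-j+1},z^{d-j},z^{d-j-1}$) are $(j-1)!\,c_{j-1}$, $j!\,c_j$, $\tfrac{(j+1)!}{2}c_{j+1}$. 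Differentiating this $d-j-1$ further times (legitimate since $j\leq d-1$) leaves, again by (D), a real-rooted \emph{quadratic} $Az^2+Bz+C$ — genuine, since $A$ is a nonzero multiple of $c_{j-1}$ — with
\[
A=\tfrac{(d-j+1)!}{2}(j-1)!\,c_{j-1},\qquad B=(d-j)!\,j!\,c_j,\qquad C=(d-j-1)!\,\tfrac{(j+1)!}{2}\,c_{j+1}.
\]
A real-rooted quadratic has nonnegative discriminant, so $B^2\geq 4AC$; cancelling factorials this reads
\[
c_j^2\ \geq\ \frac{(d-j+1)(j+1)}{j(d-j)}\,c_{j-1}c_{j+1}\ =\ \Bigl(1+\frac{d+1}{j(d-j)}\Bigr)c_{j-1}c_{j+1}.
\]
Since the bracketed factor exceeds $1$ and $c_{j-1}c_{j+1}>0$ in the remaining case, we conclude $c_j^2\geq c_{j-1}c_{j+1}$. (Equivalently one could read the discriminant inequality as the normalized Newton inequality $\bigl(c_j/\binom{d}{j}\bigr)^2\geq \bigl(c_{j-1}/\binom{d}{j-1}\bigr)\bigl(c_{j+1}/\binom{d}{j+1}\bigr)$ and then invoke log-concavity of binomial coefficients, but the direct bookkeeping avoids that step.)

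I expect no real obstacle: the only content is (D) and (R), both immediate from Rolle's theorem and the reciprocal-roots observation respectively. The one point that needs care is that the reversal in the middle step must be applied to a polynomial with nonzero constant term, which is exactly why the degenerate configurations (some $c_{j-1}$ or $c_{j+1}$ zero, or more generally $c_{j-1}c_{j+1}\leq 0$) are isolated up front and dispatched trivially — and these are precisely the situations in which the asserted inequality could otherwise fail. Everything else — propagating the three relevant coefficients through the iterated differentiations and the reversal, and simplifying the ratio of factorials — is purely mechanical.
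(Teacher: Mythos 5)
The paper never proves Fact~\ref{fact:logconcave}: it is quoted, with a citation to Br\"and\'en's survey, as a known consequence of Newton's inequalities, so there is no in-paper argument to compare against. Your proposal is a correct, self-contained proof, and it is essentially the classical derivation of Newton's inequalities: iterate the Rolle/Gauss--Lucas-type closure under differentiation, use the reciprocal (coefficient-reversal) closure, reduce to a genuine real-rooted quadratic, and read off the discriminant. I checked the bookkeeping: the three surviving coefficients $A,B,C$ are as you state, $B^2\ge 4AC$ simplifies to $c_j^2 \ge \frac{(j+1)(d-j+1)}{j(d-j)}\,c_{j-1}c_{j+1}$ with the factor exceeding $1$, and your up-front case split (degree-deficient $p$, and $c_{j-1}c_{j+1}\le 0$, in particular $c_{j-1}=0$) is exactly what is needed both to legitimize the reversal step (nonzero constant term) and to cover the paper's definition of log-concavity, which does not assume the coefficients are nonzero or of a single sign. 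Two small remarks: your parenthetical that the degenerate configurations are ``precisely the situations in which the asserted inequality could otherwise fail'' overstates things --- the inequality is trivially true there, it is only the method that needs the nondegeneracy; and in the paper's sole application (Theorem~\ref{thm:polyalternate}, where all roots are positive so the coefficients alternate in sign and are nonzero) only your main case arises, and your stronger normalized inequality would serve just as well. What your route buys over the paper's is a short elementary proof from Rolle's theorem alone, at the cost of re-deriving a textbook fact the paper deliberately outsources.
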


An important consequence is that the magnitudes of the coefficients of such polynomials are \emph{unimodal}.
	A sequence $a_0,\ldots,a_d$ is unimodal if 
	there is an index $m$ such that 
	$$a_0 \leq a_1 \leq \dotsc \leq a_m \geq a_{m+1} \geq \dotsc \geq a_d.$$

\begin{fact}
	\label{fact:unimodality}
	If $c_0, \dots, c_d$ is a log-concave sequence of positive numbers, then
	it is unimodal. 
\end{fact}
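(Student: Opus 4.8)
The plan is to argue by contradiction: suppose $c_0,\dots,c_d$ is a log-concave sequence of positive reals that fails to be unimodal. Unimodality says that once the sequence starts strictly decreasing it can never increase again; so the negation is that there exist indices $i < j$ with $c_{i-1} > c_i$ (a "descent" just before index $i$) and $c_j < c_{j+1}$ (an "ascent" just after index $j$), with the descent occurring at a position no later than the ascent. More simply, I would capture the obstruction locally: if the sequence is not unimodal, there must be some index $k$ with $c_{k-1} > c_k$ and $c_k < c_{k+1}$, i.e.\ a strict local minimum in the interior. I would first reduce to exhibiting such a local minimum, which is a short combinatorial observation about sequences.

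Next I would derive the contradiction from log-concavity at that index. Log-concavity gives $c_k^2 \ge c_{k-1} c_{k+1}$. Since all terms are positive, I can divide and obtain $c_k / c_{k+1} \ge c_{k-1}/c_k$. But $c_{k-1} > c_k$ forces $c_{k-1}/c_k > 1$, and $c_k < c_{k+1}$ forces $c_k/c_{k+1} < 1$; chaining these gives $1 > c_k/c_{k+1} \ge c_{k-1}/c_k > 1$, a contradiction. Hence no interior strict local minimum exists.

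To finish, I would spell out why the absence of an interior strict local minimum yields unimodality. Let $m$ be the largest index at which the maximum value of the sequence is attained. For indices below $m$, I claim $c_0 \le c_1 \le \dots \le c_m$: if some $c_{k-1} > c_k$ with $k \le m$, then walking forward from $k$ the sequence must eventually climb back up to reach the maximum at $m$, producing an index where it switches from non-increasing to increasing, i.e.\ a strict local minimum — contradiction. Symmetrically, for indices above $m$ the sequence is non-increasing, else we again produce a strict local minimum, or we contradict the maximality of $m$ among indices achieving the max. This gives $c_0 \le \dots \le c_m \ge \dots \ge c_d$, which is exactly unimodality.

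The only mildly delicate step is the last one — converting "no interior strict local minimum" into the global monotone-up-then-monotone-down structure — and the cleanest way to handle it is to take $m$ to be (say) the least index achieving the maximum and then argue separately on $\{0,\dots,m\}$ and $\{m,\dots,d\}$ that the sequence cannot contain a descent followed later by an ascent. I do not expect any real obstacle; the whole argument is elementary and the positivity hypothesis is used exactly once, to justify dividing the log-concavity inequality. One could alternatively phrase the proof directly via the ratios $r_k = c_k/c_{k-1}$, which are non-increasing by log-concavity, and observe that a non-increasing ratio sequence crosses the value $1$ at most once, immediately yielding unimodality; I would likely present this ratio-monotonicity version as it is the slickest.
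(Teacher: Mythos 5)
The ratio-monotonicity version you mention at the very end is exactly the paper's proof: positivity lets you divide the log-concavity inequality to get that the consecutive ratios $c_{j+1}/c_j$ are non-increasing, and taking $m$ to be the largest index with $c_m/c_{m-1}\ge 1$ immediately gives $c_0\le\dots\le c_m\ge\dots\ge c_d$. If you present that version, you are done and your argument coincides with the paper's.

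Your primary argument, however, has a genuine (though easily repaired) gap: the reduction ``not unimodal $\Rightarrow$ there is an index $k$ with $c_{k-1}>c_k$ and $c_k<c_{k+1}$'' is false for general sequences because of ties. The sequence $3,2,2,3$ is not unimodal, yet it has no strict interior local minimum; equivalently, ``no strict interior local minimum $\Rightarrow$ unimodal,'' which your third paragraph relies on, fails for the same example. (That sequence is of course not log-concave, but your combinatorial reduction is asserted for arbitrary sequences, so the proof as written does not go through.) The repair is to work with a one-sided-strict local minimum: failure of unimodality gives indices $i\le j$ with $c_{i-1}>c_i$ and $c_j<c_{j+1}$; letting $k$ be the largest minimizer of the sequence on $\{i,\dots,j\}$ yields $c_{k-1}\ge c_k$ and $c_k<c_{k+1}$, and then positivity and log-concavity give $c_k^2\ge c_{k-1}c_{k+1}\ge c_k\,c_{k+1}>c_k^2$, a contradiction. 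With that adjustment (or by simply adopting the ratio argument), your proof is correct.
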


\begin{proof}
	Since $c_j > 0$, it follows that $c_j/c_{j-1} \ge c_{j+1}/c_j$ for $j \in \{1, \dots, d-1\}$, that is, the sequence of consecutive ratios is non-increasing. Thus if $m$ is the largest index with $c_{m}/c_{m-1} \ge 1$, it follows that $c_0 \le c_1 \le \dots \le c_m \ge c_{m+1} \ge \dots \ge c_d$.
\end{proof}

%\begin{fact} \label{fact:tri} For any norm $\| \cdot \|$, if $A,B$ are random vectors, we have the triangle inequality
%$$\Expect{\|A + B\|^2}^{1/2} \leq \Expect{\|A\|^2}^{1/2} + \Expect{\|B\|^2}^{1/2}.$$
%\end{fact}
%\begin{proof}
%	We have 
%	\begin{align*}
%	\Expect{\|A + B\|^2}^{1/2}  \leq \Expect{(\|A\| + \|B\|)^2}^{1/2} 
%	\leq \Expect{\|A\|^2}^{1/2} + \Expect{\|B\|^2}^{1/2},
%	\end{align*}
%	where the first inequality comes from applying the triangle inequality for the norm $\| \cdot \|$, and the second comes from applying the triangle inequality for the Euclidean norm.
%\end{proof}

%\begin{fact}
%	$$ \Expect{\|f(X) \|_2^2}^{1/2}  = \sqrt{\sum_S \|\hat{f}(S)\|_2^2}.$$
%\end{fact}
%\begin{proof}
%	
%	\begin{align*}
%	&\Expect{\|f(X) \|_2^2}^{1/2} \\
%	& = \Expect{\left \langle \sum_S \hat{f}(S)  \cdot  \prod_{k \in S} X_k, \sum_T \hat{f}(T)  \cdot \prod_{k \in T} X_{k'} \right  \rangle}^{1/2} \\
%	& = (\sum_S \|\hat{f}(S)\|^2)^{1/2}.
%	\end{align*}
%\end{proof}

\section{Bounds on $\| f_\ell \|_\infty$} 
\label{sec:boundedfunctions1}
 In this section, we prove Theorem~\ref{thminftybound}  assuming Theorem~\ref{thm:filter}. We start by constructing a proxy $P: \{\pm 1\}^n \rightarrow \R$ that filters out $f_\ell$ from $f$ by convolution. If $d = \ell \mod 2$, we use the parameters $(d,\ell)$ to obtain $\phi$ as in Theorem~\ref{thm:filter}. If $d \neq \ell \mod 2$, we use the parameters $(d-1,\ell)$ to obtain $\phi$. 
 The proxy is defined as
\begin{align*}
P(x) := \Ex{\theta}{\phi(\theta)\cdot \prod_{j=1}^{n}\big(1+x_j\cos\theta\big)}.
\end{align*}
Property~\eqref{filtereq} implies that for any $S\subseteq [n]$,
\begin{align*}
\hat{P}(S)  = \begin{cases}
1 & \text{if $|S| =\ell$,} \\
0 &\text{if $|S| \neq \ell$, $ |S| \leq d$.}
\end{cases}
\end{align*}
Since $f$ has degree $d$,  Fact~\ref{fact:f*gS} implies that $f_\ell = f*P$. 
Because $f$ is bounded, 
for every $x \in \{\pm 1\}^n$,
\begin{align*}
|f_\ell(x)|= \big |\Expect{P(Z) \cdot f(x\odot Z)} \big| \leq \Expect{|P(Z)|}  ,
\end{align*}
where $Z \sim \{\pm 1\}^n$ is uniform.
%Thus, for every $x$, since $f$ is bounded, we have
%\begin{align*}
%|f_\ell(x)|= \Big |\Ex{Z}{P(Z) \cdot f(x\odot Z)} \Big| \leq \Big |\Ex{Z}{|P(Z)|} \Big|  \leq  \begin{cases}
%|C(d,\ell)| & \text{when $d = \ell \mod 2$,}\\
%|C(d-1,\ell)| & \text{otherwise.}
%\end{cases},
%\end{align*}
%
%Moreover, when $X$ is uniform in $\{\pm 1\}^n$, we have
Finally, 
\begin{align*}
\Expect{|P(Z)|} 
&\leq \E \Big [|\phi(\theta) |\cdot \Big |\prod_{j=1}^{n}(1+Z_j\cos\theta)\Big| \Big ] & \text{triangle inequality}\\
&=\E \Big [|\phi(\theta)| \cdot \prod_{j=1}^{n}(1+Z_j\cos(\theta)) \Big] &\text{$(1+Z_j \cos(\theta)) \geq 0$}\\
&=\E [|\phi(\theta)|] &\text{$\E[Z_j]=0$}\\
&\leq \begin{cases}
|C(d,\ell)| & \text{when $d = \ell \mod 2$,}\\
 |C(d-1,\ell)| & \text{otherwise.}
\end{cases} & \text{Theorem~\ref{thm:filter}}
\end{align*}

\section{Bounds on $\|\hat{f}_\ell \|_1$ \label{sec:boundedfunctions2}}
Here we prove Theorem~\ref{thml1bound} assuming Theorem~\ref{thminftybound}. 
The proof is by induction on $\ell$. % The case $\ell=0$ is trivial, since $|\hat{f}(\emptyset)| = |\Expect{f(X)}| \leq 1$.
%Similarly, the case $\ell=n$ is trivial. 
When $\ell=1$, we have $ \| \hat{f}_1\|_1 = \| f_1 \|_\infty$, because we can pick an input $x$ for which $f_1(x) = \sum_{i=1}^n \hat{f}(\{i\}) x_i= \| \hat{f}_1\|_1$. And  Theorem~\ref{thminftybound} implies $\|f_1\|_\infty \leq d$.
%, proving the required bound. 

For the induction step, let $\ell > 1$.
%Now suppose $1<\ell <n$. If $\ell \geq \tfrac{n}{2}$, then 
%we can use Parseval's identity and the Cauchy-Schwarz
%inequality to bound
%$$\|\hat{f}_\ell \|_1 \leq \| \hat{f} \|_1 \leq 2^{\frac{n}{2}} \cdot \|f\|_2 \leq  
%n^{\frac{\ell-1}{2}} e^{{\ell+2 \choose 2}}.$$
%So we may assume that $\ell < \tfrac{n}{2}$. 
We apply a random restriction to $f_\ell$, and use induction on the degree-$(\ell-1)$ homogenous part of the restricted function. Let $Q \subseteq [n]$ be a subset of the variables sampled by including each variable independently with probability $\tfrac{1}{\ell}$, and let $Z \sim \{\pm 1\}^n$ be uniformly random
and independent of $Q$.
The random restriction of $f_\ell$ is
$$ g(x) :=  \sum_{S \subseteq [n]} \hat{f}_\ell(S) \cdot  \chi_{S \setminus Q}(x) \cdot \chi_{S \cap Q}(Z).$$
The main idea is to relate $\| \hat{f}_\ell \|_1$
and $ \| \hat{g}_{\ell-1}\|_1$.
\begin{lemma}
\label{lem:fvsg}
$\| \hat{f}_\ell \|_1 \leq  e \cdot \sqrt{\frac{ 2n}{\ell}}\cdot \Expect{ \| \hat{g}_{\ell-1} \|_1}$.
\end{lemma}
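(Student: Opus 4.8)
The plan is to compute the degree-$(\ell-1)$ Fourier coefficients of the random restriction $g$ explicitly, bound each of them from below using Khintchine's inequality over the random signs $Z$, convert back to an $\ell_1$-type quantity with the Cauchy--Schwarz inequality, and finally average over the random restriction set $Q$; the degree bound on $f$ enters only through the fact that $f_\ell$ is supported on sets of size exactly $\ell$.

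First I would identify $\hat g_{\ell-1}$. Since a set $S$ with $|S|=\ell$ restricts to the monomial $\chi_T$ with $|T|=\ell-1$ precisely when $T\cap Q=\emptyset$ and $S=T\cup\{i\}$ for some $i\in Q$, unwinding the definition of $g$ gives $\hat g_{\ell-1}(T)=\mathbb 1[T\cap Q=\emptyset]\cdot\sum_{i\in Q}\hat f_\ell(T\cup\{i\})\,Z_i$ for every such $T$. Fixing $Q$ and taking the expectation over $Z\sim\{\pm1\}^n$, the lower bound in Fact~\ref{fact:Khintchine} (with $A_1=1/\sqrt2$) gives $\Expect{|\hat g_{\ell-1}(T)| \mid Q}\ge\frac{1}{\sqrt2}\big(\sum_{i\in Q}\hat f_\ell(T\cup\{i\})^2\big)^{1/2}$ for each $T$ disjoint from $Q$. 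Applying Cauchy--Schwarz to this $|Q|$-coordinate vector (the case $|Q|=0$ is trivial since then $g=f_\ell$ and $\hat g_{\ell-1}\equiv0$) and reindexing the double sum over $(T,i)$ by $S:=T\cup\{i\}$ --- noting that every such pair produces a size-$\ell$ set $S$ with $S\cap Q=\{i\}$, and each size-$\ell$ set with $|S\cap Q|=1$ arises exactly once --- yields
$$\Expect{\|\hat g_{\ell-1}\|_1 \mid Q}\;\ge\;\frac{1}{\sqrt2\,\sqrt{|Q|}}\sum_{\substack{S:\,|S|=\ell\\ |S\cap Q|=1}}|\hat f_\ell(S)|.$$

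It then remains to average over $Q$. For each fixed $S$ with $|S|=\ell$ I would compute $\Expect{|Q|^{-1/2}\,\mathbb 1[|S\cap Q|=1]}=\Pr[|S\cap Q|=1]\cdot\Expect{|Q|^{-1/2} \mid |S\cap Q|=1}$: the first factor equals $(1-1/\ell)^{\ell-1}\ge e^{-1}$, and conditioned on $|S\cap Q|=1$ one has $|Q|=1+\mathrm{Bin}(n-\ell,1/\ell)$ with mean $n/\ell$, so Jensen's inequality for the convex map $t\mapsto t^{-1/2}$ gives $\Expect{|Q|^{-1/2} \mid |S\cap Q|=1}\ge\sqrt{\ell/n}$. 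Summing over $S$ gives $\Expect{\|\hat g_{\ell-1}\|_1}\ge\frac{1}{e\sqrt2}\sqrt{\ell/n}\,\|\hat f_\ell\|_1$, which rearranges to the claimed inequality.

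The step I expect to be the main obstacle is the averaging over $Q$: the factor $|Q|^{-1/2}$ produced by Cauchy--Schwarz is correlated with the event $\{|S\cap Q|=1\}$, so one cannot simply pull it out or replace $|Q|$ by its unconditional mean $n/\ell$; conditioning on that event first and then applying Jensen to the conditional law of $|Q|$ is exactly what produces the $\sqrt\ell$ gain over the crude bound $|Q|\le n$. A minor auxiliary point is the elementary inequality $(1-1/\ell)^{\ell-1}\ge e^{-1}$ for $\ell\ge2$, which follows from the convexity estimate $(1-t)\ln(1-t)+t\ge0$ on $(0,1)$.
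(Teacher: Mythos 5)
Your proposal is correct and follows essentially the same route as the paper: compute $\hat g_{\ell-1}$ explicitly, lower-bound it via Khintchine over $Z$, pass to an $\ell_1$ sum by Cauchy--Schwarz with the factor $|Q|^{-1/2}$, and then average over $Q$ using Jensen for $t\mapsto t^{-1/2}$ together with $(1-1/\ell)^{\ell-1}\ge e^{-1}$ and the conditional mean $n/\ell$. The only difference is bookkeeping: you condition on the event $\{|S\cap Q|=1\}$ per set $S$, whereas the paper works with pairs $(U,j)$ and the auxiliary variables $S\ge S_j$; the constants and structure are identical.
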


Before proving the lemma, we use it to complete the proof.
Since $g/ \|f_\ell \|_\infty$ is bounded and of degree at most $\ell$, 
\begin{align*}
\| \hat{f}_\ell \|_1 &\leq  e \cdot  \sqrt{\frac{ 2n}{\ell}}\cdot \| f_\ell \|_\infty \cdot n^{\frac{\ell -2}{2}} \ell^{\ell-1} e^{\binom{\ell}{2}} &
\text{induction \& Lemma~\ref{lem:fvsg}}\\
&\leq   e \cdot \sqrt{\frac{ 2n}{\ell}}\cdot \frac{d^\ell}{\ell!} \cdot n^{\frac{\ell -2}{2}} \ell^{\ell-1}  e^{\binom{\ell}{2}} & \text{Theorem \ref{thminftybound}}\\
&\leq     e \cdot \sqrt{\frac{2 }{\ell}}\cdot \frac{d^\ell \cdot  e^{\ell}}{\sqrt{2 \pi \ell} \cdot  \ell^\ell} \cdot n^{\frac{\ell -1}{2}} \ell^{\ell-1}  e^{\binom{\ell}{2}} & \text{Fact~\ref{fact:stirling}}\\
&\leq   \frac{e  }{\sqrt{\pi} \cdot \ell^2}\cdot d^\ell \cdot n^{\frac{\ell -1}{2}}   e^{\binom{\ell}{2}+ \ell} & \\
&\leq   d^\ell  \cdot n^{\frac{\ell -1}{2}} \cdot   e^{\binom{\ell+1}{2}}. & 
\end{align*} 

\begin{proof}[Proof of Lemma~\ref{lem:fvsg}]
Start by fixing a set $U \subseteq [n]$ of size $\ell-1$.
Denote by $1_{Q \cap U =\emptyset}$ the indicator random variable for the event
that $Q \cap U = \emptyset$.
The corresponding coefficient in $g$ is
\begin{align*}
\hat{g}(U) = 1_{Q \cap U =\emptyset} \cdot   \sum_{j \in Q \setminus U} Z_j  \cdot \hat{f}_\ell(U \cup \{j\}) .
\end{align*}
We first fix $Q$ and take the expectation over $Z$.
Denote by $S$ the random variable that is zero
if $Q = \emptyset$ and is equal to $\frac{1}{\sqrt{|Q|}}$
when $Q$ is not empty.
For $j \not \in U$,
let $S_j$ be the random variable that is zero if $j \not \in Q$
and is $\frac{1}{\sqrt{|Q|}}$ when $j \in Q$.
For every $Q$, we can bound
\begin{align*}
\E_{Z}[|\hat{g}(U)|] &= 1_{Q \cap U =\emptyset} \cdot  \Ex{Z}{ \Big | \sum_{j \in Q \setminus U} Z_j \cdot \hat{f}_\ell(U \cup \{j\})\Big |} &\\
& \geq  1_{Q \cap U =\emptyset} \cdot \sqrt{\frac{1}{2}  \sum_{j \in Q\setminus U} \hat{f}_\ell(U \cup \{j\})^2} & \text{Fact~\ref{fact:Khintchine}}\\
& \geq \frac{1}{\sqrt{2}} \cdot 1_{Q \cap U =\emptyset} \cdot S\cdot  \sum_{j \in Q\setminus U} |\hat{f}_\ell(U \cup \{j\})| & \text{Cauchy-Schwarz} \\
& \geq \frac{1}{\sqrt{2}}\cdot 1_{Q \cap U =\emptyset}  \cdot \sum_{j \not \in U} S_j \cdot |\hat{f}_\ell(U \cup \{j\})| .
& \text{$S \geq S_j$}
\end{align*}
We now take the expectation over $Q$ as well:
\begin{align*}
\Ex{Z,Q}{|\hat{g}(U)|} 
& \geq \frac{1}{\sqrt{2}}
\sum_{j \not \in U} |\hat{f}_\ell(U \cup \{j\})| \cdot 
\E_{Q}[1_{Q \cap U =\emptyset} \cdot  S_j] . 
\end{align*}
%By symmetry, $p: = \E_{Q}[I_U I_j]$
%does not depend on $j \not \in U$.
%So we can bound
%\begin{align*}
%p(n-(\ell-1)) = \sum_{j \not \in U} \Ex{Q}{1_{j \in Q} \sqrt{\frac{1}{  |Q|} }}
%= \Ex{Q}{\sqrt{|Q|} }.
%\end{align*}
%Because $|Q|$ has binomial distribution,
%its median is at least $\frac{n}{\ell}-1$ so that
%$$p = \frac{1}{n} \Ex{Q}{\sqrt{|Q|} } \geq \frac{1}{2} \sqrt{\frac{n-\ell}{\ell n^2}}
%\geq \frac{1}{2 \sqrt{2}} \sqrt{\frac{1}{\ell n}}.$$
Because $\xi \mapsto \tfrac{1}{\sqrt{\xi}}$ is convex, for each $j \not \in U$ we use 
Jensen's inequality to bound 
\begin{align*}
\E_{Q}[1_{Q \cap U =\emptyset}\cdot  S_j]
& =\P[Q \cap U =\emptyset] \cdot \P[j \in Q] \cdot  
\E_{Q|Q \cap U =\emptyset , j \in Q} \Big[ \frac{1}{\sqrt{|Q|} }\Big] \\
& \geq \P[Q \cap U =\emptyset] \cdot \P[j \in Q] \cdot  
\frac{1}{\sqrt{\E_{Q|Q \cap U =\emptyset , j \in Q}[|Q|] }} .
\end{align*}
We have $\P[j \in Q] = \frac{1}{\ell}$,
and $\P[Q \cap U =\emptyset] = \big(1 - \tfrac{1}{\ell} \big)^{\ell-1} \geq 1/e$,
since $(1- 1/\ell)^{\ell-1}$ is decreasing in $\ell$ and converges to $1/e$.  We can compute:
$$\E_{Q|Q \cap U =\emptyset, j \in Q}[|Q|] 
= 1 +(n-(\ell-1)-1) \frac{1}{\ell} = \frac{n}{\ell}.$$
So, we can bound
\begin{align*}
\E_{Q}[1_{Q \cap U =\emptyset} \cdot S_j]
 \geq \frac{1}{e \ell} \cdot  
\sqrt{\frac{\ell}{ n} } = 
\frac{1}{e\sqrt{n \ell}} .
\end{align*}
Overall, for every $U$ of size $\ell-1$,
\begin{align*}
\Ex{Q,Z}{|\hat{g}(U)|} 
& \geq  
\frac{1}{e\sqrt{ 2n \ell }} \cdot 
\sum_{j \not \in U} |\hat{f}_\ell(U \cup \{j\})| . 
\end{align*}
Summing over $U$,
\begin{align*}
 \Ex{Q,Z}{ \|\hat{g}_{\ell-1} \|_1} 
& \geq  
\frac{1}{e\sqrt{2n \ell}}\cdot  \sum_{U: |U|=\ell-1} 
\sum_{j \not \in U} |\hat{f}_\ell(U \cup \{j\})| \\
& = \frac{1}{e\sqrt{2n \ell}}\cdot \sum_{S : |S|=\ell} \ell \cdot 
|\hat{f}_\ell(S)| \\
& = \frac{1}{ e} \cdot  
\sqrt{\frac{\ell}{2n}} \cdot \|\hat{f}_\ell\|_1 . \qedhere
\end{align*}

\end{proof}

\section{A Higher level Pisier inequality\label{sec:Pisier}}

In this section, we prove Theorem~\ref{thm:P} assuming Theorem~\ref{thm:filter}.
We can express $f_\ell$ as the convolution
of $f$ with the level function 
\[
L_\ell(x) := \sum_{S \subseteq [n]: |S| = \ell} \chi_S (x);
\]
see Fact~\ref{fact:f*gS}. In order to analyze the norm of $f_\ell$,
we construct a proxy $P$ that is close to $L_\ell$. 
Let $d$ be a parameter with $d = \ell \mod 2$ to be determined. 
Let $\phi$ be as in Theorem~\ref{thm:filter}. Define 
\begin{align*}
 P (x) &:= 2^\ell \cdot \Ex{\theta}{\phi (\theta) \cdot \prod_{j=1}^n \Big (1+ \frac{\cos(\theta) \cdot x_j }{2} \Big)}\\
 & =  \sum_{S \subseteq [n]}  2^\ell \Ex{\theta}{\phi(\theta) \cdot
 	\frac{\cos^{|S|}(\theta)}{2^{|S|}} } \cdot \chi_S(x).
 \end{align*}
%To prove Theorem~\ref{thm:P}, we use the  convexity of norms to split the bound to two terms:
We think of $P$ as a ``good'' proxy for $L_\ell$:
\begin{align*}
\Expect{\|f_\ell (X)\|^2}^{1/2}  
& = \Expect{\|f*L_\ell(X)\|^2}^{1/2} \\
&= \Expect{\|f*P(X) + f*(L_\ell-P)(X)\|^2}^{1/2}\\
%&\leq  \Expect{(\|f*P(X)\| + \|f*(L-P)(X)\|)^2}^{1/2}\\
& \leq \Expect{\|f*P(X) \|^2}^{1/2} + \Expect{\| f*(L_\ell-P)(X)\|^2}^{1/2}.
& \text{Fact~\ref{fact:MinkowskiIneq}}
\end{align*}
%Here we used Fact~\ref{fact:MinkowskiIneq} in the last step. 
% where the inequalities correspond to the triangle inequality for the norm and the Euclidean norm. 
Next, we bound each of the two terms separately.

%
%\begin{lemma}
%Given positive integers $d, \ell$ with $d \ge \ell$  and $d = \ell \mod 2$, there is a function
%$P : \{\pm 1\}^n \to \R$ such that
%\begin{enumerate}
%\item For all $S \subseteq [n]$, we have $|\widehat{(P-L_\ell)}(S)| \leq 2^{\ell-d} \cdot |C(d,\ell)|.$
%\item $P$ has small $\ell_1$ norm: $\Expect{|P(X)|} \leq 2^\ell  |C(d,\ell)|$.
%\end{enumerate}
%\end{lemma}
%\begin{proof}
%
%Using Theorem~\ref{thm:filter}, the level proxy is defined as
%
%The properties of $P $ readily follow.
%To prove that $P$ is close to the level function, open the product
%and use linearity of expectation:

To bound the first term, apply Fact \ref{fact:l1},
\begin{align*}
\Expect{\|f*P(X) \|^2}^{1/2} & \leq \Expect{|P(X)|} \cdot \Expect{\|f(X) \|^2}^{1/2} .
\end{align*}
Similarly to the end of Section~\ref{sec:boundedfunctions1},
we may bound the $\ell_1$ norm of $P$ by
\begin{align*}
\Expect{|P(X)|} 
 %&\leq 
 %2^\ell \cdot \Expect{ | \phi (\theta)| \cdot \Big| \prod_{j=1}^n \Big (1+ \frac{\cos(\theta) \cdot X_j }{2} \Big) \Big |} \\
 &\leq 2^\ell \cdot \Expect{ | \phi(\theta)| \cdot \prod_{j=1}^n \Big (1+ \frac{\cos(\theta) \cdot X_j }{2} \Big)} \\ 
& = 2^\ell \cdot \Expect{ | \phi (\theta)|} \leq 2^\ell \cdot  |C(d,\ell)| .
\end{align*}
%because $1+ \tfrac{\cos(\theta) \cdot X_j }{2} \geq 0$, and 
%$\Expect{X_j} = 0$.
%Hence,
%\begin{align*}
%\Expect{\|f*P(X) \|^2}^{1/2} 
%\leq |C(d,\ell)| \cdot \Expect{\|f(X) \|^2}^{1/2}.
%\end{align*}

To bound the second term, use John's theorem (Fact ~\ref{fact:john}). 
There is an invertible linear map $J : \R^m \to \R^m$
so that for every $x \in \R^m$,
$$\|J(x)\|_2 \leq \|x\| \leq \sqrt{m} \cdot \|J(x)\|_2.$$
Using $J$ we can switch between $\|\cdot\|$
and $\|\cdot \|_2$:
\begin{align*}
\Expect{\| f*(L_\ell-P)(X)\|^2}^{1/2} 
&  \leq \sqrt{m} \cdot \Expect{\| J(f*(L_\ell-P)(X)) \|_2^2}^{1/2}  \\
&  = \sqrt{m} \cdot \Expect{\| J(f)*(L_\ell-P)(X)\|_2^2}^{1/2} 
& \text{Fact~\ref{fact:lin}} \\
& =  \sqrt{m} \cdot \sqrt{\sum_S \|\widehat{J(f)}(S)\|_2^2 \cdot (\widehat{L_\ell-P}(S))^2} . & \text{Facts~\ref{fact:parseval} and~\ref{fact:f*gS}}
\end{align*}
We now claim that 
\begin{align*}
\hat{P}(S)  & = \begin{cases}
1 & \text{if $|S| = \ell$,}\\
0 & \text{if $|S| \neq \ell, |S| \leq d$,}\\
\leq 2^{\ell - d} \cdot |C(d,\ell)| & \text{otherwise.}
\end{cases}
\end{align*}
The first two cases follow directly from the properties of $\phi$ and the formula for $P$. The last case follows from
\begin{align*}
|\hat{P}(S)|  \leq \frac{2^\ell}{2^{|S|}} \cdot \Expect{|\phi (\theta)|} \leq 2^{\ell-d} \cdot |C(d,\ell)| .
\end{align*}
%
%
%
%
%
%
%
%By John's theorem (Fact ~\ref{fact:john}), there is an invertible linear map $J : \R^m \to \R^m$
%so that for every $x \in \R^m$,
%$$\|J(x)\|_2 \leq \|x\| \leq \sqrt{m} \cdot \|J(x)\|_2.$$
%Using $J$ we can switch between $\|\cdot\|$
%and $\|\cdot \|_2$:
We can continue to bound
\begin{align*}
\Expect{\| f*(L_\ell-P)(X)\|^2}^{1/2} 
& \leq \sqrt{m}\cdot  2^{\ell-d} \cdot  |C(d,\ell)|  \cdot \sqrt{\sum_S \|\widehat{J(f)}(S)\|_2^2} \\
&= \sqrt{m}\cdot  2^{\ell-d} \cdot  |C(d,\ell)|  \cdot \Expect{\| J(f(X))\|_2^2}^{1/2} & \text{Fact~\ref{fact:parseval}}\\
& \leq \sqrt{m}\cdot  2^{\ell-d} \cdot  |C(d,\ell)|  \cdot \Expect{\| f(X)\|^2}^{1/2}.
\end{align*}
%$$ \Expect{\|\lin{f}(X)\|^2}^{1/2} \leq O(\log D) \cdot \Expect{\|f(X)\|^2}^{1/2}.$$

Putting the two parts together,
\begin{align*}
\Expect{\|f_\ell(X)\|^2}^{1/2}  
& \leq 2^\ell \cdot |C(d,\ell)|  \cdot \Big(1+ \frac{\sqrt{m}}{2^d} \Big) \Expect{\|f(X) \|^2}^{1/2}.
\end{align*}
For $\ell < \tfrac{1}{2} \log(m+1)$, we can set $d$ to be the smallest integer that
is larger than $\tfrac{1}{2} \log(m+1)$ and has the same parity as $\ell$, so that 
\begin{align*}
2^\ell \cdot |C(d,\ell)| \cdot \Big(1+ \frac{\sqrt{m}}{2^d} \Big) 
& \le 2^\ell \cdot \frac{d^\ell}{\ell!} \cdot 2 
& \text{Fact ~\ref{fact:stirling}} \\
& \le \Big(\frac{6 \log (m+1)}{\ell}\Big)^\ell .
\end{align*}
%where the last step follows from Stirling's approximation (Fact ~\ref{fact:stirling}).
For $\ell \ge \tfrac{1}{2} \log(m+1)$, we can set $d := \ell$ so that
\[
2^\ell \cdot |C(d,\ell)|  \cdot \Big(1+ \frac{\sqrt{m}}{2^d} \Big) = 2^{2\ell - 1} \cdot \Big(1+ \frac{\sqrt{m}}{2^\ell} \Big)  \le 4^\ell.
\]
\begin{remark}
There is a slightly more general version of Theorem~\ref{thm:P}.
The Banach-Mazur distance of the norm $\| \cdot \|$
from the Euclidean norm $\|\cdot\|_2$ is
$$D = \inf \{d \in \R: \exists T \in \mathsf{GL}_m \ \forall x \in \R^m \ \|T(x)\|_2 \leq \|x\| \leq d\cdot \|T(x)\|_2\},$$
where $\mathsf{GL}_m$ is the group of invertible linear transformations from $\R^m$ to itself. 
John's theorem states that always $D \leq \sqrt{m}$.
The above argument proves that, more generally, we can replace
the $C \log(m+1)$ term by $C \log(D+1)$.
\end{remark}

\section{Constructing the filter}
\label{filtersec}
Here we construct the filter $\phi$ and prove  Theorem~\ref{thm:filter}.
Let $\theta$ be uniformly distributed over the $2d$ equally spaced angles
$$\mathcal{D} = \Big\{0,\frac{\pi}{d}, \dots, \frac{(2d-1)\pi}{d}\Big\}.$$
An important property of this distribution is that for integer $a$, we have
\begin{align} \label{geom}
\E[e^{ia \theta}] = \begin{cases} 1 & \text{if}\ a = 0 \mod 2d \\ 0 & \text{otherwise.} \end{cases}
\end{align}
%To see this, note that when $a \neq 0 \mod 2d$, the  formula for geometric sums gives $$\E[e^{i a \theta}] = (1/2d) \cdot \frac{1-e^{i a 2d (\pi/d)}}{1-e^{i a \pi/d}} = 0.$$
Define
\[
Q(z) := \prod_{j=0}^d \Big(z - \cos\Big(\frac{j\pi }{d}\Big)\Big) = \sum_{j=0}^{d+1} c_j z^j,
\]
for some $c_{j} \in \mathbb{R}$. Let $Q_{> \ell}$ denote the suffix of $Q$:
\[
Q_{> \ell} (z) := \sum_{j = \ell + 1}^{d+1} c_j z^{j}.
\]
The rational function $\frac{Q_{> \ell}(z)}{z^{\ell+1}}$ is 
a polynomial.
Finally, define
\[
\phi(\theta) := 2^{d-1} \cdot \cos(d \theta) \cdot \frac{ Q_{> \ell} (\cos (\theta))}{\cos^{\ell+1}(\theta)} .
%= 2^{d-1} \cdot \cos(d \theta) \cdot \sum_{j=0}^{d-\ell} c_{j+\ell+1} \cos^{j}(\theta).
\]
It remains to prove that the filter $\phi$ has the desired properties.

The following claim helps to understand the correlation 
of $\phi$ with powers of $\cos$.
\begin{claim} 
\label{claim:cos}
For integers $k, d \geq 0$, we have 
\[\E[\cos(d\theta) \cos^k(\theta)] = \begin{cases} 0 & \text{if}\  k \neq d \mod 2 \\ 0 & \text{if}\ k < d \\ 2^{-(d-1)} & \text{if}\ k = d. \end{cases}\]
\end{claim}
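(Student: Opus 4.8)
The plan is to express $\cos^k(\theta)$ in terms of the Fourier basis $\{e^{ia\theta}\}$ and then use the orthogonality relation \eqref{geom} for the uniform distribution on the $2d$ equally spaced angles. First I would write $\cos(d\theta) = \tfrac{1}{2}(e^{id\theta} + e^{-id\theta})$ and expand
$$\cos^k(\theta) = \frac{1}{2^k}\big(e^{i\theta} + e^{-i\theta}\big)^k = \frac{1}{2^k}\sum_{r=0}^{k} \binom{k}{r} e^{i(k-2r)\theta}.$$
Multiplying, $\cos(d\theta)\cos^k(\theta)$ becomes $\tfrac{1}{2^{k+1}}\sum_{r=0}^k \binom{k}{r}\big(e^{i(d+k-2r)\theta} + e^{i(-d+k-2r)\theta}\big)$. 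By \eqref{geom}, taking the expectation kills every term except those whose exponent is a multiple of $2d$; so I need to count the $r\in\{0,\dots,k\}$ for which $d+k-2r \equiv 0 \pmod{2d}$ or $-d+k-2r \equiv 0 \pmod{2d}$, and weight each by $\binom{k}{r}/2^{k+1}$.

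Next I would go through the three cases. If $k \not\equiv d \pmod 2$, then $d \pm k$ is odd while $2r$ and $2d$ are even, so no exponent can be a multiple of $2d$ and the expectation is $0$. If $k \equiv d \pmod 2$ but $k < d$, then $|{\pm d + k - 2r}| \le d + k < 2d$ for all $r \in \{0,\dots,k\}$ (using $0 \le 2r \le 2k < d+k$, hence $d+k-2r \in (-d+k, d+k] \subseteq (-2d, 2d)$, and similarly for the other family), so again the only multiple of $2d$ attainable would be $0$; but $d + k - 2r = 0$ forces $r = (d+k)/2 > k$, impossible, and $-d+k-2r=0$ forces $r = (k-d)/2 < 0$, also impossible — so the expectation is $0$. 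Finally if $k = d$, the family $e^{i(d+k-2r)\theta} = e^{i(2d-2r)\theta}$ is a multiple of $2d$ exactly when $r = 0$ (giving exponent $2d$) or $r = d$ (giving exponent $0$), both valid indices; the other family $e^{i(-d+k-2r)\theta} = e^{i(-2r)\theta}$ is a multiple of $2d$ only when $r = 0$ (exponent $0$) or $r = d$ (exponent $-2d$). So the surviving terms are $r=0$ and $r=d$, each appearing once in each family, contributing $\tfrac{1}{2^{d+1}}\big(\binom{d}{0} + \binom{d}{d} + \binom{d}{0} + \binom{d}{d}\big) = \tfrac{4}{2^{d+1}} = 2^{-(d-1)}$.

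The only mild subtlety — and the step I would be most careful with — is the bookkeeping at $k=d$: making sure I do not double count the contribution of $r=0$ and $r=d$ across the two exponential families, and confirming that when $d = 0$ the degenerate case ($r=0=d$ and the two families coincide) still yields the stated value $2^{-(d-1)} = 2$, which indeed matches $\E[\cos(0)\cos^0(\theta)] = 1 \cdot 1$ — wait, here one must note $k=d=0$ gives $\E[1] = 1 = 2^{-(0-1)}/2$, so a brief check of small cases is warranted to pin down the constant; the factor $2^{-(d-1)}$ is correct for $d \ge 1$, which is the only regime used in the construction of $\phi$ (where $d \ge \ell \ge$ the relevant degree). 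Everything else is a direct computation with \eqref{geom}.
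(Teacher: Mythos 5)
Your proof is correct and takes essentially the same route as the paper: expand $\cos(d\theta)$ and $\cos^k(\theta)$ into complex exponentials and apply the orthogonality property \eqref{geom}, with the same counting at $k=d$ giving $4/2^{d+1}=2^{-(d-1)}$. The only cosmetic difference is that the paper handles the case $k \neq d \bmod 2$ by the symmetry $\cos(d\theta) = (-1)^d\cos(d(\pi+\theta))$ instead of the exponential expansion, and your observation that the constant is only valid for $d \ge 1$ (the only regime actually used) correctly flags a degenerate boundary case the paper's statement glosses over.
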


\begin{proof}
If $k \neq d \mod 2$, the symmetry $\cos(d\theta) = (-1)^d \cos(d(\pi + \theta))$ and the symmetry of the distribution of $\theta$ complete the proof. 

For $k < d$, we use the identity $\cos(\theta)  = \frac{e^{i\theta} + e^{-i\theta}}{2}$. Property~\eqref{geom} implies
\[
\E[\cos(d\theta) \cos^k(\theta)] = \E\Big[\frac{e^{id\theta} + e^{-id\theta}}{2} \cdot \Big(\frac{e^{i\theta} + e^{-i \theta}}{2} \Big)^k \Big] = 0.
\]

For $k = d$, the expectation reduces via \eqref{geom} to
\[
\displaystyle \E\Big[\frac{e^{id\theta} + e^{-id\theta}}{2} \cdot \Big(\frac{e^{i\theta} + e^{-i \theta}}{2} \Big)^d \Big] = \E\Big[\frac{e^{id\theta} + e^{-id\theta}}{2} \cdot \Big(\frac{e^{id\theta} + e^{-id \theta}}{2^d} \Big) \Big] = 2^{-(d-1)}. 
\]
\end{proof}

Now, we can prove \eqref{filtereq}. 
The argument is based on Claim ~\ref{claim:cos}.
We use the following terminology.
The expressions we consider are sums of terms of the form
$\cos(d \theta) \cos^k(\theta)$.
The {\em degree} of such a term is $k$.

For $k \le \ell - 1$, all terms in $\phi(\theta) \cos^k(\theta)$ have degree at most $d-1$. Claim~\ref{claim:cos} implies that $\E[\phi (\theta) \cos^k(\theta)] = 0$. 

For $k = \ell$, we have a single term of degree $d$, so that
\[
\E[\phi (\theta) \cos^\ell(\theta)] = \E[2^{d-1} \cos(d \theta) \cos^d(\theta)] = 1.
\] 
For $\ell+1 \le k \le d$, 
\begin{align*}
\E[\phi (\theta) \cos^k(\theta)] &= \E[2^{d-1} \cos(d \theta) \cdot Q_{> \ell} (\cos (\theta)) \cos^{k-(\ell+1)}(\theta)] \\& = \E[2^{d-1} \cos(d \theta) \cdot \underbrace{Q(\cos (\theta))}_{=0} \cos^{k-(\ell+1)}(\theta)] = 0 ;
\end{align*}
the second equality holds because we added terms in $Q$ of degree at most $\ell$, and $\ell + k-(\ell+1) \le d-1$. 

Finally, for $k=d+1$, we need one more observation.
Since $\cos(\theta) = -\cos(\pi - \theta)$, the distinct roots of 
the real-rooted polynomial $Q$ come in pairs of the form $r, -r$. 
So, there is a polynomial $q$ so that
\[
Q(z) = \begin{cases} z\cdot q(z^2) & \text{if $d = 0 \mod 2$,}  \\  q(z^2)  & \text{otherwise,} \end{cases}
\]
where 
\[
q(z) = \prod_{j=0}^{\lfloor(d-1)/2\rfloor} \Big(z - \cos^2 \Big(\frac{j\pi }{d}\Big)\Big) .
\]
Because $\ell = d \mod 2$, the coefficient $c_\ell$ in $Q$
is zero.
Similarly to the previous case, we can bound
\begin{align*}
\E[\phi (\theta) \cos^k(\theta)] &= \E[2^{d-1} \cos(d \theta) \cdot Q_{> \ell} (\cos (\theta)) \cos^{k-(\ell+1)}(\theta)] \\& = \E[2^{d-1} \cos(d \theta) \cdot \underbrace{Q(\cos (\theta))}_{=0} \cos^{k-(\ell+1)}(\theta)] = 0 ;
\end{align*}
here we additionally used that $c_\ell=0$.

Next, we turn to computing $\E[|\phi(\theta)|]$. The key claim is the following:
\begin{claim}
\label{claim:qsign}
 For all $\theta \in \mathcal{D}$, the sign of 
 $\frac{Q_{>\ell}(\cos(\theta))}{\cos^{\ell+1}(\theta)}$ is the same. 
\end{claim}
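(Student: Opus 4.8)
For all $\theta \in \mathcal{D}$, the sign of $\dfrac{Q_{>\ell}(\cos(\theta))}{\cos^{\ell+1}(\theta)}$ is the same.

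The plan is to apply Theorem~\ref{thm:polyalternate} to the polynomial $Q$, whose roots are $\cos(j\pi/d)$ for $j = 0,\dots,d$. These roots are not all positive, so the theorem does not apply directly to $Q$; instead I would pass to the even/odd decomposition already recorded in the excerpt. Write $Q(z) = z^{\delta} q(z^2)$ where $\delta \in \{0,1\}$ has the parity of $d$, and $q(w) = \prod_{j=0}^{\lfloor (d-1)/2\rfloor}(w - \cos^2(j\pi/d))$ is a polynomial of degree $D := \lfloor (d+1)/2 \rfloor$ all of whose roots $\cos^2(j\pi/d)$ are positive (here one uses that $\cos(j\pi/d) \ne 0$ since $d$ and $\ell$ have the same parity, and in particular $d$ is odd whenever $\ell$ is, while for $d$ even the value $j = d/2$ is excluded from the product defining $q$ because $\cos(j\pi/d)=0$ would force $\delta = 1$; I would double-check this bookkeeping carefully — this is the one fussy point).

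Next I would translate the suffix $Q_{>\ell}$ into a suffix of $q$. Since $\ell$ has the same parity as $d = \delta \bmod 2$, and $Q(z) = \sum_j c_{2j+\delta} z^{2j+\delta}$ has only monomials of that parity, the coefficient $c_\ell$ vanishes and $Q_{>\ell}(z) = \sum_{2j + \delta > \ell} c_{2j+\delta} z^{2j+\delta} = z^{\delta} q_{>k}(z^2)$ where $k := (\ell - \delta)/2$ and $q_{>k}$ is the suffix of $q$ in the sense of the excerpt. Therefore
\begin{align*}
\frac{Q_{>\ell}(\cos\theta)}{\cos^{\ell+1}(\theta)} = \frac{\cos^{\delta}(\theta)\, q_{>k}(\cos^2\theta)}{\cos^{\ell+1}(\theta)} = \frac{q_{>k}(\cos^2\theta)}{\cos^{\ell+1-\delta}(\theta)}.
\end{align*}
The denominator $\cos^{\ell+1-\delta}(\theta)$ is an \emph{even} power of $\cos\theta$ (since $\ell$ and $\delta$ have the same parity, $\ell + 1 - \delta$ is odd — so actually I would instead write it as $\cos^{2k+1}(\theta)$ after substituting $\ell = 2k+\delta$, giving denominator $\cos^{2(k+1)-\delta}$; let me just keep $\ell+1-\delta = 2(k+1) - 2\delta + (\ell - 2k - \delta) \cdot 0$, i.e. $\ell + 1 - \delta$, and note $\ell \equiv \delta$ so $\ell + 1 - \delta$ is odd). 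Being a bit more careful: $\ell + 1 - \delta$ is odd, so I cannot claim the denominator has constant sign. The fix is to multiply numerator and denominator by one more factor of $\cos\theta$, or better, to note that the sign of $\frac{Q_{>\ell}(\cos\theta)}{\cos^{\ell+1}\theta}$ equals the sign of $\cos\theta \cdot q_{>k}(\cos^2\theta)$ divided by the sign of $\cos^{2(k+1)}\theta \ge 0$; hmm, this still has the stray $\cos\theta$. I will instead reduce modulo the symmetry $\theta \mapsto \pi - \theta$ (which sends $\cos\theta \mapsto -\cos\theta$ and fixes $\mathcal{D}$) and verify the expression is invariant under it, so it suffices to check constant sign on $\{\theta \in \mathcal{D} : \cos\theta > 0\}$, where $\cos\theta > 0$ and the denominator is genuinely positive.

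So the crux reduces to: for every value $w = \cos^2\theta$ with $\theta \in \mathcal{D}$, $\cos\theta \ne 0$, the sign of $q_{>k}(w)$ is constant. The values $w = \cos^2(j\pi/d)$ for $j = 0, \dots, \lfloor(d-1)/2\rfloor$ are \emph{exactly} the roots of $q$ (for $j=0$ we get $w=1$, which is a root; for other $j$ strictly between, all roots), and $\mathcal{D}$ contributes the same set of $w$-values by the $\cos(d\theta)=\cos(-d\theta)$ symmetry — I would confirm $\{\cos^2\theta : \theta \in \mathcal{D}\} = \{\cos^2(j\pi/d) : 0 \le j \le d\}$, which is immediate. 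Hence each such $w$ is a root $r$ of the degree-$D$ real-rooted polynomial $q$ with all positive roots, and Theorem~\ref{thm:polyalternate} gives $(-1)^{D - k - 1} q_{>k}(r) \ge 0$ for every root $r$. The sign $(-1)^{D-k-1}$ does not depend on $r$, so $q_{>k}$ has constant sign (possibly zero) across all these evaluation points, which is exactly what is needed. The main obstacle, and where I would spend the care, is the parity bookkeeping: tracking $\delta$, $k = (\ell-\delta)/2$, $D = \lfloor (d+1)/2\rfloor$, confirming $c_\ell = 0$, confirming $q$ has exactly the claimed positive roots with correct multiplicities, and handling the odd leftover power of $\cos\theta$ via the reflection symmetry rather than pretending the denominator is an even power.
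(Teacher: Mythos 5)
Your proposal is, at its core, the paper's own proof: pass to the even/odd decomposition $Q(z)=z^{\delta}q(z^2)$ with $q$ having only positive roots, observe that the values $\cos^2\theta$ for $\theta\in\mathcal{D}$ (with $\cos\theta\neq 0$) are exactly the roots of $q$, and invoke Theorem~\ref{thm:polyalternate}, whose sign $(-1)^{D-k-1}$ is independent of the root. The one concrete error is the parity bookkeeping you yourself flag: since $0$ is a root of $Q$ precisely when $d$ is even, one has $Q(z)=z\,q(z^2)$ for $d$ even and $Q(z)=q(z^2)$ for $d$ odd, so $\delta\equiv d+1\pmod 2$, not $\delta\equiv d$. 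With the correct parity, $\ell+1-\delta$ is \emph{even} (as $\ell\equiv d$), the denominator $\cos^{\ell+1-\delta}\theta$ is strictly positive whenever $\cos\theta\neq 0$, and the sign of $Q_{>\ell}(\cos\theta)/\cos^{\ell+1}\theta$ equals that of $q_{>k}(\cos^2\theta)$ outright, with $k=\lfloor(\ell-1)/2\rfloor$ rather than $(\ell-\delta)/2$; this is exactly the paper's one-line reduction, and your detour through the reflection $\theta\mapsto\pi-\theta$ evaporates. The detour itself is not wrong — $Q_{>\ell}(z)/z^{\ell+1}$ genuinely is an even polynomial, so the invariance you propose to verify does hold — but proving that evenness is precisely the parity fact you mis-stated, so the fallback buys nothing over the direct computation. (Like the paper, you silently drop the points of $\mathcal{D}$ with $\cos\theta=0$, which occur when $d$ is even; there the polynomial $Q_{>\ell}(z)/z^{\ell+1}$ takes the value $c_{\ell+1}=q_{k+1}$, whose sign is again $(-1)^{D-k-1}$ since the coefficients of $q$ alternate, so the claim persists at those points too.)
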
 

\begin{proof}
The polynomial $q$ has positive roots corresponding to 
%squares of nonzero roots of $Q$.
nonzero $\cos^2(\theta)$ for $\theta \in \mathcal{D}$.
Because $d = \ell \mod 2$, 
the sign of $\frac{Q_{>\ell}(\cos(\theta))}{\cos^{\ell+1}(\theta)}$ is the same as the sign of 
$q_{> k} (\cos^2(\theta))$ for $k = \lfloor \tfrac{\ell-1}{2} \rfloor$.
Theorem~\ref{thm:polyalternate} completes the proof.
\end{proof}

Claim~\ref{claim:qsign} implies that the sign of $\phi(\theta)$ is determined by the sign of $\cos(d\theta)$. 
We can finally compute
\begin{align*}
\E[|\phi(\theta)|] 
& = |\E[\phi(\theta)  \cos(d \theta)]| & 
\text{$|\cos(d\theta)| = 1$ for $\theta \in \mathcal{D}$} \\
& =  |C(d,\ell)|. & \text{property~\eqref{filtereq}
\& definition of $T_d$}
\end{align*}

\section{On Real Rooted Polynomials\label{sec:trunc}}
In this section, we prove Theorem~\ref{thm:polyalternate}. First, we need a useful property of unimodal sequences.
\begin{claim}
	\label{clm:unimoal+zero}
	Let $a_0,\ldots,a_d$ be a unimodal sequence of positive numbers
	so that $\sum_{j=0}^d (-1)^j a_j = 0$.
	Then for all $k \in \{0,\ldots,d\}$,
	we have $(-1)^k \sum_{j=0}^k (-1)^j a_j \geq 0$.
\end{claim}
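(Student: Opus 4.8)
The plan is to prove Claim~\ref{clm:unimoal+zero} by splitting the alternating partial sum at the mode $m$ of the unimodal sequence and analyzing the two regimes $k \le m$ and $k > m$ separately, using the standard technique of pairing consecutive terms.

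First I would fix the mode $m$, so that $a_0 \le a_1 \le \dots \le a_m \ge a_{m+1} \ge \dots \ge a_d$. Write $P_k := \sum_{j=0}^k (-1)^j a_j$; the goal is $(-1)^k P_k \ge 0$ for all $k$. For the regime $k < m$ (where the sequence is nondecreasing up to index $k+1$), I would pair terms from the top: if $k$ is odd, group $P_k = (a_0) + (-a_1 + a_2) + \dots$ — more carefully, group as $P_k = \sum (a_{2i} - a_{2i+1}) \le 0$ when $k$ is odd (each pair is $\le 0$ since the sequence is nondecreasing here), and when $k$ is even, $P_k = a_0 + \sum(a_{2i} - a_{2i-1}) \ge 0$ by the same monotonicity; either way $(-1)^k P_k \ge 0$. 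For the regime $k \ge m$, I would instead use the hypothesis $\sum_{j=0}^d(-1)^j a_j = 0$ to write $P_k = -\sum_{j=k+1}^d (-1)^j a_j$ and then pair terms in this tail sum from index $k+1$ onward. Since for $j \ge m$ the sequence is nonincreasing, the tail $\sum_{j=k+1}^d (-1)^j a_j$ has a sign controlled by the parity of $k+1$: pairing $(a_{k+1} - a_{k+2}) + (a_{k+3} - a_{k+4}) + \dots$ shows this tail is $\ge 0$ when $k+1$ is even (i.e.\ $k$ odd) possibly plus a leftover nonnegative term, and $\le 0$ when $k$ is even; carrying the sign through $P_k = -(\text{tail})$ again yields $(-1)^k P_k \ge 0$.

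The one subtlety to handle carefully is the boundary index $k = m$ itself and making sure the two case analyses are each valid on their stated ranges: the ``pair from the bottom'' argument needs monotonicity only among indices $0,\dots,k$, which holds when $k \le m$; the ``pair the tail'' argument needs monotonicity only among indices $k+1,\dots,d$, which holds when $k+1 \ge m$, i.e.\ $k \ge m-1$. These two ranges overlap (covering $k = m-1$ and $k = m$ by either argument), so every $k \in \{0,\dots,d\}$ is covered. I would also note the edge cases: $k = d$ gives $P_d = 0$ directly from the hypothesis, and the leftover single term in any odd-length pairing is one of the $a_j$'s, which is positive by assumption, so it never spoils an inequality in the favorable direction.

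I expect the main obstacle — really just a bookkeeping nuisance rather than a deep difficulty — to be getting the parity signs exactly right in the tail-pairing argument, since the sign of a truncated alternating sum of a monotone sequence depends jointly on where the truncation starts, where it ends, and the parity of the number of terms; I would organize this by always pairing from the \emph{near} end (index $k+1$) so that a possible unpaired term sits at the far end (index $d$) and contributes with a predictable sign. Once the sign bookkeeping is pinned down, each pair is nonnegative or nonpositive by a single application of unimodality, and the claim follows.
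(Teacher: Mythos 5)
Your proposal is correct and follows essentially the same route as the paper: pair consecutive terms of the prefix when $k$ is at or below the mode $m$, and for $k$ beyond the mode use $\sum_{j=0}^d(-1)^j a_j=0$ to rewrite the prefix sum as minus the tail and pair the tail terms (the paper phrases this as a "symmetric argument applied to the suffix sums," which is the same computation). Your sign bookkeeping and the observation that the two regimes overlap near $k=m$, with any unpaired leftover term contributing favorably, match the paper's argument.
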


\begin{proof}
	Let $m$ be the position of a maximum of the unimodal sequence.
	For $k \leq m$ even,
	\begin{align*}
	\sum_{j=0}^k (-1)^j a_j
	& = a_0 + \sum_{j=1}^{\frac{k}{2}} (a_{2j} - a_{2j-1}) \geq 0.
	\end{align*}
	For $k \leq m$ odd,
	\begin{align*}
	\sum_{j=0}^k (-1)^j a_j
	& = \sum_{j=0}^{\frac{k-1}{2}} (a_{2j+1} - a_{2j}) \leq 0.
	\end{align*}
	This proves the claim when $k \leq m$. 
	A symmetric argument can be applied to 
	the suffix sums to conclude that for $m \leq k <d$,
	$$(-1)^{d-k+1} \sum_{j=k+1}^d (-1)^{d-j} a_j \geq 0.$$
	Together with the condition
	$\sum_{j=0}^d (-1)^j a_j = 0$, this implies that when $k > m$, 
	\begin{align*}
	(-1)^k \cdot \sum_{j=0}^k (-1)^j a_j
	& = (-1)^{k+1} \cdot \sum_{j=k+1}^d (-1)^j a_j\\
	& = (-1)^{d-k+1} \sum_{j=k+1}^d (-1)^{d-j} a_j \geq 0.
	\end{align*}
	
\end{proof}

Now we turn to proving Theorem ~\ref{thm:polyalternate}. 
\begin{proof}[Proof of Theorem~\ref{thm:polyalternate}]
	Write $p$ as $$p(z) = \prod_{j=1}^d (z - r_j) = \sum_{j=0}^d c_j z^j,$$ with $r_1,\dotsc, r_d>0$. For every $j \in \{0,\ldots,d\}$,
	we have $(-1)^{d+j} \cdot c_j > 0$. 
 	So, by Fact \ref{fact:logconcave}, the sequence $|c_0|,\dotsc, |c_d|$ is log-concave.
	
	Now, let $r$ by any root of $p$, and set
	$a_j = |c_j| r^j$. 
	Because the product of log-concave sequences is log-concave, the sequence $a_0,\dotsc, a_d$ is log-concave and positive. 

	By Fact~\ref{fact:unimodality}, 
	the sequence $a_0,\dotsc, a_d$ is unimodal.
	Because $r$ is a root of $p$, we know
	$p(r) = \sum_{j=0}^d (-1)^j a_j = 0$.
	And Claim~\ref{clm:unimoal+zero} implies
	\begin{align*}
	(-1)^{d - k-1} \cdot p_{> k}(r) & = (-1)^{d-k-1} \cdot \sum_{j = k+1}^d c_j r^j\\
	& = (-1)^{-k-1} \cdot  \sum_{j  = k+1}^d (-1)^{j} a_j\\
	& = (-1)^{k} \cdot \sum_{j  = 0}^k (-1)^{j} a_j \\
	&\geq 0.\qedhere 
	\end{align*}
\end{proof}

\section{Consequences for learning}
\label{sec:learn}

Here we describe an application of our Fourier bounds
to learning theory;
we prove Theorem~\ref{thm:learn}.
%Here we show how our new bounds on $\|\hat{f}\|_1$ give more efficient learning algorithms for low degree functions. Given our bounds on $\|\hat{f}_\ell \|_1$,  
The learning algorithm is based on standard techniques 
(see e.g.~\cite{Mansour92} or Chapter 3 in~\cite{o2014analysis}).

First, we can estimate one specific Fourier coefficient by sampling and averaging.
\begin{lemma} \label{lem:EstimateSingleFourierCoefficientBySampling}
	Let $f : \{ -1,1\}^n \to [-1,1]$ and fix a set $S \subseteq [n]$.
	Sample $X_1,\ldots,X_N \sim \{ \pm 1\}^n$ uniformly and independently and set
	$\alpha_S:= \frac{1}{N} \sum_{i=1}^Nf(X_i) \cdot \chi_S(X_i)$. Then for any $\lambda \geq 0$,
	\[
	\Pr\big[|\hat{f}(S) - \alpha_S| \geq \lambda\big] \leq 2\exp(-\lambda^2N / 2).
	\]
\end{lemma}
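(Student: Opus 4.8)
The plan is to prove Lemma~\ref{lem:EstimateSingleFourierCoefficientBySampling} as a direct application of the Chernoff bound (Lemma~\ref{lem:ChernovBound}) to the random variables $Y_i := f(X_i)\cdot\chi_S(X_i)$. First I would observe that each $Y_i$ takes values in $[-1,1]$, since $|f(X_i)| \le 1$ and $\chi_S(X_i) \in \{\pm 1\}$. Next I would compute the common expectation: by linearity and the independence of the samples,
\[
\E[Y_i] = \Expect{f(X_i)\cdot\chi_S(X_i)} = \hat f(S),
\]
which is exactly the inner product $\langle f, \chi_S\rangle$ that defines the $S$-th Fourier coefficient (using that $X_i$ is uniform on $\{\pm 1\}^n$, and the Fourier expansion together with orthonormality of characters, Fact~\ref{fact:parseval}-style orthogonality). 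The $Y_i$ are independent because the $X_i$ are.

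Then I would note that $\alpha_S = \tfrac1N\sum_{i=1}^N Y_i$ is precisely the empirical average appearing in Lemma~\ref{lem:ChernovBound} with $\mu = \hat f(S)$ and $n$ replaced by $N$. Applying that lemma directly gives, for every $\lambda \ge 0$,
\[
\Pr\big[|\hat f(S) - \alpha_S| \ge \lambda\big] \le 2\exp(-\lambda^2 N/2),
\]
which is the claimed bound. That completes the proof.

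There is essentially no obstacle here: the only points requiring a word of care are (i) verifying the $[-1,1]$ range so that the Chernoff hypothesis applies, and (ii) correctly identifying $\E[f(X)\chi_S(X)] = \hat f(S)$, which is immediate from the Fourier expansion and the fact that $\E[\chi_T(X)\chi_S(X)] = \mathbf{1}[S=T]$ for uniform $X$. Everything else is bookkeeping, so the write-up should be just a few lines.
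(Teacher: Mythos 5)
Your proposal is correct and matches the paper's proof essentially verbatim: both define $Y_i := f(X_i)\cdot\chi_S(X_i)$, note $|Y_i|\le 1$ and $\E[Y_i]=\hat f(S)$, and invoke the Chernoff bound (Lemma~\ref{lem:ChernovBound}). No issues.
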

\begin{proof}
	Consider the random variable $Y_i := f(X_i) \cdot \chi_S(X_i)$ and note that $|Y_i| \leq 1$
	and $\E[Y_i] %= \E_{X_i}[f(X_i) \cdot \chi_S(X_i)] 
	= \hat{f}(S)$. 
	The lemma follows from the Chernoff bound (Lemma~\ref{lem:ChernovBound}).
\end{proof}

%We use the lemma to obtain a simple algorithm to learn a
%degree $d$ function. By virtue of the improved bound on $\|\hat{f}_\ell\|_1$ promised by 

The learning algorithm operates as follows.
Its sample complexity is
$$N = \left\lceil 2\cdot 16^2 \cdot   \frac{1}{\epsilon^3} \cdot L(n,d)^2 \log \Big (2 \cdot \frac{\sum_{\ell=0}^d \binom{n}{d}}{\delta}\Big) \right\rceil$$
where
$$L(n, d):= (d+1) d^d e^{\binom{d+2}{2}} \cdot n^{\frac{d-1}{2}}.$$
The algorithm samples $X_1,\dotsc, X_N \sim \{\pm 1\}^n$ uniformly at random and independently. 
It computes $\alpha_S$ for all $S \subseteq [n]$ of size $|S| \leq d$ as in the lemma above. 
It then computes the set
$$B = \Big\{S \subseteq [n]: |\alpha_S|\geq \frac{\epsilon}{4 \cdot L(n,d)}\Big\}.$$
The output is the function 
$$g = \sum_{S \in B} \alpha_S \chi_S.$$
It remains to prove that, except with probability $\delta$, the algorithm above produces a function $g$ satisfying $\Expect{ |f(X) - g(X) |^2} \leq \epsilon$, for uniformly random $X$. 
% defined by
%\begin{align*}
%\hat{g}(S) = \begin{cases}
%\alpha_S & \text{if $S \in B$,}\\
%0 & \text{otherwise.}
%\end{cases}
%\end{align*}
%
%
%Theorem~\ref{thml1bound} implies that 
%$$ \| \hat{f} \|_1 = \sum_{\ell = 0}^d \| \hat{f}_\ell \|_1 \leq   =: L(n,d).$$
%For $1/4 \geq \epsilon >0$, $\delta >0$, let  Note that when $d,\epsilon$ are %constants, $N = O(n^{d-1} \log n)$. 

Denote by $G$ the  event that
for every $S$ of size at most $d$
we have %$$ \P \Big [ \exists S, 
$|\hat{f}(S) - \alpha_S| \leq \frac{\epsilon^{3/2}}{16\cdot L(n,d)}$.
The union bound and Lemma~\ref{lem:EstimateSingleFourierCoefficientBySampling}
imply that $\P[G] \geq 1-\delta$.
%\Big] \leq \Big(\sum_{\ell=0}^d \binom{n}{d} \Big)  \cdot 2 \cdot \Big(2 \cdot \frac{\sum_{\ell=0}^d \binom{n}{d}}{\delta} \Big)^{-1} = \delta.$$
%Now assume that this event did not happen. Then 
For the rest of the proof, assume that $G$ holds.
For $S \in B$, we have $$\hat{f}(S)^2 \geq \left (\frac{\epsilon}{4 \cdot L(n,d)} - \frac{\epsilon^{3/2}}{16 \cdot L(n,d)} \right)^2 \geq \frac{\epsilon^2}{64 \cdot L(n,d)^2}.$$ 
So, by Parseval's identity, we must have that $$|B| \leq 64 \cdot  \frac{L(n,d)^2}{\epsilon^2}.$$ 
For $S \notin B$, $$|\hat{f}(S)| \leq \frac{\epsilon}{4 \cdot L(n,d)} + \frac{\epsilon^{3/2}}{16\cdot L(n,d)} \leq \frac{\epsilon}{2 \cdot L(n,d)}.$$
The last ingredient is Theorem~\ref{thml1bound}.
It implies that 
$$ \| \hat{f} \|_1 \leq   L(n,d).$$
Putting it all together,
\begin{align*}
\Expect{ |f(X) - g(X)|^2}  & = \sum_{S \subseteq [n]} (\hat{f}(S) - \hat{g}(S))^2\\ &=\sum_{S \in B} (\hat{f}(S) - \hat{g}(S))^2 + \sum_{S \notin B} \hat{f}(S)^2\\
& \leq |B| \cdot \frac{\epsilon^{3}}{16^2 \cdot L(n,d)^2}  +\frac{\epsilon}{2 \cdot L(n,d)} \cdot \sum_{S \notin B} |\hat{f}(S)|  \\
&\leq 
%\epsilon/2 + \epsilon/2 \\&= 
\epsilon.
\end{align*}

\section{Examples of bounded functions}
\label{sec:B}

In this section we provide a couple of examples showing
that our bounds are sharp for some range of parameters.

\subsection{Lower bound for $\| \hat{f}_\ell \|_1$}

%\begin{theorem}
%Given positive integers $\ell \le n$, there is a function $f : \{\pm 1\}^n \to [-1,1]$ such that
%\[
%\|\hat{f}_\ell \|_1 = \frac{1}{2} \cdot \sqrt{\frac{1}{n} \cdot {n \choose \ell}}.
%\]
%\end{theorem}

Here we prove Proposition \ref{prop:const2}.
Let $\eps_S$, for $S \subseteq [n]$ of size $\ell$, be sampled uniformly and independently from $\{\pm 1\}$. Define $G(x) := \sum_{S} \eps_S \chi_S(x)$
where the sum is over $S \subseteq [n]$ of size $\ell$.
By Fact~\ref{fact:bern}, for each $x \in \{\pm 1\}^n$ we may bound $\Pr\Big[|G(x)| \ge 2 \sqrt{n \cdot {n \choose \ell}}\Big] < 2^{-n}$.
By the union bound, 
there is a choice for $\eps_S$ so that the map
%there is some function $G$ with $\|G\|_\infty \le 2 \sqrt{n \cdot {n \choose \ell}}$. %and $\hat{G}(S) \in \{\pm 1\}$ for $|S| = \ell$. Thus
%Normalizing it to
$$f := \frac{G}{2\sqrt{n \cdot {n \choose \ell}}}$$ 
satisfies $\|f\|_\infty \leq 1$ and
\[
\|\hat{f}_\ell \|_1 = \frac{{n \choose \ell}}{2 \cdot \sqrt{n \cdot {n \choose \ell}}} = \frac{1}{2} \cdot \sqrt{\frac{1}{n} \cdot {n \choose \ell}}.\qedhere
\]

\subsection{Lower bound for $\|f_\ell \|_\infty$}

Here we prove Proposition \ref{prop:const1}.
%\begin{theorem}
%Given positive integers $\ell \leq d$ with $d=\ell \mod 2$, let $f : \{\pm 1\}^n \to [-1,1]$ be defined as $f(x) := T_d((x_1 + \dots + x_n)/n)$. Then $|f_\ell(1^n)| \geq |C(d,\ell)| -  2 (d+1)! \cdot e^{d} /n$.
%\end{theorem}
Every coefficient of the Chebyshev polynomial $T_d$ is bounded by 
$\tfrac{d^d}{d!} \leq e^d$. The theorem follows from the following more general lemma.

\begin{lemma}\label{lemma:polyproject}
Given positive integers $\ell \leq d$ and a degree $d$ polynomial $T(x) = \sum_{j=0}^d c_j x^j$, define $g(x) := T((x_1 + \dots + x_n)/n)$. Then
$|g_\ell(1^n)| \geq |c_\ell| - \frac{2 (d+1)! \cdot \max_{j \ge \ell} |c_j|}{n}$.
\end{lemma}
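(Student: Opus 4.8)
The plan is to compute $g_\ell(1^n)$ explicitly by expanding $T((x_1+\dots+x_n)/n)$ and extracting its degree-$\ell$ homogeneous part. Write $L(x) = x_1 + \dots + x_n$, so $g(x) = \sum_{j=0}^d c_j n^{-j} L(x)^j$. The key observation is that $L(x)^j = \big(\sum_i x_i\big)^j$ expands into a sum of monomials $\chi_S(x)$ times integer multiplicities, where $S$ ranges over multisets; after reducing $x_i^2 = 1$ on the cube, the degree-$\ell$ part of $L^j$ (for $j \geq \ell$ with $j \equiv \ell \bmod 2$) is some symmetric polynomial $\sum_{|S|=\ell} m_{j,\ell}\, \chi_S$ with known combinatorial coefficients $m_{j,\ell} \geq 0$. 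Evaluating at $1^n$ gives $(L^j)_\ell(1^n) = \binom{n}{\ell} m_{j,\ell}$. Thus $g_\ell(1^n) = \binom{n}{\ell}\sum_{j \geq \ell} c_j n^{-j} m_{j,\ell}$, and the $j=\ell$ term contributes exactly $c_\ell$ because $m_{\ell,\ell} = \ell!$ and $\binom{n}{\ell}\ell!/n^\ell = \prod_{k=0}^{\ell-1}(1 - k/n)$, which is close to $1$; the correction from this factor and from the tail terms $j > \ell$ is what must be bounded by $2(d+1)!\max_{j\geq\ell}|c_j|/n$.

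Concretely, first I would handle the main term: $\big|\,c_\ell \cdot \binom{n}{\ell}\ell!\,n^{-\ell} - c_\ell\,\big| = |c_\ell|\big(1 - \prod_{k=0}^{\ell-1}(1-k/n)\big) \leq |c_\ell| \cdot \binom{\ell}{2}/n$, using $1 - \prod(1-k/n) \leq \sum k/n$. Second, for each tail index $j > \ell$ (with $j \equiv \ell \bmod 2$), I would bound the coefficient $\binom{n}{\ell} m_{j,\ell} n^{-j}$. The quantity $m_{j,\ell}$ counts, roughly, the number of ways to write a length-$j$ sequence over $[n]$ whose "surviving" indices form a fixed $\ell$-set; one has the crude bound $\binom{n}{\ell}m_{j,\ell} \leq n^{j-1}\cdot(\text{something like } j!\,)$ — more precisely, since only $\ell < j$ of the $j$ positions carry "new" indices, there is at least one repeated index, so the number of distinct indices appearing is at most $\ell + \lfloor (j-\ell)/2\rfloor \leq j-1$, giving $\binom{n}{\ell}m_{j,\ell} \leq j!\, n^{j-1}$. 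Hence each tail term is at most $j!/n \leq d!/n$ in absolute value when multiplied by $n^{-j}$, and summing over the at most $d+1$ values of $j$ yields a total tail contribution bounded by $(d+1)!\max_{j\geq\ell}|c_j|/n$. Combining with the main-term error (also absorbed, since $\binom{\ell}{2}\leq d! $ crudely) gives the claimed $2(d+1)!\max_{j\geq\ell}|c_j|/n$.

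The main obstacle I anticipate is getting the combinatorial bound on $m_{j,\ell}$ clean and correct — i.e., carefully arguing that in the multinomial expansion of $(x_1+\dots+x_n)^j$, after the reduction $x_i^2=1$, any term contributing to the degree-$\ell$ part involves at most $j-1$ distinct variables, so that its total count across all $\ell$-subsets is $O(n^{j-1})$ with a factor at most $j!$. Everything else is bookkeeping: pulling out the $j=\ell$ term exactly, bounding $1 - \prod_{k<\ell}(1-k/n)$, and summing a short geometric-ish tail in powers of $1/n$. I would present the argument by first establishing the identity $g_\ell(1^n) = \sum_{j\geq \ell}c_j\, a_{j,\ell}$ with $a_{j,\ell} := \binom{n}{\ell}m_{j,\ell}n^{-j}$, then proving $a_{\ell,\ell} = \prod_{k=0}^{\ell-1}(1-k/n)$ and $0 \leq a_{j,\ell} \leq j!/n$ for $j > \ell$, and finally assembling the estimate via the triangle inequality.
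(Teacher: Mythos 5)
Your proposal follows essentially the same route as the paper's: expand $g$ in powers of $L(x)=x_1+\cdots+x_n$, observe that the level-$\ell$ Fourier coefficient of $L^j$ is a nonnegative constant $m_{j,\ell}$ (this is exactly $\hat h_j(S)$ in the paper's Claim~\ref{claim:hbound}), extract the $j=\ell$ term exactly via $\binom{n}{\ell}\ell!/n^{\ell}=\prod_{k=0}^{\ell-1}(1-k/n)\geq 1-\binom{\ell}{2}/n$, and bound each surviving tail term by roughly $d!/n$ before summing over at most $d+1$ indices. The one place you deviate is the tail estimate, and it is also the one soft spot. The paper proves $\hat h_j(S)\leq j!\,n^{(j-\ell)/2}$ by a counting argument combined with Khintchine's inequality (Fact~\ref{fact:Khintchine}) to control $\E[(X_1+\cdots+X_n)^{j-\ell}]$; together with $\binom{n}{\ell}\leq n^{\ell}/\ell!$ and the parity observation $j\geq \ell+2$, this yields your claimed $a_{j,\ell}\leq j!/n$. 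Your purely combinatorial justification, as stated, does not deliver the factor $j!$: knowing only that a contributing sequence in $[n]^j$ uses at most $(j+\ell)/2\leq j-1$ distinct indices bounds the count by about $(j-1)^j n^{j-1}$, which exceeds $j!\,n^{j-1}$. The bound you want is nonetheless true and easily repaired, e.g.\ by first choosing the set partition of the $j$ positions into equal-symbol classes (at most $B_j\leq j!$ partitions, with at most $(j+\ell)/2\leq j-1$ blocks) and then labeling blocks injectively by symbols (at most $n^{j-1}$ ways), or simply by invoking the Khintchine-based estimate as the paper does. With that step tightened, your argument is complete and matches the paper's proof in structure and constants.
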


To prove the lemma, we first show:
\begin{claim} \label{claim:hbound}
	Let $S \subseteq [n]$ be of size $\ell$, and $h_j(x) = (x_1+\dotsb+x_n)^j$. Then:
	\begin{align*}
	\hat{h}_j(S) \begin{cases}
	=0 & \text{if $j<\ell$ or $j \neq \ell \mod 2$,}\\
	= \ell! & \text{if $j = \ell$,}\\
	\leq   j! \cdot n^{\frac{j-\ell}{2}} & \text{if $j > \ell$.}
	\end{cases}
	\end{align*}
\end{claim}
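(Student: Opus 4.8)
The plan is to compute the Fourier coefficient $\hat h_j(S)$ directly by expanding the power. Write
$$h_j(x) = (x_1 + \dots + x_n)^j = \sum_{(i_1,\dots,i_j) \in [n]^j} x_{i_1} x_{i_2} \cdots x_{i_j}.$$
For a fixed set $S$ of size $\ell$, the coefficient $\hat h_j(S)$ counts (with sign, but since $x_i^2 = 1$ on the cube there are no signs) the number of tuples $(i_1,\dots,i_j)$ whose multiset of indices, after cancelling pairs, equals exactly $S$. Equivalently: each index appears an odd number of times if it lies in $S$ and an even number of times otherwise, and the total is $j$. This immediately forces $j \equiv \ell \bmod 2$ (parity of the total count) and $j \geq \ell$ (we need at least one copy of each of the $\ell$ elements of $S$), giving the first case. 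When $j = \ell$ the only contributing tuples are the $\ell!$ orderings of the elements of $S$, giving $\hat h_\ell(S) = \ell!$, the second case.

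For the third case $j > \ell$, I would bound the number of contributing tuples from above. First choose an unordered assignment of multiplicities: we need multiplicities $m_1,\dots,m_\ell \geq 1$ odd on the elements of $S$, and some even multiplicities on a set of ``extra'' indices outside $S$, summing to $j$. Since $j - \ell$ of the slots beyond the mandatory $\ell$ are distributed, and each extra index used carries multiplicity at least $2$, at most $(j-\ell)/2$ extra indices are used; choosing which ones from the $\leq n$ available costs at most a factor $n^{(j-\ell)/2}$. The number of ways to choose the multiplicity pattern and then order the $j$ positions is crudely at most $j!$ (there are at most $j!$ tuples of length $j$ over any fixed support of size $\leq j$, and summing over the $\leq$ (number of support patterns) is absorbed — more carefully, the number of surjection-type arrangements onto a support of size $\ell + t$ with $t \le (j-\ell)/2$ is at most $j!$). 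Combining, $\hat h_j(S) \leq j! \cdot n^{(j-\ell)/2}$, which is the claimed bound.

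The main obstacle is getting the counting bound in the third case clean enough without overcomplicating it: one must be careful that "at most $n^{(j-\ell)/2}$ choices of extra indices" times "at most $j!$ orderings" genuinely dominates the true count, since the extra indices could number anywhere from $1$ to $\lfloor (j-\ell)/2 \rfloor$ and summing a geometric-type series in $n$ over those cases should still be controlled by the top term $n^{(j-\ell)/2}$ (here I would note $n \geq 1$ and that the sum $\sum_{t=0}^{\lfloor(j-\ell)/2\rfloor} \binom{n}{t} \cdot (\text{orderings})$ is at most $j!\, n^{(j-\ell)/2}$ because the number of orderings for each fixed support of size $\ell + t$ is at most $j!$ and $\binom{n}{t} \le n^t \le n^{(j-\ell)/2}$; a slightly more generous constant could be folded in if needed, but the statement as written with $j!$ should survive since $\sum_t \binom{n}{t}(\text{length-}j\text{ words on that support}) \le j! \sum_t \binom n t \le j! \cdot n^{(j-\ell)/2}$ fails the clean inequality only by the number of terms, so I would instead bound the total number of contributing tuples in one shot: choose an ordered tuple by first selecting, for the $\le (j+\ell)/2$ "new occurrences" slots, which index of $[n]$ — at most $n^{(j-\ell)/2}$ genuinely new ones beyond $S$ — and the rest is a multiplicity/ordering bookkeeping bounded by $j!$). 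I would then phrase this last step as a single combinatorial injection to avoid the summation subtlety entirely.

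Once Claim~\ref{claim:hbound} is in hand, Lemma~\ref{lemma:polyproject} follows quickly: since $g = T((x_1+\dots+x_n)/n) = \sum_j c_j n^{-j} h_j$, linearity of the Fourier transform gives, for $S$ of size $\ell$, $\hat g(S) = \sum_{j \ge \ell,\ j \equiv \ell} c_j n^{-j} \hat h_j(S)$. The $j = \ell$ term contributes $c_\ell n^{-\ell} \cdot \ell!$, and evaluated at $1^n$ over all $\binom{n}{\ell}$ sets $S$ this sums to $c_\ell n^{-\ell} \ell! \binom{n}{\ell} = c_\ell \cdot \frac{\ell!}{n^\ell}\binom n \ell$, which is $c_\ell(1 - O(1/n))$ by expanding the falling factorial; the tail terms $j > \ell$ contribute at most $\binom n \ell \sum_{j>\ell} |c_j| n^{-j} j! n^{(j-\ell)/2} \le \max_{j\ge\ell}|c_j| \cdot \frac{1}{n}\cdot(\text{a convergent-type sum bounded by } 2(d+1)!)$ after using $\binom n \ell \le n^\ell/\ell!$ and $n^{-j} n^{(j-\ell)/2} = n^{-\ell}n^{-(j-\ell)/2} \le n^{-\ell}\cdot n^{-1}$ for $j \ge \ell+2$. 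Collecting, $|g_\ell(1^n)| \ge |c_\ell| - \frac{2(d+1)!\max_{j\ge\ell}|c_j|}{n}$, and Proposition~\ref{prop:const1} is the special case $T = T_d$ with $\max_j |c_j| \le e^d$ (and $d \equiv \ell \bmod 2$).
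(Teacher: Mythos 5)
Your treatment of the first two cases matches the paper's argument: expand $h_j$ into length-$j$ words over $[n]$, note that a word contributes $1$ to $\hat h_j(S)=\mathbb{E}[\chi_S(X)\,h_j(X)]$ exactly when every letter of $S$ occurs an odd number of times and every other letter an even number of times, and read off the vanishing cases (parity and $j\ge\ell$) and the value $\ell!$ at $j=\ell$. The gap is in the case $j>\ell$: the counting bound is asserted but never actually established. Your first accounting, ``(choices of extra indices) $\le n^{(j-\ell)/2}$ times (orderings per fixed support) $\le j!$,'' uses an intermediate claim that is false as stated: already for $j=\ell+2$ and support exactly $S$ (no extra index at all), the contributing words are those in which one element of $S$ occurs three times and the rest once, and there are $\ell\cdot j!/3!$ of them, which exceeds $j!$ as soon as $\ell>6$. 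What is true is that the count for a fixed support \emph{and a fixed multiplicity vector} is the multinomial $j!/\prod_i m_i!\le j!$, but then the number of admissible multiplicity vectors (and the sum over the number $t$ of extra indices) must be absorbed, and you acknowledge this difficulty only to defer it to an unspecified ``single combinatorial injection'' that is never exhibited. So the third case, which is the only nontrivial one, remains unproved in your write-up, even though the final inequality $\hat h_j(S)\le j!\,n^{(j-\ell)/2}$ is correct.

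The paper closes exactly this step with a cleaner factorization that you could adopt: every contributing word of length $j$ admits a set $W\subseteq[j]$ of $\ell$ positions whose letters form a bijection onto $S$ (pick one occurrence of each element of $S$); after deleting these positions, the remaining word of length $j-\ell$ has all letter-multiplicities even. Hence $\hat h_j(S)\le\binom{j}{\ell}\cdot\ell!\cdot N$, where $N$ is the number of length-$(j-\ell)$ words over $[n]$ with all multiplicities even. But $N=\mathbb{E}\big[(X_1+\dotsb+X_n)^{j-\ell}\big]$, and this even moment is at most $(j-\ell)!\,n^{(j-\ell)/2}$ (the paper cites Khintchine's inequality, Fact~\ref{fact:Khintchine}; alternatively, pair up the positions carrying equal letters to get the matching bound $(j-\ell-1)!!\,n^{(j-\ell)/2}$, which is even stronger). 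This yields $\hat h_j(S)\le\binom{j}{\ell}\,\ell!\,(j-\ell)!\,n^{(j-\ell)/2}=j!\,n^{(j-\ell)/2}$ in one stroke, avoiding any summation over supports or multiplicity patterns; with this step supplied, your argument coincides with the paper's.
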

\begin{proof}
	Let $X \sim \{\pm 1\}^n$ be uniformly distributed. We have
	$$ \hat{h}_j(S) = \Expect{ \chi_S(X) \cdot h_j(X)}.$$ Each $y \in [n]^j$ corresponds to the term $\prod_{i=1}^j x_{y_i}$ in the expansion of $h_j(x)$. This term contributes either $1$ or $0$ to the expectation, and it contributes $1$ exactly when every variable of $S$ has odd degree, and all other variables have even degree.  Thus, we must have $\hat{h}_j(S) = 0$ when $j< |S|$ or $j \neq |S| \mod 2$, since no term can contribute $1$ in those cases. Moreover, when $j = |S|$, we see that there are exactly $\ell!$ terms that can contribute $1$.
	
	When $j>\ell$, observe that if $\prod_{i=1}^j x_{y_i}$ contributes $1$, there must be a set $W \subseteq [j]$ of size $\ell$, such that $\prod_{i \in W} x_{y_i} = \prod_{i \in S} x_i$, and every variable of $\prod_{i \notin W} x_{y_i}$ has even degree. The number of choices for $W$ is $\binom{j}{\ell}$, and the number of ways in which $\prod_{i \in W} x_{y_i} = \prod_{i \in S} x_i$ can hold is $\ell!$. For a fixed value of $W$, the number of ways in which $\prod_{i \notin W} x_{y_i}$ can have even degrees is $\Expect{(X_1 + \dotsb + X_n)^{j-\ell}}$. Putting these observations together:
	$$ \hat{h}_j(S) \leq \binom{j}{\ell} \cdot \ell! \cdot \Expect{(X_1 + \dotsb + X_n)^{j-\ell}} \leq \binom{j}{\ell} \cdot\ell! \cdot  (j-\ell)! \cdot n^{\frac{j-\ell}{2}} = j! \cdot  n^{\frac{j-\ell}{2}};$$
	the second inequality follows from Khintchine's inequality (Fact ~\ref{fact:Khintchine}).
\end{proof}

Now we can use the claim to prove the lemma:
\begin{proof}[Proof of Lemma \ref{lemma:polyproject}]
The lemma trivially holds when $n < 4$,
so we assume $n \geq 4$. Let $h_j$ be as in Claim \ref{claim:hbound}. Note that $\hat{h}_{j}(S)\geq 0$. We can bound
\begin{align*}
g_\ell (1^n) = \| \hat{g}_\ell \|_1 &  = \sum_{S \subseteq [n], |S| = \ell} \sum_{j = 0}^{n} \hat{h}_j(S)/n^j\\
&  \geq c_\ell \binom{n}{\ell} \cdot \frac{\ell!}{n^{\ell} } -   {n \choose \ell} \cdot \sum_{j = \ell+1}^{n}   |c_j| \cdot  j! \cdot n^{-(j+\ell)/2}.
\end{align*}
To bound the first term, observe that 
\begin{align*}
c_\ell \binom{n}{\ell} \cdot \frac{\ell!}{n^{\ell} } &= c_\ell \Big(1-\frac{1}{n}\Big) \cdots \Big(1- \frac{\ell-1}{n}\Big)
\\
&\geq c_\ell \Big (1- \frac{1}{n} \cdot \sum_{j=1}^{\ell-1} j \Big)
& \text{$(*)$}\\
& = c_\ell \Big(1 - \frac{1}{n} \binom{\ell}{2} \Big);
\end{align*}
the inequality $(*)$ follows by induction from
$(1-\alpha)(1-\beta) > 1-\alpha-\beta$
for $\alpha,\beta>0$. To bound the contribution of the second term, observe:
\begin{align*}
 {n \choose \ell} \cdot \sum_{j = \ell+1}^{n}   |c_j| \cdot  j! \cdot n^{-(j+\ell)/2} &= \frac{\binom{n}{\ell}}{n^\ell} \cdot \sum_{j = \ell+2}^d  j! \cdot n^{-(j-\ell)/2} \cdot  |c_j|  \\&
\leq  \sum_{k=0}^{d-\ell-2}  j!  \cdot  \frac{1}{n} \cdot n^{-k/2}  \cdot  |c_j|  \\
&\leq \frac{1}{n} \cdot (d+1) ! \cdot \max_{k > \ell} |c_k|.
\end{align*}

Finally, since $\binom{\ell}{2} \leq (d+1)!$, we get 
$g_\ell(1^n) \geq c_\ell - \frac{2 (d+1)! \max_{k \geq \ell} |c_k|}{n}$.
\end{proof}

\section*{Acknowledgements}
We thank Mrigank Arora, Emanuel Milman, Avishay Tal, Salil Vadhan, and Kewen Wu for useful comments.

%\bibliographystyle{abbrv}
%\bibliography{pisier}
\end{document}